\documentclass{lmcs}

\usepackage[noamsdef,chgbar,linnum,nofnttls,noenmtls]{fmocdmac}

\usepackage{graphicx} 
\usepackage{amsmath}
\usepackage{xspace}
\usepackage{amssymb}
\usepackage{mathrsfs} 
\usepackage{tasks}
\usepackage{tikz}
\usetikzlibrary{automata,positioning,fit,shapes.geometric}
\usepackage{cancel}
\usepackage{comment}

\usepackage{silence}
\let\oldtextsc\textsc
\renewcommand{\textsc}[1]{\textnormal{\oldtextsc{#1}}}

\usetikzlibrary{automata, positioning, arrows.meta}

\newtheorem{remark}{Remark}
\newtheorem{example}[defi]{Example}
\newtheorem{claim}{Claim}
\begin{document}
\nolinenumbers

\title{Automaton-based characterisations of first order logic over infinite trees}
\titlecomment{* This paper is an extended and revised version of \cite{benerecetti2025automaton}}

\author[M.~Benerecetti]{Massimo Benerecetti\rsuper{a}}
\address{\lsuper{a}Università degli Studi di Napoli ``Federico II", Italy}
\email{massimo.benerecetti@unina.it, fabio.mogavero@unina.it}
\author[D.~Della Monica]{Dario Della Monica\rsuper{b}}
\address{\lsuper{b}Università degli Studi di Udine, Italy}
\email{dario.dellamonica@uniud.it, angelo.matteo@uniud.it, gabriele.puppis@uniud.it}
\author[A.~Matteo]{Angelo Matteo\rsuper{b}}
\author[F.~Mogavero]{Fabio Mogavero\rsuper{a}}
\author[G.~Puppis]{Gabriele Puppis\rsuper{b}}


\NewDocumentCommand\Dom{r()}{\mathop{Dom}(#1)}
\NewDocumentCommand\subtree{e{_} m}{#2_{#1}}  

\NewDocumentCommand\AP{}{\mathit{AP}}  
\NewDocumentCommand\Var{}{\mathit{Var}}  

\RenewDocumentCommand\Lang{r()}{\mathscr{L}(#1)}  


\renewcommand{\MSO}{\textsc{MSO}\xspace}
\renewcommand{\MCL}{\textsc{MCL}\xspace}
\renewcommand{\MPL}{\textsc{MPL}\xspace}
\newcommand{\FO}{\textsc{FO}\xspace}
\renewcommand{\WMSO}{\textsc{WMSO}\xspace}
\renewcommand{\WCL}{\textsc{WCL}\xspace}

\newcommand{\ELTL}{\textsc{ELTL}\xspace}
\newcommand{\ECTL}{\textsc{ECTL}\xspace}


\newcommand{\PCTLs}{\textsc{PastCTL$^*$}\xspace}
\newcommand{\CTLs}{\textsc{CTL$^*$}\xspace}
\newcommand{\CTLsf}{\textsc{CTL$^*_f$}\xspace}
\newcommand{\ECTLs}{\textsc{ECTL$^*$}\xspace}
\newcommand{\ECTLsf}{\tzextsc{ECTL$^*_f$}\xspace}
\newcommand{\PCTL}{\textsc{PastCTL}\xspace}
\newcommand{\CTL}{\textsc{CTL}\xspace}
\newcommand{\LTL}{\textsc{LTL}\xspace}
\newcommand{\PLTL}{\textsc{PastLTL}\xspace}
\newcommand{\PolPCTL}{\textsc{PastCTL$_\pm$}\xspace}
\newcommand{\PolCTL}{\textsc{CTL$_\pm$}\xspace}
\newcommand{\PolCTLs}{\textsc{CTL$^*_\pm$}\xspace}
\newcommand{\safeLTL}{\textsc{SafeLTL}\xspace}
\newcommand{\cosafeLTL}{\textsc{coSafeLTL}\xspace}
\newcommand{\safePLTL}{\textsc{SafePastLTL}\xspace}
\newcommand{\cosafePLTL}{\textsc{coSafePastLTL}\xspace}
\newcommand{\safeOLTL}{\textsc{Safe}[\textsc{Past}]\textsc{LTL}\xspace}
\newcommand{\cosafeOLTL}{\textsc{coSafe}[\textsc{Past}]\textsc{LTL}\xspace}
\newcommand{\PDL}{\textsc{PDL}\xspace}

\newcommand{\NBA}{\textsc{NBA}\xspace}
\newcommand{\NBAcf}{\textsc{NBA$_{cf}$}\xspace}
\newcommand{\UBA}{\textsc{UBA}\xspace}
\newcommand{\NCA}{\textsc{NCA}\xspace}
\newcommand{\UCA}{\textsc{UCA}\xspace}
\newcommand{\GTA}{\textsc{GTA}\xspace}
\newcommand{\HTA}{\textsc{HTA}\xspace}
\newcommand{\PHTA}{\textsc{HTA$_\pm$}\xspace}
\newcommand{\WHTA}{\textsc{WHTA}\xspace}

\let\olddiamond\diamond
\let\oldsquare\square
\let\oldbigcirc\bigcirc
\let\diamond\Diamond
\let\square\Box

\newcommand{\uptran}{\ensuremath{\Uparrow}\xspace}
\newcommand{\isroot}{\ensuremath{\mathit{root}}\xspace}
\newcommand{\isnotroot}{\ensuremath{\mathit{non\text{-}root}}\xspace}
\newcommand{\rootvar}{\ensuremath{\rho}\xspace}

\newcommand\sink{{\mathrm{sink}}}
\newcommand\exit{{\mathrm{exit}}}


\renewcommand{\X}{\mathsf{X}\xspace}
\newcommand{\tX}{\Tilde{\mathsf{X}}\xspace}
\renewcommand{\F}{\mathsf{F}\xspace}
\renewcommand{\G}{\mathsf{G}\xspace}
\renewcommand{\U}{\mathsf{U}\xspace}
\newcommand{\W}{\mathsf{W}\xspace}
\renewcommand{\R}{\mathsf{R}\xspace}

\renewcommand{\Y}{\mathsf{Y}\xspace}
\newcommand{\OO}{\mathsf{O}\xspace}
\newcommand{\s}{\mathsf{S}\xspace}
\renewcommand{\B}{\mathsf{B}\xspace}

\renewcommand{\E}{\mathsf{E}\xspace}
\renewcommand{\A}{\mathsf{A}\xspace}
\newcommand{\D}{\mathsf{D}\xspace}
\newcommand{\C}{\mathsf{C}\xspace}

\newcommand{\massimoNote}[1]{{\color{red}MB: #1}}
\newcommand{\darioNote}[1]{{\color{blue}DDM: #1}}
\newcommand{\angeloNote}[1]{{\color{red}AM: #1}}
\newcommand{\fabioNote}[1]{{\color{red}FM: #1}}
\newcommand{\gabrieleNote}[1]{{\color{green!50!black}GP: #1}}

\newcommand{\ranksymbol}{\ensuremath{\mathit{rank}}\xspace}
\newcommand{\rank}[1]{\ensuremath{\ranksymbol(#1)}\xspace}
\newcommand{\rankq}{\rank{q}}
\newcommand{\ranktheta}{\rank{\theta}}
\newcommand{\rankqprime}{\rank{q'}}
\newcommand{\rankthetaprime}{\rank{\theta'}}

\newcommand{\subformula}[1]{\ensuremath{\mathit{sub}(#1)}\xspace}
\newcommand{\subtheta}{\subformula{\theta}}
\newcommand{\subthetaprime}{\subformula{\theta'}}

\keywords{Tree languages, first order logic, tree automata}


\begin{abstract}
We study the expressive power of First-Order Logic (\FO) over (unordered)
infinite trees, with the aim of identifying robust characterisations in terms of
branching-time specification formalisms.
While such correspondences are well understood in the linear-time setting, the
branching-time case presents well-known structural challenges.
To this end, we introduce two classes of hesitant tree automata and show that
they capture precisely the expressive power of two branching-time temporal
logics, namely \PolPCTL and \CTLsf, both of which have been previously shown to
be equivalent to \FO over infinite trees.
These results provide uniform automata-theoretic characterisations and yield a
natural normal form for the latter in terms of a new fragment of \CTLs called
\PolCTLs.
As a consequence, we identify a fundamental limitation of \FO in this setting:
along each branch, it can express only properties that are either safety or
co-safety, thereby revealing a sharp expressive boundary for first-order
definability over infinite trees.
\end{abstract}

\maketitle


\section{Introduction}

\emph{Characterisation theorems}~\cite{Flu85} are powerful model-theoretic tools
that offer a principled approach to understanding the intrinsic features of
formal systems.
They allow us to mark the \emph{expressive boundaries} of specification
languages, compare these formalisms \wrt their \emph{descriptive power} on
specific classes of models, and design new languages starting from a given set
of requirements, in the spirit of \emph{Lindstr\"om-style theorems}~\cite{Lin69}
(\eg, based on maximality principles).
They also play a central role in \emph{definability theory}, guiding the
identification of expressive fragments and meaningful extensions of known
logics, thus supporting the selection of suitable languages for the
specification of the correct behaviour of systems in verification and synthesis
tasks.

A foundational distinction exists between \emph{linear-time} and
\emph{branching-time} languages~\cite{MP92,MP95}.
The former capture properties of computations viewed as totally-ordered sets of
events, while the latter account for the branching structure inherent in
concurrent and nondeterministic system behaviours.

The linear-time case, where models are isomorphic to (finite or infinite)
\emph{words}, is by now well understood.
A rich and intertwined network of equivalences connects \emph{predicate logics}
over linear orders with \emph{temporal logics}, such as \LTL~\cite{Pnu77,Pnu81}
and \ELTL~\cite{Wol83}, with \emph{star-free}~\cite{Sch65,PP86} and
\emph{$\omega$-regular}~\cite{Buc62} languages, and with automata-theoretic
models, including \emph{finite}~\cite{Ner58,RS59} and
\emph{B\"uchi}~\cite{Elg61,Buc62,Buc66} automata.
Within this landscape, the \emph{safety} and \emph{co-safety} fragments of
\LTL~\cite{CMP92} occupy a central position: they capture the two most
elementary types of reactive specification, respectively enforcing that
something bad never happens and that something good eventually happens.
All these connections provide deep insights into the structure of definable
properties and lead to optimal decision procedures across different
representations.

By contrast, the branching-time setting remains more fragmented.
Even for \emph{First-Order Logic} (\FO) interpreted over (finite or infinite)
trees many fundamental definability questions remain unsettled.
A longstanding open problem posed by Thomas in the 1980s~\cite{Tho84} asks
whether it is decidable if a given regular tree language is definable in \FO.
This question has been studied under various combinations of tree types
(\emph{ranked/unranked}, \emph{ordered/unordered}) and interpreted vocabularies
(\eg, including only \emph{child}, only \emph{ancestor}, or both relations).
Aside from the positive result for \FO over finite trees with the child
relation~\cite{BS09}, the problem remains open in all other settings.
Efforts to resolve this question have mainly followed \emph{algebraic
approaches}~\cite{Boj21}, inspired by their success in the word case (most
notably the Sch\"utzenberger theorem on star-free languages~\cite{Sch65}).
These approaches rely on the compositionality and structural insights provided
by syntactic algebras.
Despite significant progress, they have provided only partial results, mostly
for classes of finite trees~\cite{Pot95,Boj04,EW10,BSW12} or topologically
simple infinite trees~\cite{BI09, BIS13}.
An alternative and often complementary line of work seeks direct
characterisations of \FO-definable tree languages via automata.
This route, highly successful in the linear-time case, has also led to fruitful
results in the branching-time setting, including a correspondence~\cite{JW96}
between \emph{Monadic Second-Order Logic} (\MSO)~\cite{Rab69}, \emph{Parity Tree
Automata}~\cite{Mos84,EJ91}, and the \emph{Modal \MC}~\cite{Koz83}.
More recently~\cite{BBMP24}, the landscape has expanded to include the
expressive equivalence of \emph{Monadic Chain/Path Logics}
(\MCL/\MPL)~\cite{Tho84,Tho87,HT87}, their temporal \emph{Computation Tree
Logic} counterparts (\ECTLs/\CTLs)~\cite{VW83,EH83,EH85}, and variants of
\emph{Hesitant Tree Automata} (HTA)~\cite{KVW00}.
Within this picture, the boundary between \MSO and \FO has also been probed
through a weaker form of \MCL.
Carreiro and Venema~\cite{CV14} provided a syntactic and automaton-based
characterisation of \emph{Propositional Dynamic Logic} (\PDL)~\cite{FL79}, a
logic that sits strictly below \ECTLs in expressive power, as a fragment of the
modal \MC.
Subsequently, Carreiro~\cite{Car15} showed that \PDL is precisely the
bisimulation-invariant fragment of \emph{Weak Monadic Chain Logic} (\WMCL), a
variant of \MSO that quantifies over finite chains, and introduced a class of
parity automata capturing the expressive power of \WCL over trees.

In this work, we continue along this line of development, by providing the
first complete automaton-based
characterisation, to the best of our knowledge, of first-order logic with the descendant relation over unranked
unordered infinite trees.
Throughout this work, all temporal logics are implicitly considered in their
\emph{counting extensions}~\cite{Fin72}, which allow formulas to express
properties of the number of successors satisfying a given condition.
Our approach builds on previous results for two branching-time temporal logics,
namely a \emph{fragment of Computation Tree Logic with past}, denoted
\emph{\PolPCTL}, and the \emph{Full Computation Tree Logic with finite path
quantification}, denoted \emph{\CTLsf}.
In~\cite{Sch92a,BBMP23} these logics were shown to be expressively equivalent to
\FO when interpreted on unranked unordered infinite trees.
For each of these two logics, we introduce corresponding variants of hesitant
graded automata, called \emph{Two-Way Linear Polarised Hesitant Tree Automata}
and \emph{Counter-free Visible Polarised Hesitant Tree Automata}, and prove
that they capture precisely the expressive power of the considered logics, and
therefore of \FO as well.
This establishes a full mutual equivalence between logics and automata.
These characterisations also uncover a \emph{polarised property} for both
temporal logics, revealing a noteworthy semantic feature of \FO over infinite
trees: formulas that quantify existentially on branches can only express
\emph{co-safety} properties, while those quantifying universally correspond
exclusively to \emph{safety} properties.
This polarisation property is not merely a syntactic artefact: it arises
independently in \PolPCTL, where it can be enforced purely by syntactic means,
and in \CTLsf, where a dedicated lemma
(analogous to a path-quantifier normalisation) is needed to show that this logic is equivalent to a syntactic fragment of \CTLs, denoted \PolCTLs.
This observation aligns with the already discussed results by Carreiro and
Venema.
As a further contribution, we provide a direct automaton-based characterisation
of the safety and co-safety fragments of \LTL themselves.
Specifically, we show that counter-free looping \emph{universal} B\"uchi automata are
expressively equivalent to \safeLTL, and that counter-free looping \emph{nondeterministic}
co-B\"uchi automata are expressively equivalent to \cosafeLTL.
While the connection between counter-freeness and aperiodicity, and the
general role of B\"uchi and co-B\"uchi conditions in the word setting, are
well known, this precise automaton-based characterisation of the safety and
co-safety fragments does not appear to have been stated explicitly in the
literature, and serves as a self-contained result of independent interest.
It also provides a clean linear-time analogue to the polarised hesitant
automata classes we introduce for trees.

\paragraph{Related work.}

In earlier work, Boja{\'n}czyk~\cite{Boj04} showed that, over finite binary
trees, \FO with child and ancestor relations is equivalent to a \emph{cascade
product} of so-called \emph{aperiodic wordsum automata}.
While related in spirit, this result targets a different logic and a different
class of structures.
More recently, Ford~\cite{For19} focused on the same tree structures that are
considered here, and introduced the class of \emph{antisymmetric path parity
automata}, which are shown to be no more expressive than \FO.
However, that work does not provide a translation from \FO to automata, leaving
the equivalence question open.
On the side of temporal logics, the syntax of \PolPCTL is related to, but
distinct from, Schlingloff's original presentation of \CTL with
past~\cite{Sch92a}; for this reason, we provide a careful comparison of these
two.

\paragraph{Organisation.}

This paper is a considerably revised and extended version of~\cite{benerecetti2025automaton}.
In particular, missing proofs are added, Section \ref{word automata char} is completely new,
Section \ref{second automaton} has a new example to clarify
the intuition behind the \emph{visibility} condition, and Section \ref{discussion} has been expanded with a long discussion on open problems. We detail the organisation section by section.

In Section~2 we give the necessary preliminaries on words and trees, first-order
logic, tree temporal logics, linear temporal logics including \LTL and \PLTL and
their safety and co-safety fragments, and classical word automata.
In Section~3 we introduce the branching-time temporal logics equivalent to
\FO: we define \PolPCTL and establish its expressiveness results, introduce
\CTLsf, and derive its normal form \PolCTLs.
Section~4 is devoted to the definition of graded tree automata and their
hesitant and polarised restrictions.
In Section~5 we prove the automaton-based characterisation of \PolPCTL via
two-way linear polarised hesitant tree automata, and establish their equivalence
with \FO.
Section~6 provides an automaton-based characterisation of the safety and
co-safety fragments of \LTL via counter-free looping universal B\"uchi automata
and nondeterministic co-B\"uchi automata.
In Section~7 we prove the characterisation of \CTLsf via counter-free visible
polarised hesitant tree automata, completing the second correspondence with \FO.
Finally, Section~8 discusses the results, outlines open problems, and identifies
directions for future work.



\section{preliminaries}\label{preliminaries}

\paragraph{Words and trees.}
A \emph{finite word} over an alphabet $\Sigma$ is a finite sequence $w=a_1\dots a_n$ of letters from $\Sigma$. We write $w(i)$ to denote the letter from $\Sigma$ in the $i$-th position of $w$.
We denote by $\Sigma^*$ the set of all finite words over $\Sigma$, which includes the empty word $\varepsilon$.
A \emph{word language} is any subset of $\Sigma^*$.
Similarly, we define an \emph{$\omega$-word} over $\Sigma$ as an infinite sequence $w=a_1 a_2\dots$ of letters from $\Sigma$;
we denote by $\Sigma^\omega$ the set of all $\omega$-words over $\Sigma$, and call \emph{$\omega$-language}
any subset of $\Sigma^\omega$.

We shall also work with trees, and precisely with unranked, unordered, leafless trees.
By \emph{unranked} we mean that every node can have any finite number of immediate successors;
by \emph{unordered} we mean that trees cannot be distinguished by just reordering the successors of a node;
by \emph{leafless} we mean that every node has at least one successor.
We adopt the usual terminology for trees: ancestor, descendant, parent, sibling, root, etc.
We denote by $\Dom(t)$ the set of nodes of a tree $t$.
Given an alphabet $\Sigma$, a \emph{$\Sigma$-labelled tree} ($\Sigma$-tree for
short) is a tree $t$ equipped with a labelling $\lambda_t: \Dom(t) \to \Sigma$,
which associates a letter from $\Sigma$ with each node of $t$.
By a slight abuse of notation, we also denote the resulting $\Sigma$-tree by $t$,
and given $v\in\Dom(t)$, we write $t(v)$ for the label assigned to $v$ by $\lambda_t$.
A \emph{tree language} over $\Sigma$ is a set of $\Sigma$-trees.
We denote by $T_\Sigma$ the language of all $\Sigma$-trees.

A \emph{path} of a tree (or a $\Sigma$-tree) $t$ is a finite or infinite sequence $\pi = v_0 v_1 \dots$
of nodes totally ordered by the successor relation (i.e., each $v_i$ is a successor of $v_{i-1}$ for all $i>0$),
but not necessarily starting from the root. If the path starts from the root it is called \emph{initial}.
We denote by $\pi(i)$ the $i$-th node $v_i$ along the path $\pi = v_0 v_1 \dots$.
A \emph{maximal path} is a path that cannot be prolonged to the right;
in particular, because we restrict ourselves to leafless trees, a maximal path is necessarily infinite.
%
%
We denote by $\subtree_u{t}$ the \emph{subtree} of $t$ rooted at node $u$,
namely, the tree $t$ restricted to the set of all descendants of $u$, including $u$ itself.
Of course, if $t$ is a $\Sigma$-tree, then the labelling of the nodes of
$\subtree_u{t}$
is inherited from $t$, i.e., $\subtree_u{t}(v) = t(v)$ for all $v\in\Dom(\subtree_u{t})$.

\paragraph{First-order logic. }
We introduce \emph{monadic first-order logic on trees with the ancestor relation} (\FO for short).
Let us fix a finite set $\AP$ of atomic propositions and a countable set $\Var$ of variables.
Formulae of \FO are generated by the following grammar,
where $p$ ranges over $\AP$ and $x, x'$ range over $\Var$:
\[
\varphi ~::=~
  (x = x')
  ~\bigm|~
  (x \le x')
  ~\bigm|~
  p(x)
  ~\bigm|~
  \neg \varphi
  ~\bigm|~
  \varphi \lor \varphi
  ~\bigm|~
  \exists x. ~ \varphi
\]
A variable is \emph{free} in a formula if it does not occur under the scope of any quantifier.
A \emph{sentence} is a formula without free variables.

\FO formulae are interpreted over tree structures of the form
$\mathcal{T} = (t,\zeta)$, where $t$ is a $\Sigma$-tree, with $\Sigma=2^{\AP}$,
and $\zeta: \Var \to \Dom(t)$ associates a node of $t$ with each variable.
This results in a satisfaction relation $\models$ between \FO formulae and structures
that is defined inductively as follows:
\begin{itemize}
\item $\mathcal{T} \models (x=x')$ iff $\zeta$ maps $x$ and $x'$ to the same node,
\item $\mathcal{T} \models (x \le x')$ iff $\zeta(x)$ is an ancestor of $\zeta(x')$,
\item $\mathcal{T} \models p(x)$ iff $p\in t(\zeta(x))$,
\item $\mathcal{T} \models \neg\varphi$ iff $\mathcal{T} \not\models \varphi$,
\item $\mathcal{T} \models \varphi_1 \lor \varphi_2$ iff $\mathcal{T} \models \varphi_1$ or $\mathcal{T} \models \varphi_2$,
\item $\mathcal{T} \models \exists x. ~ \varphi$ iff there is a node $v\in\Dom(t)$ such that
      $\mathcal{T}' \models \varphi$, where $\mathcal{T}'=(t,\zeta')$ and $\zeta'$ is defined
      by $\zeta'(x)=v$ and $\zeta'(x')=\zeta(x')$ for all $x'\in\Var\setminus\{x\}$.
\end{itemize}
Note that sentences can be interpreted directly over trees, as the mapping $\zeta$ for the
variables is immaterial.

\paragraph{Tree Temporal logics}
We will consider several temporal logics that can be interpreted over tree
structures.
To enable correspondences with \FO formulae, \emph{graded} variants of temporal logics
are often used (see, for instance, \cite{moller2003counting}).
These variants extend the standard formalisms with \emph{counting modalities},
which express lower and upper bounds on the number of successors of a node
that satisfy a given property (e.g., ``at least two distinct successors of $x$ satisfy $\varphi$'').
In the literature, graded extensions are often marked explicitly in the names of the temporal logics;
however, to avoid notational clutter, we shall not adopt this convention here, and rather assume
that all temporal logics are graded.

The first and most expressive temporal logic that we consider here is
\emph{Full Computation Tree Logic with Past}, denoted \PCTLs and defined just below.

\begin{defi}[\PCTLs syntax]
\PCTLs formulae are generated according to the following grammar:
\begin{align*}
\varphi ~::= &~
  p
  ~\bigm|~
  \neg \varphi
  ~\bigm|~
  \varphi \lor \varphi
  ~\bigm|~
  {\D^n\varphi}
  ~\bigm|~
  \E \psi
\\
\psi ~::= &~
  \varphi
  ~\bigm|~
  \neg \psi
  ~\bigm|~
  \psi \lor \psi
  ~\bigm|~
  \X \psi
  ~\bigm|~
  \Y \psi
  ~\bigm|~
  \psi \U \psi
  ~\bigm|~
  \psi_1 \R \psi_2
  ~\bigm|~
  \psi \s \psi
\end{align*}
where, as usual, $p$ ranges over a set $\AP$ of atomic propositions.
The formulae generated from $\varphi$ are called \emph{state formulae},
while those generated from $\psi$ are called \emph{path formulae}.
We call $\D^n$ a counting modality, $\E$ a path quantifier, $\X, \U, \R$ are future temporal operators and $\Y, \s$ are past temporal operators.
\end{defi}

\noindent
We adopt the following shorthands:
$\top = p \vee \neg p$ for any $p\in\AP$,
$\F\psi = \top \U \psi$,
$\G\psi = \bot \R \psi$,
$\psi_1 \W \psi_2 = \psi_2 \R (\psi_1 \lor \psi_2)$, $\OO\psi = \top \s \psi$,
$\A\psi = \neg \E\neg\psi$.
Note that the release operator $\R$ is redundant in the above grammar, since it is
definable as the dual of $\U$ ($\psi_1 \R \psi_2 = \neg(\neg\psi_1 \U
\neg\psi_2)$).
We still include it as a primitive operator since it is not definable in some
of the fragments that we are going to consider next, in particular in \PCTL and \CTL.

As usual for temporal logics, state formulae of $\PCTLs$ are interpreted over \emph{pointed trees},
i.e., pairs $(t,u)$ consisting of a $\Sigma$-tree and distinguished node $u$ in it, where $\Sigma=2^{\AP}$.
Similarly, path formulae are interpreted over \emph{path-pointed trees}, i.e., triples $(t,\pi,i)$
consisting of a $\Sigma$-tree, a distinguished 
maximal path $\pi$ in it, and a distinguished position
$i$ on that path (recall that, in this case, $i$ ranges over the natural numbers).

\begin{defi}[\PCTLs semantics]
The satisfaction relations $\models_s$ and $\models_p$, respectively between pointed trees and
state formulae and between path-pointed trees and path formulae, are defined by a mutual induction as follows:
\begin{itemize}
\item $(t,u) \models_s p$ iff $p \in t(u)$,
\item $(t,u) \models_s \neg\varphi$ iff $(t,u) \not\models_s \varphi$,
\item $(t,u) \models_s \varphi_1 \lor \varphi_2$ iff $(t,u) \models_s \varphi_1$ or $(t,u) \models_s \varphi_2$,
\item $(t,u) \models_s \D^n \varphi$ iff $u$ has at least $n$ distinct successors $v_1, \dots, v_n$ such that $(t, v_i) \models_s \varphi$ for all $i=1,\dots,n$,
\item $(t,u) \models_s \E \psi$ iff there is an infinite path $\pi$ starting at the root%
             \footnote{Here we adopt a history-preserving semantics for path quantifiers,
                       which is rather standard in the literature of tree temporal logics with past modalities.
                       On the other hand, for tree temporal logics without past modalities, the literature is less uniform,
                       and many works adopt a forgetful semantics of $\E\psi$, which erases the
                       information about the access path to the current node $u$ and evaluates
                       the subformula $\psi$ over some infinite path starting at $u$.
                       The two semantics are clearly distinguished by formulae with past modalities.
                       Fortunately, this discrepancy is harmless for our purposes:
                       all the results from the literature we rely on that adopt a forgetful semantics
                       concern future-only logics, for which the distinction is immaterial.}
             of $t$ and visiting $u$ at position $i$ such that $(t,\pi,i) \models_p \psi$,
\end{itemize}
\begin{itemize}
\item $(t,\pi,i) \models_p \varphi$ iff $(t,\pi(i)) \models_s \varphi$,
\item $(t,\pi,i) \models_p \neg \psi$ iff $(t,\pi,i) \not\models_p \psi$,
\item $(t,\pi,i) \models_p \psi_1 \lor \psi_2$ iff $(t,\pi,i) \models_p \psi_1$ or $(t,\pi,i) \models_p \psi_2$,
\item $(t,\pi,i) \models_p \X \psi$ iff $(t,\pi,i+1) \models_p \psi$,
\item $(t,\pi,i) \models_p \Y \psi$ iff $i > 0$ and $(t,\pi,i-1) \models_p \psi$,
\item $(t,\pi,i) \models_p \psi_1 \U \psi_2$ iff there is $j \ge i$ such that $(t,\pi,j) \models_p \psi_2$ and for all $k=i,\dots,j-1$, $(t,\pi,k) \models_p \psi_1$,
\item $(t,\pi,i) \models_p \psi_1 \R \psi_2$ iff $(t,\pi,j) \models_p \psi_2$ holds for all $j \geq i$, or there is there is $j \ge i$ such that $(t,\pi,j) \models_p \psi_1 \land \psi_2$ and for all $k=i,\dots,j-1$, $(t,\pi,k) \models_p \psi_2$,
\item $(t,\pi,i) \models_p \psi_1 \s \psi_2$ iff there is $j \le i$ such that $(t,\pi,j) \models_p \psi_2$ and for all $k=j+1,\dots,i$, $(t,\pi,k) \models_p \psi_1$.
\end{itemize}
\end{defi}

\noindent
For simplicity, we shall use $\models$ for either $\models_s$ or $\models_p$, depending
on which type of structure and formula we are considering.
Two state (resp.~path) formulae $\psi_1,\psi_2$ are \emph{full state equivalent} if,
for every pointed tree $(t,u)$, we have $(t,u)\models_s \psi_1$ iff $(t,u) \models_s \psi_2$
(resp.~for every path-pointed tree $(t,\pi,i)$, we have $(t,\pi,i)\models_p \psi_1$ iff $(t,\pi,i) \models_p \psi_2$).
By a slight abuse of notation, we sometimes write $t \models \varphi$ to mean $(t,\varepsilon_t) \models \varphi$
(recall that $\varepsilon_t$ is the root of $t$), and in this case we also say that
$t$ is a \emph{model} of $\varphi$.
We denote by $\Lang(\varphi)$ the set of all models $t$ of $\varphi$, and accordingly
say that two state formulae $\varphi$ and $\varphi'$ are \emph{initially equivalent}
if $\Lang(\varphi) = \Lang(\varphi')$.
Note that in this case we are using a notion of equivalence, called \emph{initial equivalence},
that evaluates state formulae at the root of an arbitrary tree.
In general, notably in the presence of past modal operators,
this equivalence is coarser than full state equivalence.
On the one hand, we will often use full state equivalence for comparing formulae
and for performing syntactic translations from one formalism to another -- this
is because full state equivalence is naturally compatible with substitutions of
subformulae.
On the other hand, when comparing the expressive power of logical formalisms, we
will mostly use initial equivalence between state formulae.
For example, we say that one formalism is \emph{as expressive as} another formalism
if there are effective translations of each formula from each formalism
to an initially equivalent formula of the other formalism.
Similarly, we say that one formalism is \emph{less expressive than} another formalism
if there is an effective translation from each formula of the former to an
initially equivalent formula of the latter, but not the other way around.

%


\medskip
The following are well known fragments of \PCTLs:
\begin{itemize}
    \item \CTLs is the variant of \PCTLs without the past operators $\Y$ and $\s$;
    \item \PCTL is the variant of \PCTLs in which every future temporal operator (i.e., $\X$, $\U$, and $\R$) must be immediately preceded by the path quantifier $\E$;
    \item \CTL is the intersection of the previous two fragments, forbidding
past operators and enforcing a strict alternation between future temporal
operators and path quantifiers.
\end{itemize}

\begin{example}
$\E(\G p)$ is a \CTL formula;
$\E(\F(p\s q))$ is a \PCTL formula, but neither a \CTL nor a \CTLs formula;
$\E(\G\F p)$ is a \CTLs formula, but not a \PCTL formula;
finally, $\A (\F\G(p \s q))$ is a proper \PCTLs formula that does not belong to any of the previous fragments.
\end{example}

Of course, a syntactic fragment does not need to be strictly less expressive than the full logical formalism.
Below, we recall the relative expressiveness of the logics \PCTLs, \CTLs, \PCTL, \CTL.

\begin{prop} \cite[Theorem 3.5]{HT87} \cite[Section 4]{laroussinie1995hierarchy}
\label{PCTLs-expressiveness}
\begin{itemize}
    \item \CTLs is as expressive as \PCTLs;
    \item \PCTL is strictly less expressive than \CTLs;
    \item \CTL is strictly less expressive than \PCTL.
\end{itemize}
\end{prop}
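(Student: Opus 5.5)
The proposition has three claims, and I will treat each one separately.

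\emph{First claim: \CTLs is as expressive as \PCTLs.}
One direction is immediate, since \CTLs is a syntactic fragment of \PCTLs. For the converse I would use the separation theorem for \PLTL over $\omega$-words (Gabbay): every \PLTL formula is equivalent to a Boolean combination of pure-past formulae and pure-future formulae.
\begin{enumerate}
\item Proceed by induction on the nesting of path quantifiers. Treat maximal state subformulae as fresh atoms, so that each $\E\psi$ has $\psi$ a \PLTL formula over these atoms.
\item Because of the history-preserving semantics, $\psi$ is evaluated on an initial path at the position of $u$. Separate $\psi$ into a disjunction of conjunctions $\beta_k\land\gamma_k$, with $\beta_k$ pure past and $\gamma_k$ pure future.
\item Then $\E\psi\equiv\bigvee_k(\beta_k\land\E\gamma_k)$. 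The past part is determined by the unique path from the root to $u$, so it can be pulled out of the quantifier.
\item The past part only matters at the root when we use initial equivalence. A pure-past formula evaluated at the root, where position $0$ has no predecessor, collapses to a Boolean combination of state formulae: $\Y\beta$ becomes $\bot$ and $\beta_1\s\beta_2$ becomes $\beta_2$.
\end{enumerate}
The obstacle is nesting. An inner $\E$ is evaluated at non-root nodes, where the past is genuinely present. To handle this, I would translate bottom-up into full-state-equivalent formulae of \PCTLs in which past operators occur only at the outermost path level, and eliminate them at the root at the end. Equivalently, I would cite the Hafer--Thomas result that both logics coincide with \MPL, using that the \MPL-to-\CTLs direction handles past via separation. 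I expect this bookkeeping to be the main effort, and I would rely on \cite{HT87} for it.

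\emph{Second claim: \PCTL is strictly less expressive than \CTLs.}
For inclusion, I would translate each past operator inside \PCTL into \CTLs. Concretely, I would show that every \PCTL formula is initially equivalent to a \CTLs formula by the same separation argument applied to formulae of the form $\E\X$, $\E\U$ and $\E\R$ that carry past arguments.

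For strictness, I would use $\E\G\F p$ (or $\A\F\G p$). The argument would follow the classical Emerson--Halpern one:
\begin{itemize}
\item build two families of trees $t_n$ and $t'_n$ that agree on all \PCTL formulae of size $\le n$, but only one of which has a path with infinitely many $p$;
\item show the agreement by induction on formulae, proving that past operators add nothing on these specific trees, because the access paths are uniform in both families.
\end{itemize}

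\emph{Third claim: \CTL is strictly less expressive than \PCTL.}
Inclusion is syntactic. For separation I would use the Laroussinie--Schnoebelen witness, which exploits that $\E\F$ combined with past within a single quantifier (for instance $\E\F(q\land \Y\ldots)$, or $\E(p\U(q\land\OO r))$-style patterns) cannot be captured in \CTL. For graded \CTL, I would adapt their bisimulation-style game argument with counting to obtain trees that agree on all \CTL formulae but are distinguished by the \PCTL formula.
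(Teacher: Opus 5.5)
The paper gives no proof of this proposition: it cites \cite{HT87} for the first item and \cite{laroussinie1995hierarchy} for the two separations. Your argument for the first item is sound and is essentially the reasoning behind the Hafer--Thomas citation. You separate each path formula, pull the pure-past part out of $\E$ because the history of $u$ is unique, and let it collapse at the root.

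Two small points on that part. First, the counting modality needs its own case in the bottom-up normalisation. Suppose $\varphi\equiv\bigvee_k(\E\beta_k\wedge\chi_k)$ with $\beta_k$ pure past over future-only state atoms. Unfold each $\beta_k$ one step from a child $v$ to its parent $u$, using that $\Y\beta$ holds at $v$ iff $\beta$ holds at $u$, and that $\beta_1\s\beta_2\equiv\beta_2\vee(\beta_1\wedge\Y(\beta_1\s\beta_2))$. Then case-split on the resulting past formulae at $u$, which puts $\D^n\varphi$ back in normal form. Second, the inclusion half of the second item needs no new separation argument, since \PCTL is a syntactic fragment of \PCTLs and it follows from the first item.

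The genuine gap is in the two separations, where the reasoning you add is either incorrect or only asserted.

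For the third item, the patterns you give as instances do not separate \PCTL from \CTL under initial equivalence. At the root, $\E\F(q\wedge\E\Y r)$ is equivalent to $\E\F(r\wedge\E\X q)$. Likewise, $\E(p\U(q\wedge\E\OO r))$ is equivalent to $\E(p\U(q\wedge r))\vee\E(p\U(p\wedge r\wedge\E\X\E(p\U q)))$. A single backward glance from a future witness can always be turned into a forward check. What \CTL lacks is the ability to constrain a whole path prefix when that constraint involves unboundedly many alternations of pending and discharged obligations. \PCTL can verify such a constraint retrospectively from the endpoint using $\E\Y$, $\E\s$ and $\E\OO$. The witness actually used is $\E((s\vee a\U b)\U r)$; see the proof of Theorem~\ref{polctl expressiveness}, which relies on \cite[Lemmas A.1--A.2]{laroussinie1995hierarchy}. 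You would need to name such a formula and prove two things: that it is not definable in graded \CTL, and that it is definable in \PCTL.

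For the second item, $\E\G\F p$ is a reasonable candidate. However, the step ``past operators add nothing on these trees because the access paths are uniform'' is the whole difficulty, and you only assert it. The classical Emerson--Halpern families are built for a future-only induction. You would have to construct families in which corresponding nodes have indistinguishable histories \emph{and} indistinguishable subtrees up to size $n$, which is a two-way game argument. Counting up to the grades of the formula must also be neutralised, for instance by replicating children.
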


\noindent
Basically, adding past operators does not increase expressiveness for \CTLs, but
it does so for the weaker logic \CTL.

\paragraph{Linear Temporal logics}
By seeing $\omega$-words over $\Sigma$ as degenerate cases of $\Sigma$-trees, where every node has exactly
one successor, one can interpret \FO sentences and $\PCTLs$ formulae directly over $\omega$-words.
However, because in an $\omega$-word there is only one maximal path, the quantifiers $\D^n$ and $\E$ of $\PCTLs$
are of no use, and similarly for the Boolean connectives appearing in the grammar of a state formula.
In particular, one can define a fragment of $\PCTLs$ where the only state formulae allowed are the atomic propositions,
giving rise to a simpler grammar for the so-called \emph{Linear Temporal Logic with Past} (\PLTL for short):
\[
\psi ~::=~
  p
  ~\bigm|~
  \neg \psi
  ~\bigm|~
  \psi \lor \psi
  ~\bigm|~
  \X \psi
  ~\bigm|~
  \Y \psi
  ~\bigm|~
  \psi \U \psi
  ~\bigm|~
  \psi \R \psi
  ~\bigm|~
  \psi \s \psi.
\]
The \emph{pure future} fragment of $\PLTL$, denoted $\LTL$, is obtained by
further disallowing the past operators $\Y$ and $\s$.
Analogously, the \emph{pure past} fragment of $\PLTL$ is obtained by disallowing
the future operators $\X$, $\U$, and $\R$.
The semantics of both $\PLTL$ and $\LTL$ is inherited from $\PCTLs$, and the definitions of
full state equivalence, initial equivalence, and expressiveness are as before,
with the only difference that in this setting the models are restricted to be $\omega$-words.
Under these assumptions and thanks to Proposition \ref{PCTLs-expressiveness}, we easily
derive that $\LTL$ and $\PLTL$ have the same expressive power. In fact, it is also known
that, over models that are $\omega$-words, $\LTL$ formulae are as expressive as \FO sentences:

\begin{prop} \cite{kamp1968tense} \cite[Theorem 2.1]{gabbay1980temporal} \label{PLTL-expressiveness}
Under initial equivalence and when the models are restricted to be $\omega$-words,
the logical formalisms \LTL, \PLTL, and \FO have the same expressive power.
\end{prop}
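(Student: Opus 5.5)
The statement combines Kamp's theorem with the separation (past elimination) theorem, so I would prove it as a cycle of effective translations, all under initial equivalence over $\omega$-words: $\LTL \to \PLTL \to \FO \to \PLTL \to \LTL$.

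\textbf{From \LTL to \PLTL.} This inclusion is trivial, since \LTL is a syntactic fragment of \PLTL with the same semantics.

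\textbf{From \PLTL to \FO.} I would use the standard inductive translation $\mathrm{tr}_x(\psi)$, producing an \FO formula with one free variable $x$ that denotes the current position.
\begin{itemize}
\item Atoms translate to $p(x)$, and Boolean connectives translate homomorphically.
\item For $\X\psi$, take $\exists y.\,(x\le y \land \neg(x=y) \land \neg\exists z.\,(x<z<y) \land \mathrm{tr}_y(\psi))$. The case $\Y\psi$ is symmetric.
\item For $\psi_1\U\psi_2$, take $\exists y.\,(x\le y \land \mathrm{tr}_y(\psi_2) \land \forall z.\,(x\le z<y \to \mathrm{tr}_z(\psi_1)))$. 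The cases $\R$ and $\s$ are analogous.
\end{itemize}
Over an $\omega$-word, $\le$ is the linear order, so correctness follows by a straightforward induction that mirrors the semantic clauses. A formula $\psi$ is then initially equivalent to the sentence $\exists x.\,(\neg\exists y.\,(y\le x\land\neg(y=x)) \land \mathrm{tr}_x(\psi))$, which evaluates $\mathrm{tr}_x(\psi)$ at the minimum position.

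\textbf{From \FO to \PLTL.} This is the substantial direction, namely Kamp's theorem. Following Gabbay's separation-based proof, I would show by induction on formulae that every \FO formula $\varphi(x)$ with one free variable is equivalent over $(\mathbb{N},\le)$ to a \PLTL formula.
\begin{itemize}
\item The crucial case is $\exists y.\,\alpha(x,y)$, where $\alpha$ is a Boolean combination of order atoms between $x$ and $y$ and of formulae in $y$ that are, by induction, already \PLTL.
\item I would split on the position of $y$: $y<x$, $y=x$, or $y>x$.
\item Each case is then expressed with $\s$ or $\U$ applied to subformulae that may themselves mix past and future.
\item Gabbay's separation theorem rewrites every \PLTL formula as a Boolean combination of pure-past, present, and pure-future formulae, so that nested quantifiers can be absorbed.
\end{itemize}

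\textbf{From \PLTL to \LTL.} Under initial evaluation at position $0$, apply separation. Each pure-past conjunct evaluated at $0$ collapses to a propositional formula about position $0$, since $\Y\chi$ is false there and $\chi_1\s\chi_2$ is equivalent to $\chi_2$ there. This yields an initially equivalent \LTL formula and closes the cycle. This step also explains why initial rather than full equivalence is needed.

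\textbf{Main obstacle.} The hard step is the inductive \FO-to-\PLTL case for an arbitrary number of interleaved variables. Concretely, it requires proving the separation theorem: pushing $\U$ out from under $\s$ and vice versa through the finite catalogue of eliminations for $\U$ nested in $\s$, with a termination measure on nesting depth. If the goal is only to follow the paper, I would instead cite Kamp and Gabbay for this direction, exactly as the statement does, and write out only the easy translations and the initial-collapse argument.
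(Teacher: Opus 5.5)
Your plan is correct. Like the paper, it ultimately rests on citing Kamp and Gabbay et al.\ for the only hard direction, from \FO to temporal logic. The real difference is how you obtain the equivalence of \LTL and \PLTL.

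\begin{itemize}
\item The paper gets it by specialising Proposition~\ref{PCTLs-expressiveness} (past elimination for \CTLs under initial equivalence) to $\omega$-words. There $\E$ is vacuous and $\D^n\varphi$ collapses to $\X\varphi$, $\top$ or $\bot$.
\item You instead reduce \PLTL to \LTL by separation, then collapse the pure-past parts at position $0$ using $\Y\chi\equiv\bot$ and $\chi_1\s\chi_2\equiv\chi_2$.
\end{itemize}
Your route is self-contained modulo the separation theorem and shows exactly why only initial, not full, equivalence survives. The paper's route is a one-line appeal to a stronger branching-time result.

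If you cite the result of Gabbay et al.\ in its usual form (\FO sentences over $\omega$-words are captured by future-only formulae evaluated at the first position), then going from \PLTL through \FO to \LTL already closes your cycle. In that case your separate collapse step is redundant. Also note that these results are stated for strict until/since. These are interdefinable with the paper's $\X,\Y,\U,\R,\s$: strict until is $\X(\psi_1\U\psi_2)$ and strict since is $\Y(\psi_1\s\psi_2)$. So the citations do transfer.

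One caution if you ever replace the citation by an actual proof: the induction does not work as you phrase it. In $\exists y.\,\alpha(x,y)$ the body is in general \emph{not} a Boolean combination of order atoms and one-variable formulae in $y$. It may contain, say, $\exists z.\,(x<z \land z<y \land \neg p(z))$, which has both $x$ and $y$ free, so a hypothesis about one-variable formulae does not apply to it.

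The missing idea is Gabbay's marker trick:
\begin{itemize}
\item Induct on quantifier depth, and case-split on the set of letters true at $x$ to eliminate every $p(x)$.
\item Replace each atom $z<x$, $z=x$, $x<z$ by fresh letters $m_{<}(z)$, $m_{=}(z)$, $m_{>}(z)$. This yields a one-variable formula $\beta'(y)$ of smaller depth over an extended alphabet.
\item The induction hypothesis turns $\beta'(y)$ into a \PLTL formula $\theta$. Then $\exists y.\,\alpha$ holds at $x$ iff $\OO\theta\lor\F\theta$ does under the intended reading of the markers.
\item Separate this formula, and substitute $\top$/$\bot$ for the markers in its strictly-past, present and strictly-future parts. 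This removes the extra letters.
\end{itemize}
That is where separation is genuinely used, rather than to ``absorb'' nested quantifiers after the fact.
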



We will study even more restricted fragments of \PLTL and \LTL, that capture
precisely the \emph{safety} and \emph{co-safety} properties
\cite{alpern1987recognizing}.
These fragments are abbreviated \safeOLTL and \cosafeOLTL, where the optional
``\textsc{Past}'' component of the name denotes the availability of past
operators, and are defined as follows.
%
%
The fragment \safePLTL (resp., \cosafePLTL) contains all \PLTL formulae of the form $\G \psi$ (resp., $\F \psi$), where $\psi$ is a pure past \PLTL formula.
The fragment \safeLTL (resp., \cosafeLTL) contains \LTL formulae in negation normal form ---namely, negation appears only in front of atomic propositions---
that use only the operators $\X$ and $\R$
%
(resp., $\X$ and $\U$).

\begin{thmC}[{\cite[Theorems 1 and 8]{CMP92}}] \label{safeLTL equivalent to safePLTL}
Under initial equivalence, \safePLTL is as expressive as \safeLTL, and
\cosafePLTL is as expressive as \cosafeLTL.
\end{thmC}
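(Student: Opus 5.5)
We prove the two equivalences by dualising a single argument. Negation in each direction turns $\G\psi$ into $\F\neg\psi$, and turns a negation-normal-form $\{\X,\R\}$ formula into a negation-normal-form $\{\X,\U\}$ formula. Moreover, pure past formulae are closed under negation. It therefore suffices to treat the co-safety case: we show that every \cosafeLTL formula is initially equivalent to a formula $\F\psi$ with $\psi$ pure past, and conversely. We work via the classical characterisation of co-safety $\omega$-languages as those of the form $K\Sigma^\omega$ with $K\subseteq\Sigma^*$, together with the first-order (star-free) nature of everything involved.

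\textbf{From \cosafeLTL to \cosafePLTL.} Let $\varphi$ be a negation-normal-form formula using only $\X$ and $\U$. By induction on $\varphi$ we show that $\Lang(\varphi)=K_\varphi\Sigma^\omega$ for some star-free language $K_\varphi$. The induction step for $\varphi_1\U\varphi_2$ unfolds $\U$ into finitely witnessed prefixes, so a bad-prefix-free witness is always a finite prefix. Next we build a pure past formula $\psi$ that holds at position $i$ exactly when the prefix $w(0)\dots w(i)$ belongs to $K_\varphi$. Then $w\models\F\psi$ iff some prefix lies in $K_\varphi$ iff $w\in K_\varphi\Sigma^\omega$. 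To obtain $\psi$, we use the theorem that star-free finite-word languages are exactly those definable by pure past \PLTL evaluated at the last position, which is Kamp's theorem for the past fragment (cf.\ Proposition~\ref{PLTL-expressiveness} on finite prefixes). A more syntactic alternative is to translate $\varphi$ by induction directly. For example, $\X\varphi'$ corresponds to "$\F$ of a past formula that was already true one step after the origin". Here the origin is detectable as the unique position where $\neg\Y\top$ holds, so subformulae anchored at position $i$ can be expressed relative to the initial position using $\s$.

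\textbf{From \cosafePLTL to \cosafeLTL.} Given $\F\psi$ with $\psi$ pure past, let $K=\{u\in\Sigma^+ : u\text{ satisfies }\psi\text{ at its last position}\}$. This language is star-free, hence aperiodic, and $\Lang(\F\psi)=K\Sigma^\omega$. We minimise to $K'=K\Sigma^*$, the suffix-closed extension, which is still star-free. We then construct a \cosafeLTL formula by induction on a counter-free automaton for $K'$ or on a star-free expression. Concatenation $K_1\cdot K_2$ with $K_2$ extension-closed is rendered by nesting $\U$ and $\X$. Complements are the delicate part, since no negation is available. We handle them by observing that for extension-closed co-safety languages only positive "eventually reach an accepting state" conditions are needed. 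Concretely, we take the minimal DFA of $K'$: it has a single absorbing accepting sink and is counter-free. We then define, by induction along the DAG of strongly connected components (on aperiodic components), formulae $\theta_q$ meaning "from state $q$ the sink is reached". Each $\theta_q$ is of the form $\bigvee (\text{letter-guard} \U (\text{exit letter}\wedge\X\theta_{q'}))$, where the guards are Boolean combinations of atomic propositions, which are allowed in negation normal form.

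\textbf{Main obstacle.} The hard step is the last one: expressing reachability inside a nontrivial aperiodic component using only $\X$ and $\U$ without negation. Within such a component, the "stay in component" condition is not a simple letter guard. To deal with this, we would first try the Krohn--Rhodes/cascade decomposition, which reduces counter-free components to reset automata, for which "stay" becomes letter-local. If that becomes too heavy, we would instead invoke Kamp's theorem to obtain an \LTL formula for $K'\Sigma^\omega$. We would then push it into the co-safety fragment by the standard normalisation showing that any \LTL formula defining a co-safety property is equivalent to one of the form of \cosafeLTL. This normalisation is the content of \cite{CMP92}, and it is exactly where the real work lies.
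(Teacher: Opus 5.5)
The paper gives no proof of this statement; it imports it from \cite{CMP92}. Your proposal has a genuine gap, and it sits in the only non-trivial part: the direction from $\F\psi$ to \cosafeLTL (dually, from $\G\psi$ to \safeLTL).

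Your formulae $\theta_q$ of the shape $\bigvee(\text{guard}\U(\text{exit}\wedge\X\theta_{q'}))$ are correct only when every strongly connected component of the DFA is a single state. Minimal DFAs of star-free languages need not be of this kind. Take $\F(c\wedge\Y a)$ over the alphabet $\{a,b,c\}$, so that $K\Sigma^*=\Sigma^*ac\Sigma^*$. Its minimal DFA has the cycle $q_0\xrightarrow{a}q_1\xrightarrow{b}q_0$. The letter $c$ keeps you in this component from $q_0$ but leaves it from $q_1$, so ``staying in the component'' is not a letter guard.

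Neither of your fallbacks repairs this.
\begin{itemize}
\item The cascade route is not carried out. The usual translation of a cascade of reset automata into temporal logic expresses ``the current state is $q$'' by a since-formula. It therefore produces \emph{past} formulae, which serves the direction you already have, not this one.
\item The second fallback assumes that every \LTL-definable co-safety property is \cosafeLTL-definable. But $\F\psi$ is \LTL-definable (Proposition~\ref{PLTL-expressiveness}) and co-safety, so this assumption already contains the claim. As a proof it is circular; as a citation it is exactly the paper's appeal to \cite{CMP92}.
\end{itemize}

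The rest is sound. The duality reduction works, and so does the direction from \cosafeLTL to \cosafePLTL. For that direction, take $K_\varphi$ to be the prefixes satisfying $\varphi$ under the strong finite-word semantics, which is monotone under extension and \FO-definable. The fact you then need is the finite-word version of Kamp's theorem for pure past formulae evaluated at the last position. Proposition~\ref{PLTL-expressiveness} is not the right reference, since it concerns $\omega$-words. Also, $K\Sigma^*$ is extension-closed, not suffix-closed.

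You were one step from a non-circular reduction to results the paper quotes elsewhere. Take the minimal DFA of $K\Sigma^*$, let its unique absorbing accepting state play the role of $q_\sink$, and declare all other states rejecting. This is a counter-free looping deterministic co-B\"uchi automaton recognising $\Lang(\F\psi)=K\Sigma^\omega$. Proposition~\ref{looping equivalence}, from \cite{boker2022translation}, then gives a \cosafeLTL formula.

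Alternatively, Proposition~\ref{equivalence regex-cosafe}, from \cite{thomas1988safety}, applies directly to $K\Sigma^\omega$ with $K$ star-free. Combined with your easy direction, it shows that both fragments define exactly the languages $K\Sigma^\omega$ with $K$ star-free. That yields the whole statement and its safety dual at once. The hard content is still imported rather than proved, as in the paper, but without presupposing the statement itself.
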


\paragraph{Word automata.}
A \emph{nondeterministic B\"uchi word automaton} (\NBA) is a tuple
$\mathcal{A} = (Q, \Sigma, \Delta, q_I, F)$,
where $Q$ is a finite set of states,
$\Sigma$ is a finite alphabet,
$\Delta \subseteq Q\times\Sigma\times Q$
is a transition relation,
$q_I\in Q$ is an initial state, and
$F \subseteq Q$ is a set of \emph{final states}.
An NBA is \emph{deterministic} if $\Delta$ can be seen
as a partial function from $Q\times\Sigma$ to $Q$.

A \emph{path} of $\mathcal{A}$ on a finite or infinite word $w = \sigma_1 \sigma_2 \dots$
is a sequence of states $\rho = q_0 q_1 \dots$ of length $|w|+1$ (if $w$ is infinite, then the path is also infinite)
such that $(q_{i-1},\sigma_{i},q_i)\in\Delta$ for all positions $i$ in $w$.
Such a path $\rho$ is called a \emph{run} of $\mathcal{A}$ on $w$ if it starts with the initial state $q_I$ of the automaton.
We denote by $\inf(\rho)$ the set of states visited infinitely often in an infinite run $\rho$,
and say that $\rho$ is \emph{successful} if there is a $q \in F$ such that $q \in \inf(\rho)$.
An infinite word $w$ is \emph{accepted} by $\mathcal{A}$ if there is a successful run
of $\mathcal{A}$ on $w$.
The language $\Lang(\mathcal{A})$ \emph{recognised} by $\mathcal{A}$ is
the set of words accepted by $\mathcal{A}$.
%
%
Two automata are \emph{equivalent} if they recognise the same language.

Variants of NBAs are obtained by modifying the acceptance condition.
One way is to simply turn existential nondeterminism into universal nondeterminism,
that is, to declare that the automaton accepts an infinite word $w$ whenever \emph{every} run
on $w$ is successful; we call these models \emph{universal automata} (e.g., we have
\emph{universal B\"uchi word automata}, abbreviated \UBA).
Another way is to replace a B\"uchi acceptance condition with a coB\"uchi one:
formally, a run $\rho$ of a \emph{nondeterministic coB\"uchi automaton}
(\emph{NCA} for short) is successful if there is no $q\in F$ such that
$q\in\inf(\rho)$; accordingly, the states of $F$ are called \emph{rejecting
states}.

An automaton $\mathcal{A}$ is \emph{counter-free} if for all states $q$,
all finite words $w$, and all natural numbers $n$,
if $\mathcal{A}$ admits a path on $w^n$ that starts and ends at
$q$, then it also admits a path on $w$ that starts and ends at $q$.

\begin{prop} \cite[Theorem 1.1]{diekert2008first}
The languages recognisable by nondeterministic, counter-free B\"uchi automata
are exactly those definable in \FO.
\end{prop}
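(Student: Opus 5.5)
The statement is the classical theorem of Diekert and Gastin: nondeterministic counter-free Büchi automata recognise exactly the \FO-definable $\omega$-languages. I plan to prove both inclusions, using Proposition~\ref{PLTL-expressiveness} to replace \FO by \LTL (equivalently \PLTL) throughout.

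\emph{From \LTL to counter-free \NBA.} I would use the standard tableau (Vardi--Wolper style) construction. The states of the automaton are consistent sets of subformulae of the closure of $\psi$: atoms that fix the truth value of each subformula. A transition reads the current letter and checks the local $\X$-consistency between consecutive atoms. For each until-subformula the Büchi condition requires infinitely often an atom in which the eventuality is fulfilled or is not pending; with several untils I would use a generalised condition and degeneralise. The real content of this direction is counter-freeness. The atom part is easy to handle, since an atom determines the truth values of all subformulae at the current position. So if there is a path on $w^n$ that starts and ends in the same atom $A$, then $A$ describes a position whose future begins with $w^n$ followed by some continuation. Aperiodicity of \LTL-definable languages, i.e.\ $uw^{m}v\in L \iff uw^{m+1}v\in L$ for large $m$, then suggests that the loop can be shortened to a single $w$.

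This step is the main obstacle. Naive tableaux are not counter-free in general, and degeneralisation counters break counter-freeness as well. I would therefore first try an alternative route: translate \LTL via separation into Boolean combinations of pure-past formulae $\G\F\chi$ and $\F\G\chi$. Pure-past properties are recognised by deterministic counter-free automata, obtained from a Krohn--Rhodes or aperiodic-monoid argument, namely the aperiodic syntactic monoid of a star-free language. Unions, and intersections via products, of counter-free automata with a Büchi or co-Büchi layer on top can be kept counter-free if the Büchi layer is implemented by guessing a point after which only nonrejecting states occur. That guessing adds only a one-way mode switch, which creates no new cycles.

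\emph{From counter-free \NBA to \FO.} I would show that $L(\mathcal A)$ is a finite union of sets $U_{p}\cdot V_{p}^{\omega}$, where $U_p$ is the set of finite words leading from $q_I$ to $p$ and $V_p$ is the set of finite words looping at the final state $p$. Counter-freeness gives that each language $L_{q,q'}=\{w \mid \text{there is a path from } q \text{ to } q' \text{ on } w\}$ is aperiodic. Indeed, the transition monoid of $\mathcal A$, taken as Boolean matrices, satisfies $M^n\ge I$ on a diagonal entry $\Rightarrow M\ge I$ on that entry, and a standard argument upgrades this to $M^{k}=M^{k+1}$ for some $k$. Hence each $L_{q,q'}$ is star-free, and therefore \FO-definable on finite words by McNaughton--Papert.

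The remaining difficulty is the $\omega$-power. Here I would not use $V_p^\omega$ directly. Instead I would use a Ramsey factorisation over the aperiodic transition monoid: $w\in L$ iff $w=uv_1v_2\cdots$ with all the $v_i$ mapping to the same idempotent $e$, where the pair $([u],e)$ is accepting. For an aperiodic monoid, the set of words admitting such a factorisation with a fixed linked pair is expressible in \FO. The sets $[u]^{-1}$ and $e^{-1}$ are star-free, and "infinitely many cut points with segment type $e$" can be phrased via the stabilisation $\F\G$ of the prefix type; this relies on aperiodicity so that the type stabilises rather than oscillates. Together these give an \FO (or \PLTL) definition, which completes the proof.
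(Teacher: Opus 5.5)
\paragraph{Gap: the $\omega$-step from an aperiodic monoid to \FO.}
The paper does not prove this statement; it cites Diekert and Gastin, where the hard direction rests on aperiodicity and on the theorem that aperiodic $\omega$-languages are \LTL-definable (Perrin's extension of Sch\"utzenberger's theorem to infinite words). The first half of your second direction is fine: every vertex on a cycle of the graph of $M_u$ carries a $u$-loop, so a walk of length at least $|Q|$ can be padded or shortened by one step, which gives $M_u^{k}=M_u^{k+1}$.

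The genuine gap is the final step. There you assert that, for an aperiodic morphism $h\colon\Sigma^*\to M$, the set $h^{-1}(s)\,h^{-1}(e)^\omega$ is \FO-definable. That assertion is precisely the non-trivial theorem, and the mechanism you offer for it is false in two ways.
\begin{itemize}
\item Aperiodicity does not make types stabilise. Take $M=\{1,a,b\}$ with $xy=y$ for $x,y\in\{a,b\}$, and $h$ the identity on letters, so $h(u)$ is the last letter of $u$. The word $w=(ab)^\omega=a\cdot ba\cdot ba\cdots$ lies in $h^{-1}(a)\,h^{-1}(a)^\omega$. Yet for every $i$ the type of $w[i..n)$ alternates between $a$ and $b$ forever, so stabilisation is not even necessary.
\item Eventual or recurrent type $e$ does not suffice either. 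Over $\Sigma=\{a,c\}$, take the aperiodic monoid $\{1,e,f,g\}$ with $ex=e$, $gx=g$, $ff=f$, $fe=fg=g$ for $x\neq 1$, and set $h(a)=e$, $h(c)=f$. Every word beginning with $a$ has type $e$. So for $w=aac^\omega$, the prefix $a$ has type $e$ and all factors $w[1..n)$ have type $e$, whether you measure from the start or from a candidate cut. Nevertheless $w\notin h^{-1}(e)\,h^{-1}(e)^\omega$, because every $e$-block must begin with an $a$ and $w$ has only two.
\end{itemize}
The set of cut points is an infinite, second-order object. Making it \FO-expressible needs a genuinely global argument, such as Perrin's theorem or Diekert and Gastin's induction showing that aperiodic languages are \LTL-definable. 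That is exactly what the citation supplies, and your proposal does not.

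\paragraph{Two smaller repairs.}
\begin{itemize}
\item The Boolean monoid does not record visits to $F$. So ``accepting'' linked pairs only make sense in the monoid of matrices over $\{0,1,2\}$ (no path, path, path through $F$). Counter-freeness does make this monoid aperiodic too: for $k\ge 2|Q|$, a walk contains a cycle on one side of its $F$-visit, and that cycle can be traded for self-loops. But this has to be argued.
\item In the direction from \LTL, guessing a switching point realises $\F\G\chi$ but not a conjunction of several $\G\F\chi_i$. Instead of degeneralising, merge them with a pure-past formula, e.g.\ $\G\F\chi_1\land\G\F\chi_2\equiv\G\F\big(\chi_2\land(\chi_1\lor\Y(\neg\chi_2\,\s\,\chi_1))\big)$. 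Then every disjunct of the Lichtenstein--Pnueli--Zuck normal form becomes $\G\F\psi\land\F\G\chi$ over a single deterministic counter-free automaton. With this fix, that direction goes through, relative to the normal form you cite.
\end{itemize}
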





\section{Temporal logics equivalent to \FO} \label{temporal logics for FO}

In this section, we introduce and discuss two tree temporal logics that are
known to be equivalent to \FO over infinite trees.
Although the equivalence results are not new, we here include a rather detailed
section dedicated to these two logics, with the aim of recasting them in a
syntax closer to the community standard and of presenting brand new results
about them (including a new normal form) to shed light on their behaviour.
Doing this has already helped us better understand the behaviour of \FO over
infinite trees and it might also be useful to others working on the same
topic.

\subsection{Polarised \texorpdfstring{\pmb{\PCTL}}{PastCTL}}
The first formalism we present is called \emph{Polarised \PCTL}, denoted
\PolPCTL. The logic was already introduced in \cite{Sch92a},
under a different notation, namely $\{\mathcal{U}, \mathcal{S}, \mathcal{X}_k\}$, and was proved to be as expressive as
\FO over infinite trees under initial equivalence \cite[Theorem 4.5]{Sch92a}.
Here we present it in a revised syntax, more similar to the other logics we will work with.

\begin{defi} [\PolPCTL syntax] \label{def:syntaxPolPCTL}
Formulae of \PolPCTL are generated by the following grammar, where $p$ ranges over a set $\AP$ of propositional letters:
%
%
\begin{align*}
\varphi ~::= &~
  p
  ~\bigm|~
  \neg \varphi
  ~\bigm|~
  \varphi \lor \varphi
  ~\bigm|~
  {\D^n\varphi}
  ~\bigm|~
  \E \X \varphi
  ~\bigm|~
  \E \varphi \U \varphi
  ~\bigm|~
  \E \Y \varphi
  ~\bigm|~
  \E \varphi \s \varphi.
\end{align*}
\end{defi}

We make a few comments about the adopted syntax, compared to the original one
from \cite{Sch92a}.
In his work, Schlingloff uses a strict version of the until operator, which he denoted $\mathcal{U}$, with an implicit
path quantification (see, e.g., \cite[Example 2.7]{Sch92a}). Every formula $\psi_1 \mathcal{U} \psi_2$
of the original formalism can be translated to an equivalent \PolPCTL (and thus \PCTL) formula $\E\X \: \E( \psi_1 \U \psi_2)$.
Conversely, \PolPCTL formulae $\E\X\psi$ and $\E\psi_1\U\psi_2$ can be rewritten
using Schlingloff's until operator $\mathcal{U}$, respectively as $\bot
\mathcal{U} \psi$ and $\psi_2 \lor (\psi_1 \land \psi_1 \mathcal{U} \psi_2)$.
A similar (back and forth) translation exists between operators $\E\Y$ (resp.,
$\E\U$) and $\mathcal{S}$.
As far as the counting modality is concerned, Schlingloff's operator
$\mathcal{X}_k$
is simply renamed to $\D^n$ in \PolPCTL (and \PCTL), following a recent practice
\cite{moller2003counting}.
Finally, to maintain uniformity with both the syntax and semantics of \PCTLs,
we require the past temporal operators $\Y$ and $\s$ to be preceded by a path
quantifier as well.
At an intuitive level, this may appear redundant, since nodes of a tree admit
only one past.
Formally, however, it is necessary under our convention of always considering
initial paths.
Overall, this is how we can move between Schlingloff's original formalism and
the syntax of \PCTL.


However, there is an important difference between $\PCTL$ and its polarised
fragment $\PolPCTL$, which concerns the
possibility of obtaining formulae that are initially equivalent to $\E(\psi_1 \R \psi_2)$ and $\A (\psi_1 \U \psi_2)$.
For example, these formulae cannot be directly generated by the grammar of $\PolPCTL$, since the grammar enforces
a ``polarisation'' condition that requires the operator $\E$ (resp., $\A$) to
always appear in front of $\X$ or $\U$ (resp., $\X$ or $\R$).
It is also easy to see that the properties defined by $\E(\psi_1 \R \psi_2)$ and
$\A (\psi_1 \U \psi_2)$ cannot, in fact, be expressed in $\PolCTL$, i.e., ~the
fragment of $\PolPCTL$ in which past temporal operators are prohibited.
Yet, in principle, this does not exclude the possibility that those properties
are definable in $\PolPCTL$, for example using past operators.
However, Item (2) of the theorem below excludes this possibility.
In its turn, the presence of past operators raises additional questions that are
concerned with the expressive power of these formalisms.
We have already seen that adding past operators to a logical formalism does not
always increase its expressive power: this is the case for \CTLs, for instance,
whose past operators can be removed while preserving initial equivalence
(Proposition \ref{PCTLs-expressiveness}).
Here, we will see an opposite situation: \PolCTL is strictly less expressive
than \PolPCTL.


\begin{thm} \label{polctl expressiveness}
Clearly, we have $\PolCTL \subseteq \PolPCTL \subseteq \PCTL$.
The following separation results also hold:
\begin{enumerate}
\item \label{item:polCTL-less-than-pastPolCTL}
  \PolCTL is strictly less expressive than \PolPCTL.
\item \label{item:pastPolCTL-and-CTL-are-incomparable}
  \PolPCTL and \CTL are incomparable.
\item \label{item:pastPolCTL-less-than-pastCTL}
  \PolPCTL is strictly less expressive than \PCTL.
\end{enumerate}
\end{thm}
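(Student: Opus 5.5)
The plan is to obtain all three separations from one non-definability result (for \CTL) and one definability gap (for \PolCTL), using the equivalence of \PolPCTL with \FO from \cite[Theorem 4.5]{Sch92a}. The inclusions $\PolCTL \subseteq \PolPCTL \subseteq \PCTL$ are syntactic: every \PolPCTL formula is a \PCTL formula, because each future operator is immediately preceded by $\E$. So only the strictness and incomparability claims need proof.

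\emph{Item~\ref{item:pastPolCTL-and-CTL-are-incomparable}, first direction.} I will show that the \CTL formula $\E\G p$ is not initially equivalent to any \PolPCTL formula. Since \PolPCTL and \FO are equivalent, it suffices to show that ``some infinite path from the root is labelled $p$ everywhere'' is not \FO-definable over unranked unordered infinite trees. I will use an Ehrenfeucht--Fra\"iss\'e argument for $k$ rounds with the ancestor relation. Let $t_m$ be the tree whose root has $m$ children. Each child carries a $p$-labelled path of finite length (the lengths tend to infinity), followed by a $\neg p$ node with an all-$\neg p$ tree below it. Let $t'_m$ be the same tree with one infinite all-$p$ branch added. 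For $m$ and the finite lengths large with respect to $k$, Duplicator survives $k$ rounds, by the usual argument that long finite and infinite $p$-chains are indistinguishable in \FO with the order $\le$. The second tree satisfies $\E\G p$ and the first does not.

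\emph{Item~\ref{item:pastPolCTL-less-than-pastCTL}.} This follows at once. $\E\G p$ belongs to \CTL, hence to \PCTL, but not to \PolPCTL, so the inclusion $\PolPCTL \subseteq \PCTL$ is strict.

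\emph{Item~\ref{item:polCTL-less-than-pastPolCTL}.} I need a \PolPCTL formula $\theta$ that uses past operators together with counting and is not expressible in \PolCTL. The upward moves $\E\Y$ and $\E\s$ let a formula return from a node to an ancestor and then descend along a \emph{different} branch. My first candidate is a formula of the shape $\E\F\bigl(p \land \E\Y(\D^2 q)\bigr)$, refined so that the sibling condition cannot be pushed back into a purely downward polarised description. If this fails, I will fall back on the witness that \cite{laroussinie1995hierarchy} use to separate \PCTL from \CTL, after checking that it can be rewritten in polarised form.

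To prove $\theta \notin \PolCTL$, I will use the fact that \PolCTL is a future-only graded logic whose path modalities are only $\E\X$, $\E\U$ and their duals. Hence \PolCTL is invariant under counting (graded) bisimulation, and moreover each formula can inspect only a bounded number of alternations. I will build two families of trees, indexed by the nesting depth $n$. The trees in each pair are graded-bisimilar up to depth $n$ with respect to the polarised modalities, but $\theta$ distinguishes them. I will show indistinguishability by a game-based argument: induction on formulae, with Spoiler's $\E\U$ moves answered by Duplicator inside long uniform segments.

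\emph{Item~\ref{item:pastPolCTL-and-CTL-are-incomparable}, second direction.} I will reuse the same $\theta$ and extend the indistinguishability argument from \PolCTL to full (graded) \CTL, which additionally has $\E\R$ and $\A\U$. This shows that \PolPCTL contains a property outside \CTL. Together with the first direction, this gives incomparability.

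The main obstacle is choosing $\theta$. Many natural formulae with past collapse back into \PolCTL; for instance, $\A\G(p \to \E\OO q)$ is equivalent to $\neg\E(\neg q \U (p\land\neg q))$. So the witness must genuinely use counting or sibling information reached through $\E\Y$. The pair of trees must then be chosen so that $\E\R$ and $\A\U$ also fail to separate them, because the \CTL direction of Item~\ref{item:pastPolCTL-and-CTL-are-incomparable} needs that stronger property.
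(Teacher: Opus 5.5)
Your overall decomposition matches the paper's. One \CTL formula that is not \FO-definable gives Item~3 and one half of Item~2, via the \FO-equivalence of \PolPCTL. One \PolPCTL formula outside \CTL gives Item~1 and the other half. However, both lower-bound arguments you plan to carry out yourself have gaps.

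\textbf{The $\E\G p$ direction.} Your Ehrenfeucht--Fra\"iss\'e pair does not work. The first node of the infinite all-$p$ branch in $t'_m$ has only $p$-labelled descendants. In $t_m$, by contrast, every node either is labelled $\neg p$ or has a $\neg p$-labelled descendant. So the rank-2 sentence $\exists x\,\forall y\,(x\le y\to p(y))$ already separates $t_m$ from $t'_m$. The ``usual argument'' you invoke is false. A $p$-chain followed by $\neg p$ is \FO-distinguishable from an infinite $p$-chain even on words, where $\G p$ is \FO-definable. Non-definability of $\E\G p$ must come from branching. Every node of the infinite $p$-path must also start terminating $p$-paths of unbounded length. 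Otherwise sentences such as ``some node has only $p$ below it'' or ``from some $p$-node every $p$-path extends by a $p$-child'' detect the infinite path; the latter is \FO because the child relation is definable from $\le$. The simple repair is the paper's. Since $\E\G p=\neg\A\F\neg p$, it suffices that $\A\F p$ is not \FO-definable, which is \cite[Theorem 3]{BBMP23}. Your Item~3 depends on this as well.

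\textbf{Item~1 and the \CTL half of Item~2.} Here you have no witness at all. Your first candidate collapses. At the root, $\E\F(p\land\E\Y(\D^2 q))$ just says that some node has a $p$-child and at least two $q$-children. It is therefore initially equivalent to $\E\F(\D^1 p\land\D^2 q)$, which is a \PolCTL formula. The graded-bisimulation game is only announced: no trees are given, and ``bounded number of alternations'' is not an invariant you have established. The missing step is exactly your fallback, which the paper carries out. It takes the \CTLs formula $\E((s\lor a\U b)\U r)$, shown not to be \CTL-expressible in \cite{laroussinie1995hierarchy}. It then writes an explicit initially equivalent \PolPCTL formula using only $\E\F$, $\E\Y$, $\E\s$, $\E\OO$ and Boolean connectives. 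You also do not need a separate \PolCTL argument that is later ``extended'' to \CTL. Since \PolCTL is a syntactic fragment of \CTL, non-expressibility in \CTL yields Item~1 and the remaining half of Item~2 at once.
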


\begin{proof}
All results can be derived from the literature.
The proof of (1) can be derived from \cite[Lemma A.1, Lemma A.2]{laroussinie1995hierarchy}.
Specifically, it is shown that the \CTLs formula $\E(s \lor a\U b)\U r$ is not expressible in \CTL, and consequently neither in \PolCTL,
which is clearly a syntactic fragment of \CTL.
On the other hand, the same formula is expressible in \PCTL, as follows
%
\[
  \begin{array}{lcl}
    r & \lor & \E\F( b \land \E\Y\E(a \s [r \land a \land \neg \E\OO (\neg a \land \neg b \land \E((\neg b) \s (\neg b \land \neg s)))]))\\
    & \vee & \E\F(r \land b \land \E\Y(\neg\E\OO (\neg a \land \neg b \land \E((\neg b)\s(\neg b \land \neg s))))) \\
    & \vee & \E\F(r \land \E\Y (\neg \E\OO (\neg a \land \neg b \land \E((\neg b) \s (\neg b \land \neg s))) \land  \neg \E((\neg b)\s (\neg b \land \neg s)))) .
  \end{array}
\]
Since the above formula can clearly be expressed in \PolPCTL, (1) follows.
Moreover, the existence of a property definable in \PolPCTL but not in \CTL establishes one direction of (2).
To get the other direction, we exploit \cite[Theorem 3]{BBMP23}. There, it is shown that
the \CTL formula $\A\F p$ ($= \A (\top \U p))$ has no \FO equivalent. We recall from \cite[Theorem 4.5]{Sch92a}
that \PolPCTL is expressively equivalent to \FO, and therefore the \CTL formula $\A\F p$ is not expressible in \PolPCTL either.
Finally, because the formula $\A\F p$ is also in \PCTL, (3) holds as well.
\end{proof}

\subsection{\texorpdfstring{\pmb{\CTLs}}{CTL*} over finite paths}
The second logic that we discuss here, \emph{Full Computation Tree Logic} over
finite paths, denoted \CTLsf, has a rather different flavour than \PolPCTL, but
it is also shown to be equivalent to \FO.
In \cite[Proposition 3]{BBMP23}, equivalence of \FO and
\CTLsf over unranked trees\footnote{The same equivalence over binary trees was
already noted in \cite[Section 5]{HT87}} is obtained by a
straightforward adaptation of \cite{moller2003counting}, where it was shown that
\CTLs is equivalent to Monadic Path Logic.
%
%
Syntactically, \CTLsf is identical to \CTLs.
The only difference comes at the semantic level, according to which path
quantification ranges over finite paths, rather than infinite ones.
To make the difference more explicit, we will write $\E^{f}$ instead of $\E$.
Below, we report only those clauses in the semantic definition of \CTLsf that
differ from \CTLs; all remaining clauses are unchanged, except that every path
formula is obviously evaluated with respect to a \emph{finite} path $\pi$.

\begin{defi} [\CTLsf semantics]
     %
  Given a pointed tree $(t,u)$, the satisfaction relation $\models_s$ for a
  \CTLsf state formula $\E^f \psi$ 
  is defined as follows:
    \begin{itemize}
        \item $(t, u) \models_s \E^f \psi$ iff there is a finite path $\pi$ starting at the root and visiting $u$ at position $i$ such that $(t, \pi, i) \models_p \psi$.
    \end{itemize}
  Similarly, given a path-pointed tree $(t, \pi, i)$, the satisfaction relation $\models_p$
  for a \CTLsf path formula $\X \psi$ 
  is defined as follows
    \begin{itemize}
        \item $(t, \pi, i) \models_p \X \psi$ iff $i$ is not the last position of $\pi$ and $(t, \pi, i+1) \models_p \psi$.
    \end{itemize}
\end{defi}

We notice an important consequence of the finite-path semantics:
%
%
the \LTL equivalence $\X \neg\psi \leftrightarrow \neg \X \psi$ no longer holds
over finite-path semantics.
For example, $\neg\X\top$ holds vacuously at the last position of a path,
whereas $\X\neg\top$ does not.
Therefore, it makes sense to introduce a new abbreviation $\tilde{\X}\psi = \neg \X \neg\psi$,
which holds at a position $i$ of a path $\pi$ if, whenever the next position exists, it satisfies $\psi$.
While the finite-path semantics has no particular consequence for the operator $\U$,
it does have for the operator $\R$, and specifically for the unfolding of a formula
$\psi_1 \R \psi_2$, which must be defined as
$\psi_2 \land (\psi_1 \lor \tilde{\X} (\psi_1 \R \psi_2))$ in order to preserve equivalence
(note that this involves the new operator $\tilde{\X}$).
%
%

Apart from these nuances, what is interesting about \CTLsf is that, despite the
apparent difference w.r.t.~\PolPCTL, it turns out to have the same expressive
power as \FO and, hence, as \PolPCTL.
In other words, the \emph{syntactic} restrictions of \PolPCTL are precisely
equivalent to the \emph{semantic} restrictions of \CTLsf.
For example, the next lemma shows that the $\R$ operator becomes redundant
within the scope of $\E^f$.
The final theorem of this subsection is in fact considerably deeper in its
implications, since it provides a syntactic restriction of \CTLs equivalent to
the semantic restriction of \CTLsf.
Although the resulting grammar does not, strictly speaking, define a fragment of
\CTLsf, we will nonetheless refer to it as the \CTLsf normal form.

\begin{lem}
\label{cleaning syntactically CTLsf}
  The following equivalence between \CTLsf path formulae is valid, according to
  the finite-path semantics:
\begin{itemize}
\item
  $\varphi \R \psi \leftrightarrow \alpha_{\varphi \R \psi}$, where
  $\alpha_{\varphi \R \psi} = \psi \U ((\tilde{\X} \bot \vee \varphi) \wedge
  \psi)$.
\end{itemize}
The following full state equivalences between \CTLsf state formulae are valid:
\begin{enumerate}
\item
  $\E^f \tilde{\X} \varphi \leftrightarrow \top$,
\item
  $\phi \leftrightarrow \beta_\phi$, where $\beta_\phi$ is obtained from $\phi$
  by replacing every occurrence of subformulae of the form $\varphi \R \psi$
  with $\alpha_{\varphi \R \psi}$.
\end{enumerate}
\end{lem}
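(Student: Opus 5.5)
The plan is to argue directly from the finite-path semantics of \CTLsf. Throughout, $\pi$ is a finite path with last position $n$. The key observation is that $\tilde{\X}\bot$ holds at position $i$ exactly when $i=n$. Indeed, $\tilde{\X}\bot=\neg\X\top$, and $\X\top$ holds iff a next position exists.

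For the path equivalence, I first rewrite the semantic clause of $\R$ over finite paths. The infinite-path clause ``$\psi$ holds for all $j\ge i$'' becomes ``$\psi$ holds for all $j$ with $i\le j\le n$''. So $(t,\pi,i)\models\varphi\R\psi$ iff one of two cases holds:
\begin{itemize}
\item there is $j$ with $i\le j\le n$ such that $\psi$ holds at positions $i,\dots,j$ and $\varphi$ holds at $j$;
\item $\psi$ holds at positions $i,\dots,n$.
\end{itemize}
The second case is the first with $j=n$ and $\varphi$ replaced by $\tilde{\X}\bot$. Hence the disjunction says exactly that there is $j\in[i,n]$ with $\psi$ at $i,\dots,j-1$ and $(\tilde{\X}\bot\vee\varphi)\wedge\psi$ at $j$. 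This is the semantic clause of $\psi\U((\tilde{\X}\bot\vee\varphi)\wedge\psi)$, since $\U$ on finite paths only quantifies over positions $j\le n$. The step needing care is the exact finite-path reading of $\R$. As a sanity check, I will verify it against the unfolding $\psi_2\land(\psi_1\lor\tilde{\X}(\psi_1\R\psi_2))$ stated just before the lemma, by induction on $n-i$. The base case $i=n$ gives $\psi\wedge(\varphi\vee\top)=\psi$ on both sides.

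For item (1), given a pointed tree $(t,u)$, I take the finite initial path $\pi$ from the root to $u$, so that $u$ is its last position $i$. Then $\X\neg\varphi$ fails at $i$ because there is no next position, so $\tilde{\X}\varphi$ holds vacuously. This shows $\E^f\tilde{\X}\varphi$ holds everywhere. The only point to check is that the definition of $\E^f$ allows a path ending at $u$; it does, since any finite path from the root visiting $u$ is permitted.

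For item (2), I argue by structural induction on $\phi$, simultaneously over state and path formulae. The path equivalence is valid at every path-pointed tree, and full state/path equivalence is preserved under substitution in every context. This holds because all clauses of the semantics are compositional in the truth values of immediate subformulae at each position or node. Replacing each $\R$-subformula by $\alpha_{\varphi\R\psi}$, innermost first, therefore preserves the truth value at every point, which yields $\phi\leftrightarrow\beta_\phi$.

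I expect no real obstacle here. The main risk is an off-by-one error in how $\U$ and $\R$ treat the last position, which the base-case check above is meant to catch.
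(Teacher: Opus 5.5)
Your proof is correct, but for the path equivalence you take a more direct route than the paper does.

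\textbf{The paper's route.} It argues by induction on $|\pi|-i$ and works only with one-step unfoldings. Left to right, it uses $\varphi\R\psi \leftrightarrow \psi\land(\varphi\lor\tilde{\X}(\varphi\R\psi))$, with base case given by $\tilde{\X}\bot$ holding vacuously at the last position. Right to left, it unfolds $\alpha_{\varphi\R\psi}$ as $((\tilde{\X}\bot\lor\varphi)\land\psi)\lor(\psi\land\X\alpha_{\varphi\R\psi})$ and applies the inductive hypothesis at $i+1$.

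\textbf{Your route.} You unwind the finite-path $\R$ clause into its two disjuncts. You then observe that the disjunct ``$\psi$ holds up to the end'' is exactly a $\U$-witness at the last position $n$, and that $\tilde{\X}\bot$ singles out that position. This needs no induction and explains why $\tilde{\X}\bot$ appears in $\alpha_{\varphi\R\psi}$.

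\textbf{Trade-off.} Your argument relies on an explicit finite-path reading of the $\R$ clause, namely that ``for all $j\ge i$'' ranges over positions of $\pi$. The paper leaves this implicit in ``all remaining clauses are unchanged''. The paper's route uses only the unfolding it states explicitly for finite paths, so it does not depend on how that clause is read. Your planned consistency check against the unfolding is precisely what closes this point.

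\textbf{Items (1) and (2).} The paper merely calls these immediate. Your explicit witness for (1) fills in that detail correctly: the root-to-$u$ path with $u$ as its last position, where $\tilde{\X}\varphi$ holds vacuously. Note that (1) follows directly from the finite-path semantics rather than from the path equivalence. Your compositional, innermost-first substitution argument for (2) is also correct.
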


\begin{proof}
The proof of the equivalence for path formulae relies on simple inductive
arguments and the use of the unfolded semantics of the operators $\U$ and $\R$.
We start by proving the left-to-right direction of this equivalence.
Suppose $(t,\pi,i) \models \varphi \R \psi$, for some path-pointed tree
$(t,\pi,i)$, with $\pi$ finite.
By the unfolding of $\varphi \R \psi$, we have $(t,\pi,i) \models \psi \land
(\varphi \lor \tilde{\X}(\varphi \R \psi))$.
We exploit an induction on $|\pi| - i$ to prove that $(t,\pi,i) \models
\alpha_{\varphi\R\psi}$.
The base case is where $i$ is the last position of $\pi$, which implies
that $(t,\pi,i) \models \tilde{\X}\bot$ holds vacuously.
Together with $(t,\pi,i)\models \psi$, this immediately implies $(t,\pi,i)
\models \alpha_{\varphi\R\psi}$.
For the inductive step, we assume that $i$ is not the last position of $\pi$,
and recall that both formulae $\psi$ and $\varphi \lor \tilde{\X}(\varphi \R
\psi)$ hold at position $i$.
We distinguish two subcases, depending on whether $(t,\pi,i) \models \varphi$ or
$(t,\pi,i+1) \models \varphi \R \psi$. In the former subcase, together with
$(t,\pi,i)\models\psi$, we easily get $(t,\pi,i) \models
\alpha_{\varphi\R\psi}$.
In the latter subcase, we exploit the inductive hypothesis to get $(t,\pi,i+1)
\models \alpha_{\varphi\R\psi}$, which paired with $(t,\pi,i)\models\psi$ gives
$(t,\pi,i)\models\alpha_{\varphi\R\psi}$.

For the other direction, suppose $(t,\pi,i) \models \alpha_{\varphi\R\psi}$.
By the unfolding of $\alpha_{\varphi\R\psi}$, we get $(t,\pi,i) \models
((\tilde{\X}\bot \lor \varphi) \land \psi) \lor (\psi \land
\X\alpha_{\varphi\R\psi})$.
Accordingly, we distinguish two cases.
If $(t,\pi,i) \models (\tilde{\X}\bot \lor \varphi) \land \psi$, then $(t,\pi,
i) \models \varphi \land \psi$ or $(t,\pi, i) \models \tilde{\X}\bot \land
\psi$. Either way, this immediately implies a satisfaction relation for
$\varphi\R\psi$.
For the other case, where $(t,\pi, i) \models \psi \land
\X\alpha_{\varphi\R\psi}$, we know that $i$ is not the last position in $\pi$.
We can then exploit the inductive hypothesis on $(t,\pi, i+1) \models
\alpha_{\varphi\R\psi}$ to derive $(t,\pi,i+1) \models \varphi \R \psi$.
Because $(t,\pi,i) \models \psi$ also holds, we conclude $(t,\pi,i)\models
\varphi \R \psi$.

Finally, the equivalences between state formulae in items (1) and (2) are
immediately derived from the the previous equivalence, using the finite-path
semantics.
\end{proof}

Before proving the next lemma, we state a well-known result by Thomas.
Moreover, recall that \LTL formulae can be equally seen as special cases of
\CTLs or \CTLsf path formulae.

\begin{prop}[{\cite[Proposition 2.2]{thomas1988safety}}]
\label{equivalence regex-cosafe}
  \cosafeLTL defines precisely those languages described by $\omega$-regular
  expressions of the form $K\Sigma^\omega$, where $\Sigma$ is the underlying
  alphabet and $K$ is a star-free regular expression over $\Sigma$.
\end{prop}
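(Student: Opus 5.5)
The statement is Thomas's characterisation: the \cosafeLTL-definable $\omega$-languages are exactly those of the form $K\Sigma^\omega$ with $K$ star-free. I will prove the two inclusions separately. In both, I will use that star-free languages of finite words coincide with \FO-definable ones, i.e.\ with pure past \PLTL evaluated at the last position of a finite word (Kamp/McNaughton--Papert on finite words). I will also use Theorem~\ref{safeLTL equivalent to safePLTL}, which reduces \cosafeLTL to \cosafePLTL, i.e.\ to formulae $\F\psi$ with $\psi$ pure past.

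\emph{From formulae to expressions.} Take a \cosafeLTL formula. By Theorem~\ref{safeLTL equivalent to safePLTL} it is initially equivalent to some $\F\psi$ with $\psi$ pure past. Whether $(w,i)\models\psi$ depends only on the prefix $w(0)\dots w(i)$, since $\psi$ only looks backwards. Let $K$ be the set of finite nonempty words $u$ such that $\psi$ holds at the last position of $u$. Then $w\models\F\psi$ iff some prefix of $w$ lies in $K$, i.e.\ iff $w\in K\Sigma^\omega$. It remains to show $K$ is star-free. The pure past formula $\psi$ translates into an \FO formula $\chi(x)$ over finite words, with $x$ interpreted as the last position. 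Hence $K$ is \FO-definable, and therefore star-free by McNaughton--Papert.

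\emph{From expressions to formulae.} Conversely, let $K$ be star-free. By McNaughton--Papert and Kamp's theorem for finite words, with past modalities evaluated at the last position, there is a pure past \PLTL formula $\psi_K$ such that, for every $\omega$-word $w$ and position $i$, $(w,i)\models\psi_K$ iff $w(0)\dots w(i)\in K$. The empty word needs separate handling: if $\varepsilon\in K$, then $K\Sigma^\omega=\Sigma^\omega$, which is defined by $\top$. Otherwise $\F\psi_K$ defines $K\Sigma^\omega$. This formula lies in \cosafePLTL, so Theorem~\ref{safeLTL equivalent to safePLTL} yields an initially equivalent \cosafeLTL formula.

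\emph{Main obstacle.} The delicate step is the existence of the pure past formula $\psi_K$ for a star-free $K$, which is the past-time version of Kamp's theorem over finite words. I would obtain it by mirror symmetry: reversing words turns star-free languages into star-free languages. Apply future \LTL over finite words (Kamp plus McNaughton--Papert) to the reversal of $K$, evaluated at the first position. Then swap $\X,\U$ for $\Y,\s$. This gives a formula evaluated at the last position of the original word, which is exactly $\psi_K$. I would also check carefully the indexing conventions, namely that positions start at $0$ and that a prefix ending at position $i$ has length $i+1$.
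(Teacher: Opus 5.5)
Your proof is correct. Note, though, that the paper gives no proof of this proposition: it cites it from Thomas. So there is no in-paper argument to match, and what you give is a derivation of the statement from other results the paper already relies on. These are Theorem~\ref{safeLTL equivalent to safePLTL} and the finite-word McNaughton--Papert and Kamp theorems.

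The reduction is clean. Once a co-safety formula has the shape $\F\psi$ with $\psi$ pure past, $K$ is just the set of nonempty prefixes at whose last position $\psi$ holds. Both inclusions then reduce to the finite-word correspondence between star-free languages and pure-past formulae evaluated at the last position. Your mirror argument supplies that correspondence correctly: strong $\X$ at the last position corresponds to $\Y$ at position $0$, and $\U$ corresponds to $\s$. Your separate treatment of $\varepsilon\in K$ is also right.

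The price is that all the genuinely syntactic work is outsourced to Theorem~\ref{safeLTL equivalent to safePLTL}. That work is producing a negation-normal-form future formula that uses only $\X$ and $\U$. Your proof is therefore exactly as self-contained as that theorem. Outside the paper, you would want to check that its proof does not itself go through Thomas's characterisation, or the argument becomes circular at the level of the literature.

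Relatedly, the mirror step you flag as the main obstacle is routine; the real content of the proposition sits in the cited theorem. You can reduce that dependence for one direction. The inclusion from \cosafeLTL to expressions $K\Sigma^\omega$ follows by a direct structural induction: $K_1\cup K_2$ for $\lor$, $K_1\Sigma^*\cap K_2\Sigma^*$ for $\land$, $\Sigma K$ for $\X$, and an \FO-definable prefix condition for $\U$. Only the converse inclusion genuinely needs a past-to-future translation.
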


\begin{lem}
\label{subcase for ctlsf normal form}
For every \LTL formula $\psi$, there is a \cosafeLTL formula $\psi'$ such that,
for every pointed tree $(t,u)$,
\[
  (t,u) \models \E^f \psi
\qquad\text{iff}\qquad
  (t,u) \models \E \psi'
\]
\end{lem}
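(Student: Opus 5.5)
The plan is to reduce the statement to a property of words and then use Proposition~\ref{equivalence regex-cosafe}. Since $\psi$ is an \LTL formula, its only state subformulas are atomic propositions. So whether $(t,\pi,i)\models\psi$ holds depends only on the sequence of labels $t(\pi(i))\,t(\pi(i+1))\cdots$ of the suffix of $\pi$ from position $i$. Here I use that $\psi$ is pure future, so the history-preserving convention (paths start at the root) plays no role.

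\textbf{Step 1: word-level reformulation.} Let
\[
L_\psi=\{w\in\Sigma^+ : w,0\models\psi \text{ under the finite-path semantics}\},
\]
with $\Sigma=2^{\AP}$. Then $(t,u)\models\E^f\psi$ iff some finite path starting at $u$ has label word in $L_\psi$. To justify this, take a finite initial path through $u$ and cut away its part before $u$. Conversely, prepend to a finite path from $u$ the unique path from the root to $u$.

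\textbf{Step 2: from finite paths to infinite paths.} Trees are leafless, so every finite path from $u$ extends to an infinite one. Conversely, every finite prefix of an infinite path starting at $u$ is a finite path starting at $u$. Hence $(t,u)\models\E^f\psi$ iff some infinite path from $u$ has its label word in $L_\psi\Sigma^\omega$. On the other side, $(t,u)\models\E\psi'$ iff some infinite path through $u$ has a suffix from $u$ satisfying $\psi'$, again because $\psi'$ is future-only. So it suffices to find a \cosafeLTL formula $\psi'$ with $\Lang(\psi')=L_\psi\Sigma^\omega$ over $\omega$-words.

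\textbf{Step 3: $L_\psi$ is star-free.} This is the main obstacle: we need that the finite-word semantics of \LTL, which includes the weak next $\tilde{\X}$ implicit in negated $\X$ and the finite unfolding of $\R$, defines only star-free languages. I would first try to translate $\psi$ inductively into an \FO formula $\theta_\psi(x)$ over finite words with order. The clause for $\X$ uses an explicit successor, whose existence fails at the last position. The clauses for $\U$ and $\R$ become bounded quantifications up to the end of the word. Then I would invoke the McNaughton--Papert/Sch\"utzenberger correspondence, which states that \FO-definable finite-word languages are exactly the star-free ones. Alternatively, one can rely on Kamp's theorem over finite words, the finite analogue of Proposition~\ref{PLTL-expressiveness}. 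Emptiness of the word is not an issue, since $L_\psi\subseteq\Sigma^+$ and $\Sigma^+=\Sigma\Sigma^*$ is star-free.

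\textbf{Step 4: conclusion.} With $K$ a star-free expression for $L_\psi$, Proposition~\ref{equivalence regex-cosafe} yields a \cosafeLTL formula $\psi'$ defining $K\Sigma^\omega$. Combining Steps~1 and~2, for every pointed tree $(t,u)$ we get $(t,u)\models\E^f\psi$ iff $(t,u)\models\E\psi'$, as required.
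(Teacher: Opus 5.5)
Your proposal is correct and follows the paper's route: express the finite-path language of $\psi$ as a star-free expression $K$ via the \LTL/\FO/star-free correspondence over finite words, then apply Thomas' result (Proposition~\ref{equivalence regex-cosafe}) to obtain a \cosafeLTL formula for $K\Sigma^\omega$. Your Step~2 is more precise than the paper's version. The paper's prefix argument (``$\pi\in\Lang(K)$ iff $\pi'\in\Lang(K)$'') is loosely worded, whereas you state that some infinite path from $u$ has its label word in $L_\psi\Sigma^\omega$. You also make explicit the reduction from root-anchored paths to paths starting at $u$, which the paper leaves implicit.
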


\begin{proof}
Suppose that $(t,u) \models \E^f \psi$, where $\psi$ is an \LTL formula.
Since $\psi$ is a formula interpreted over finite paths, $\Lang(\psi)$ is a
language of \emph{finite} words.
It is known that \LTL and star-free regular expressions are equivalent
formalisms over finite words (see, e.g., \cite{mcnaughton1971counter}).
Hence, there is a star-free regular expression $K$ such that $\Lang(K) =
\Lang(\psi)$.
We extend this star-free regular expression by appending $\Sigma^\omega$, where
$\Sigma$ is the underlying alphabet, thus obtaining the $\omega$-regular
expression $K\Sigma^\omega$.
By Proposition \ref{equivalence regex-cosafe}, there is a \cosafeLTL formula
$\psi'$ such that $\Lang(K\Sigma^\omega) = \Lang(\psi')$.
Trivially, for every finite path $\pi$ that starts at $u$ and every infinite
path $\pi'$ that has $\pi$ as prefix, we have $\pi\in\Lang(K)$ iff
$\pi'\in\Lang(K)$.
This immediately implies that $t \models \E \psi'$.
The other direction that from $(t,u) \models \E \psi'$ derives $(t,u) \models
\E^f \psi$ is similar.
\end{proof}

Thanks to the above lemma, we can design a \emph{syntactic} restriction of
\CTLs, denoted \PolCTLs, that captures precisely the \emph{semantic} restriction
of \CTLsf.
We report here the grammar for \PolCTLs, by highlighting the differences against
the \CTLs grammar:
\begin{align*}
\varphi ~::= &~
  p
  ~\bigm|~
  \neg \varphi
  ~\bigm|~
  \varphi \lor \varphi
  ~\bigm|~
  {\D^n\varphi}
  ~\bigm|~
  \E \psi
\\
\psi ~::= &~
  \varphi
  ~\bigm|~
  \cancel{\neg \psi}
  ~\bigm|~
  \pmb{\psi \land \psi}
  ~\bigm|~
  \psi \lor \psi
  ~\bigm|~
  \X \psi
  ~\bigm|~
  \psi \U \psi
  ~\bigm|~
  \cancel{\!\psi_1 \R \psi_2\!}
\end{align*}
Note that the path quantifier operator $\E$ of \PolCTLs
is the standard one that quantifies over infinite paths.
Roughly speaking, \PolCTLs restricts the argument of an existential path quantifier to be \emph{cosafe-like}, and dually the argument
of a universal path quantifier to be \emph{safe-like}: more precisely, after
treating nested state subformulae as atomic propositions,
the resulting path formula can be seen as a \cosafeLTL formula in the existential case and as a \safeLTL formula in the universal case.

\begin{thm} \label{normal form CTLsf}
\CTLsf and \PolCTLs are expressively equivalent.
\end{thm}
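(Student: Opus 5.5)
We prove both translations by structural induction on state formulae, with Lemma~\ref{subcase for ctlsf normal form} doing the main work in the direction from \CTLsf to \PolCTLs.

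\emph{From \CTLsf to \PolCTLs.} We translate every \CTLsf state formula $\phi$ into a \PolCTLs state formula $\phi^\star$ that is full state equivalent to it. Atomic propositions, Boolean connectives and $\D^n$ commute with the translation. The interesting case is $\E^f\psi$.

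First, we apply the induction hypothesis to the maximal state subformulae $\varphi_1,\dots,\varphi_k$ of $\psi$. We then replace each $\varphi_j$ by a fresh proposition $p_j$, which gives a pure \LTL formula $\hat\psi$ over $\AP\cup\{p_1,\dots,p_k\}$. Lemma~\ref{cleaning syntactically CTLsf} lets us remove $\R$ if we want, but this is not needed here.

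Next, we apply Lemma~\ref{subcase for ctlsf normal form} to $\hat\psi$. This yields a \cosafeLTL formula $\hat\psi'$ such that $\E^f\hat\psi$ and $\E\hat\psi'$ agree on every pointed tree. The formula $\hat\psi'$ is in negation normal form and uses only $\X$ and $\U$, so it fits the path grammar of \PolCTLs. The only exception is negated atoms $\neg p_j$, which we read as state formulae $\neg\varphi_j^\star$.

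Finally, we substitute $\varphi_j^\star$ for $p_j$. The substitution is sound because the semantics of both $\E^f$ and $\E$ only consult the labels of path nodes. The truth of $p_j$ at a node is determined by the state formula $\varphi_j$, which is full-state evaluated there. Hence we may relabel the tree by the extended alphabet $2^{\AP\cup\{p_j\}}$ and apply the lemma on the relabelled tree, which keeps the same paths and the same positions.

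One point needs checking. Lemma~\ref{subcase for ctlsf normal form} is stated for paths starting at $u$, whereas our semantics uses initial paths visiting $u$. Since $\hat\psi$ is future-only, evaluation at position $i$ depends only on the suffix from $u$. Also, every finite path from $u$ extends to an infinite one because trees are leafless. So the history-preserving semantics makes no difference, and we will say this explicitly.

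\emph{From \PolCTLs to \CTLsf.} Again we proceed by induction, and the only nontrivial case is $\E\psi$ with $\psi$ positive, built from state subformulae using $\land,\lor,\X,\U$. As before, we abstract the state subformulae into propositions to get a \cosafeLTL formula $\hat\psi$.

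By Proposition~\ref{equivalence regex-cosafe}, $\Lang(\hat\psi)=\Lang(K\Sigma^\omega)$ for some star-free $K$. By the equivalence of \LTL and star-free expressions over finite words, there is an \LTL formula $\chi$ whose finite-word language is $\Lang(K)$. On a leafless tree, $\E\hat\psi$ holds at $u$ iff some infinite path from $u$ has a prefix in $K$, iff some finite path from $u$ lies in $K$. This is exactly $\E^f\chi$.

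We could also avoid the detour through $K$. A direct alternative is to keep $\psi$ itself and show that $\E^f\psi\leftrightarrow\E\psi$ for positive $\X,\U$ formulae. This follows by induction, because such formulae are monotone under path extension and are witnessed by a finite prefix. Note that $\X$ on a finite path requires the next position to exist, which a long enough prefix supplies. This direct route is cleaner and avoids the non-constructive appeal to $K$.

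\emph{Main obstacle.} The delicate step is the substitution argument in the first direction. Lemma~\ref{subcase for ctlsf normal form} is stated for pure \LTL formulae, and we must justify lifting it to formulae with nested state subformulae, including negated ones. The lifting has to use the relabelled tree and must respect the initial-path convention for $\E$ versus $\E^f$. We handle this by proving a small auxiliary claim: the state formulae are evaluated pointwise, so the equivalence is preserved under uniform substitution.
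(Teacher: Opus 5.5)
Your proposal is correct and follows the paper's proof. Both directions are structural inductions that abstract nested state subformulae into fresh propositions: Lemma~\ref{subcase for ctlsf normal form} handles the case $\E^f\psi$, and for $\E\psi$ your ``direct route'' (a \cosafeLTL formula is monotone under path extension and witnessed by a finite prefix) is exactly the paper's argument. You also make explicit the initial-path convention, leaflessness and the relabelling step, which the paper leaves implicit. The detour through $K$ is unnecessary; if you keep it, replace $K$ by the star-free language $K\Sigma^+$, since $\varepsilon \in K$ is possible and no \LTL formula over nonempty finite paths can capture the empty word.
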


\begin{proof}
The equivalence-preserving translation from \CTLsf to \PolCTLs is done as usual
by structural induction, where the only interesting case is the translation of a
\CTLsf state formula $\E^f \psi$.
This relies on Lemma \ref{subcase for ctlsf normal form} and on the idea that we
can treat inner state subformulae as atomic propositions.
Specifically, we substitute in $\E^f \psi$ every state subformula $\varphi$ that
begins with $\E^f$ or $\D^n$ with fresh atomic propositions $p_{\varphi}$, thus
obtaining an \LTL formula $\tilde\psi$.
We then apply Lemma \ref{subcase for ctlsf normal form}, which gives a
\cosafeLTL formula $\tilde\psi'$ for which $\E^f \tilde\psi$ and $\E
\tilde\psi'$ are equivalent.
Finally, to get a formula over the original alphabet equivalent to $\E^f \psi$,
we simply replace every atomic proposition $p_{\varphi}$ introduced earlier with
a corresponding \PolCTLs state formula $\varphi'$ equivalent to $\varphi$, which
exists thanks to the inductive hypothesis.

The opposite translation, from \PolCTLs to \CTLsf, also uses structural
induction, where the interesting case is the translation of a \CTLs state
formula of the form $\E \psi$.
As before, we replace in $\psi$ every state subformula $\varphi$ that begins
with $\E$ or $\D$ by a fresh atomic proposition $p_\varphi$.
This results in a \cosafeLTL formula $\tilde\psi$.
Then we observe that a \cosafeLTL formula holds over an infinite path if and
only if it holds over some finite prefix of it.
This means that we can directly claim that $\E^f \tilde\psi$ is equivalent to
$\E \tilde\psi$.
Finally, by replacing the atomic propositions $p_\varphi$ introduced earlier
with corresponding \CTLsf state formulae $\varphi'$ equivalent to $\varphi$,
which exist by the inductive hypothesis, we obtain a \CTLsf formula $\E^f \psi'$
that is equivalent to $\E \psi$.
\end{proof}

For convenience, and specifically to avoid switching between the finite and the
infinite semantics of path quantifiers, we will mostly work with \PolCTLs as a
normal form of \CTLsf.


\subsection{Connections with Weak Chain Logic}
In the previous sections we have seen that \FO on trees is as expressive as
both \PolPCTL and \CTLsf, and that the latter logic enjoys a normal form (Theorem \ref{normal form CTLsf})
that enforces, at the syntactic level, a polarisation property between the types
of path quantifiers and the types of state formulae used in their arguments.
More precisely, the normal form \PolCTLs binds existential path quantification with co-safety properties,
and, dually, universal path quantification with safety properties.
We will now see that, from a broad perspective, this polarisation phenomenon is
not unique, and is shared by other formalisms more expressive than \FO. For example, by translating the previous logics into the \emph{modal $\mu$-calculus},
one obtains formulae in which the least fixpoint operator $\mu$ is paired only with
the $\diamond$ modality (this would allow capturing the \PolPCTL operator $\E\U$),
and similarly the greatest fixpoint operator $\nu$ is paired only with the $\square$ modality.
Interestingly, this polarisation between $\mu$ and $\diamond$ (or dually,
between $\nu$ and $\square$)
was already observed in connection with \emph{Weak Chain Logic}, whose expressive power is
strictly between that of \FO and that of \emph{Weak MSO}.
We shall explain this in more detail, after the preliminary definitions for the
involved logics.

\emph{Monadic Second Order Logic} (\MSO) is the extension of \FO that introduces
monadic second-order variables, enabling quantification over arbitrary sets of elements
of the underlying structure, in our case sets of nodes of trees. We recall that here
trees are infinite, and in particular the quantified sets of nodes can be finite or infinite.
\emph{Weak MSO} (\WMSO) is the restriction of \MSO in which second-order quantification
ranges over finite sets.
Another fragment of \MSO is \emph{Monadic Chain Logic} (\MCL), which
restricts monadic quantification to (finite or infinite) sets of nodes that
are totally ordered by the ancestor relation.
\emph{Weak Chain Logic} (\WCL) further restricts the latter sets to be finite.
Finally, a couple of even more restricted fragments are \emph{Monadic Path Logic} (\MPL)
and \emph{Weak Path Logic} (\WPL), that constrain the quantified sets to be paths and finite paths,
respectively.

To put these logics in context, it is easy to see that \WPL is equivalent to \FO,
since every finite path can be represented by its first and last nodes,
and the ancestor relation is available in \FO.
Similarly, it was shown in \cite[Lemmas 7-8]{moller2003counting} that \MPL is equivalent to
\CTLs.
The relative expressive power of these common fragments of \MSO can then be
summarised as follows:

\begin{prop}[{\cite[Theorem 3.1]{Tho87}}\cite{Tho84}]
  \leavevmode
  \begin{itemize}
    \item \MPL is strictly less expressive than \MCL,
    \item \MCL is strictly less expressive than \MSO,
    \item \FO is strictly less expressive than \WCL,
    \item \WCL is strictly less expressive than \WMSO.
\end{itemize}
\end{prop}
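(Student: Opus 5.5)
The plan is to separate each claimed strict inclusion into a trivial syntactic inclusion and a separating property. The inclusions are immediate from the definitions. Every (finite) path is a (finite) chain, every chain is a set, and every finite chain is a finite set. So each formula of \MPL is literally a formula of \MCL, each formula of \MCL is one of \MSO, \FO embeds into \WCL, and \WCL embeds into \WMSO; in each case the same formula keeps its meaning. The work is therefore in exhibiting separating tree languages, which I would obtain from two kinds of test families: \emph{unary trees} (i.e.\ $\omega$-words) and \emph{wide shallow trees}.

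\textbf{Separations via unary trees.} On trees in which every node has exactly one successor, chains, paths and sets of nodes coincide in a useful way. The only maximal path is the whole word, so \MPL collapses to \FO on $\omega$-words. Meanwhile \MCL and \WCL can quantify over arbitrary (finite) subsets of that single branch, i.e.\ they become \MSO and \WMSO on words.

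For $\MPL<\MCL$, I take the property ``there is an initial path on which $p$ holds exactly at the even depths''. In \MCL, quantify a path $P$ and a chain $X\subseteq P$ containing the root, closed under ``successor of successor within $P$'', and avoiding successors of its own elements. Restricted to unary trees, this is the classical non-aperiodic language $(p\,\neg p)^\omega$. By Proposition~\ref{PLTL-expressiveness} (or the counter-free characterisation of \FO on words), it is not \FO-definable, hence not \MPL-definable.

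The same idea gives $\FO<\WCL$. Use the property ``some node labelled $q$ occurs, and the first one is at even depth''. This only needs a \emph{finite} chain from the root to that $q$-node, marking alternate positions. On unary trees it is again a non-aperiodic language and so escapes \FO.

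\textbf{Separations via wide trees.} For $\MCL<\MSO$ and $\WCL<\WMSO$ I would use the family $t_n$. In $t_n$ the root has exactly $n$ children, all labelled $p$, and below each child hangs an infinite unary branch labelled $\emptyset$. The language $\{t_n : n \text{ even}\}$ is definable in \WMSO, and hence in \MSO. One guesses a finite set containing some of the $p$-children, together with an (\MSO-definable) pairing, or equivalently a finite set $Y$ of children such that the children outside $Y$ are in bijection with $Y$. The bijection is encoded via a second finite set on the branches below the children, chosen at matched depths.

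To show non-definability in \MCL, hence in \WCL, I would play the Ehrenfeucht--Fra\"iss\'e game for chain logic on $t_n$ versus $t_{n+1}$ for $n$ much larger than $2^k$, where $k$ is the number of rounds. The key observation is that any chain meets the set of $p$-children in at most one node. Each chain move is therefore determined by at most one child together with a (possibly infinite) subset of that child's unary branch, plus possibly the root. Duplicator answers by mapping the touched children bijectively and copying the subsets verbatim. Untouched children are indistinguishable, and with more than $k$ of them on both sides the counts cannot be told apart.

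\textbf{Main obstacle.} The step I expect to be the real obstacle is making this game argument rigorous. The encoding of the even-$n$ property in \WMSO needs some care (a marker set at matching depths), but the harder part is the Duplicator strategy. It must preserve the ancestor relation between chosen nodes and membership in previously chosen chains, which themselves may contain the root. I would handle this by maintaining, as the game invariant, a partial bijection between the touched children such that the two subtrees below each matched pair, together with their chosen sets, are identical.
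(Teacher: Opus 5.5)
The paper does not prove this proposition; it imports it from Thomas. So I judge your argument on its own terms. The inclusions are fine once quantifiers are relativised (e.g.\ the path variables of \MPL to the \MCL-definable predicate ``$X$ is a maximal chain''), and your \FO versus \WCL example works. Two of the separations, however, fail as stated.

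\textbf{\MPL versus \MCL.} The property ``$p$ holds \emph{exactly} at the even depths of some initial path'' is \MPL-definable. It is expressed by the \CTLs formula $\E\bigl(p \wedge \G(p \to \X\neg p) \wedge \G(\neg p \to \X p)\bigr)$, and \CTLs is equivalent to \MPL. Accordingly, its restriction to unary trees is the single word $(\{p\}\,\emptyset)^\omega$. That word is aperiodic, being defined by the \LTL formula $p \wedge \G(p \leftrightarrow \X\neg p)$, so it is not the non-aperiodic language you claim. You must drop ``exactly'': ``$p$ holds at every even depth of some initial path'' restricts to $(\{p\}\,\Sigma)^\omega$, which is genuinely not \FO-definable on words. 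Your chain $X$ of even-depth nodes of $P$, together with $\forall x \in X.\, p(x)$, still defines this corrected property in \MCL.

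\textbf{\MCL versus \MSO and \WCL versus \WMSO.} These wide-tree separations cannot be repaired in the same way, because $\{t_n : n \text{ even}\}$ is not \MSO-definable at all. Over unordered trees with only the ancestor relation, \MSO can count isomorphic subtrees below a node only up to a threshold. Your ``matched depths'' encoding of the pairing needs the equal-level relation between nodes on different branches, which \MSO cannot define.

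In fact, your own Duplicator strategy lifts to the full \MSO game. A set move of Spoiler is just a tuple of markings of the branches. Duplicator answers as follows: for each of the finitely many types (of the remaining rank) of marked branches, the number of branches of that type in $t_n$ and in $t_{n+1}$ stays either equal or above a threshold depending only on $k$. This is the Feferman--Vaught/Shelah composition argument for disjoint unions. Hence $t_n$ and $t_{n+1}$ agree on all \MSO sentences of rank $k$ once $n$ is large enough relative to $k$. Your key observation that a chain meets at most one child is therefore not what limits \MCL here. Modular counting of children can never separate anything from \MSO or \WMSO.

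The missing ingredient is a separating property that exploits the branching structure together with (finite) set quantification, rather than parity. Proving that such a property is not definable in \MCL, respectively \WCL, is where the real content of Thomas's theorem lies.
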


It turns out that \WCL, which is a natural fragment of \WMSO above \FO, enjoys a
polarisation property similar to that of \FO.
To explain this polarisation property, we quickly recall some expressiveness
results from the literature and present them
in the simpler setting of infinite, binary trees.
Note that on binary trees counting modalities are useless, and so they are not
introduced in the logics.
The first important reference is \cite[Theorem 1]{Car15}, where it is
shows that the bisimulation invariant fragment of \WCL is equivalent to
\emph{Propositional Dynamic Logic} (\PDL).
On binary trees, this is the same as saying that \WCL is expressively equivalent
to \PDL.
Moreover, an earlier work \cite[Propositions 3.8-3.9]{CV14}
characterized \PDL (and hence \WCL) as a fragment of the modal $\mu$-calculus,
called \emph{modal $\mu_{ca}$-calculus}.
Its syntax is reported below (for succinctness, we omit the standard semantics).

\begin{defi} [Modal $\mu_{ca}$-calculus syntax]
formulae of the \emph{modal $\mu_{ca}$-calculus} are generated by the following (infinite) grammar:
\begin{align*}
\varphi &::=\;
  p
  \;\bigm|\;
  \neg \varphi
  \;\bigm|\;
  \varphi \lor \varphi
  \;\bigm|\;
  \diamond \varphi
   \;\bigm|\;
  \mu X. \psi_{\{X\}}
\\[1ex]
\psi_V &::=\;
  X
  \;\bigm|\;
  \varphi
  \;\bigm|\;
  \psi_V \lor \psi_V
  \;\bigm|\;
  \psi_V \land \pmb {\varphi}
  \;\bigm|\;
  \diamond \psi_V
   \;\bigm|\;
  \mu Y. \psi_{V\cup\{Y\}}
\end{align*}
where $p$ ranges over a finite set $\mathrm{AP}$ of atomic propositions and $V$
denotes a finite set of variables (e.g.~$V=\{X\}$).
Note that the formulae $\varphi$ do not contain free variables, and every
conjunction inside a least fixpoint operator involves at least one such formula
$\varphi$, i.e., conjunctions are linear.
\end{defi}

Here we do not get involved in the technicalities of the definition, but it
should be clear by inspecting the above grammar that it is not possible to
generate a formula that contains a negated $\diamond$ modality (i.e., ~$\square)$
between any variable $X$ and its bounding least fixpoint operator $\mu X$.
Due to the equivalence of modal $\mu_{ca}$-calculus and \WCL over binary trees,
this suggests that monadic quantification restricted to \emph{finite}, totally
ordered sets features a peculiar interaction between fixpoints and modalities,
namely, a sort of polarisation property.



\section{Graded Tree Automata}
In this section we recall the definitions for a general model of automaton
running on infinite, unranked trees, called Graded Tree Automaton, and we
specialise this notion into different classes of automata that we are going to
work with later.
The model is based on classical alternating B\"uchi tree automata \cite[Section
2.2]{wilke2001alternating}, enhanced with graded modalities as done in, e.g.,
\cite[Section 3]{kupferman2002complexity}.
For simplicity, we will start from a one-way automaton model, where transitions
propagate states from node to children, and then we will progressively restrict
or enhance the model so as to match the expressive power of the different
fragments of \PCTLs.
Since every tree automaton introduced here is alternating, B\"uchi, and graded,
we will not repeat these properties every time or include them in the names of
the modified models.
In particular, we will drop the term `graded' early on, instead introducing
qualifications for more important semantic conditions.

\subsection{Graded Tree Automata}
\label{graded tree}

Let $X$ be a set.
We denote by $\mathcal{B}^+(X)$ the set of \emph{positive Boolean formulae} over
$X$, which are obtained from conjunctions and disjunctions of the atomic
formulae $\top$ (true), $\bot$ (false), and $x$, for all $x\in X$.
A \emph{Graded Tree Automaton} (\GTA) is a tuple $\mathcal{A} = \langle Q,
\Sigma, \delta, q_I, F\rangle$, where $Q$ is a finite set of states, $\Sigma$ is
a finite alphabet, $\delta: Q \times \Sigma \rightarrow
\mathcal{B}^+(\{\diamond_k, \square_k\}_{k \in \mathbb N} \times Q)$ is the
transition function, $q_I$ is the initial state, and $F \subseteq Q$ is the set
of accepting states.
The function $\delta$ describes a one-way (or downward) transition that, based
on the current state $q$ of the automaton and the symbol $a$ read at the current
node of the input tree, sends a set of target states to the possible children of
the current node, so as to satisfy the positive Boolean formula $\delta(q,a)$,
where each atom $(\diamond_k,q')$ (resp., $(\square_k, q')$) of the formula is
considered as satisfied if state $q'$ is sent to at least $k$ children (resp.,
if $q'$ is sent to all but $k-1$ children).
In particular, the atom $(\diamond_1,q')$ (often abbreviated $(\diamond,q')$)
requires $q'$ to be sent to at least one child, and can be used, for instance,
to check a formula like $\E\X\phi$; similarly, $(\square_1,q')$ (often
abbreviated $(\square,q')$) requires $q'$ to be sent to all children, and can be
used to check a formula like $\A\X\phi$.
Of course, the atoms $(\diamond_k,q')$ and $(\square_k,q')$ for arbitrary $k$
are often used to simulate the counting modalities.
Given a \GTA $\mathcal{A} = \langle Q, \Sigma, \delta, q_I, F\rangle$ and a
state $q \in Q$, we denote by $\mathcal{A}^q$ the tuple $\langle Q, \Sigma,
\delta, q, F\rangle$, that is, the automaton obtained from $\mathcal{A}$ using
$q$ instead of $q_I$ as the initial state.
To ease some translations later, we also introduce a variant of the previous
notation: given a positive Boolean formula $\theta \in
\mathcal{B}^+(\{\diamond_k, \square_k, -1\} \times Q)$, we denote by
$\mathcal{A}^\theta$ the automaton obtained from $\mathcal{A}$ by adding a fresh
initial state $q'_I$ and by defining $\delta(q'_I, \sigma) = \theta$ for every
$\sigma \in \Sigma$.

%
Let $\mathcal{A} = \langle Q, \Sigma, \delta, q_I, F\rangle$ be a GTA
and $t$ an input tree, labeled over $\Sigma$.
To define a run of $\mathcal{A}$ on $t$, we reuse the notion of tree, now
labeled over the infinite alphabet $Q\times\Dom(t)$.
Note that such a tree labels every node $u$ by a pair $(q,v)$, where $q$
is a state of $\mathcal{A}$ and $v$ is a node in the input tree $t$.
Intuitively, the labeling $(q,v)$ represents a copy of the
automaton $\mathcal{A}$ reading node $v$ with state $q$.
Let $r$ be a tree labeled over $Q\times\Dom(t)$ (i.e.~a potential run of
$\mathcal{A}$).
We formally define the satisfaction relation between a node $u$ of $r$
and a positive Boolean formula of $\mathcal{B}^+(\{\diamond_k, \square_k\}_{k \in \mathbb N} \times Q)$,
inductively as follows:
\begin{itemize}
    \item $u \models (\diamond_k, q')$ if $u$ has a label of the form $(q,v)$ and at least $k$ children $u_1,\dots,u_k$
          with labels $(q',v_1),\dots,(q',v_k)$,
          where $v_1,\dots,v_k$ are distinct children of the node $v$ in $t$;
    \item $u \models (\square_k, q')$ if $u$ has a label of the form $(q,v)$, and for all but $k-1$ children $v'$ of $v$ in $t$,
          there is a child $u'$ of $u$ in $r$ with label $(q',v')$;
    \item $u \models \phi_1 \land \phi_2$ if $u\models \phi_1$ and $u\models \phi_2$;
    \item $u \models \phi_1 \lor \phi_2$ if $u\models \phi_1$ or $u\models \phi_2$.
\end{itemize}

We are now ready to define the notion of run of $\mathcal{A}$ on $t$.
As a matter of fact, because later we will enhance the model with two-way transitions,
which can propagate states from a node to some children and possibly to the parent (if exists),
it is convenient to define our runs starting from an arbitrary node of the input tree, rather
than from the root.
Let $s \in \Dom(t)$.
A $(Q \times \Dom(t))$-tree $r$ is a \emph{run} of $\mathcal{A}$ on $t$
starting from $s$ (\emph{$(t,s)$-run} for short) if it satisfies the following conditions:
\begin{enumerate}
\item the root of $r$ is labeled by $(q_I,s)$, i.e.~the pair consisting of the
      initial state of $\mathcal{A}$ and the starting node in $t$;
\item for every node $u$ of $r$ with label $(q,v)$, if $\sigma$ is the label of $v$ in $t$,
      then $u \models \delta(q,\sigma)$.
\end{enumerate}
Concerning the above condition (2), we observe that the definition of the satisfaction
relation $\models$ between nodes and positive Boolean formulae requires the
existence
of a minimal set of children of each node $u$ of a run. On the other hand, it also
allows a node $u$ to contain spurious children that are not strictly needed to
satisfy condition (2). In particular, the property of being
a run of $t$ is preserved under the addition of subtrees that are themselves runs on
$t$, possibly starting at different nodes.
This turns out to be useful in order to avoid dealing with runs with leaves,
and thus respecting our adopted definition of tree: indeed, even in the presence
of a transition of the form $\delta(q,\sigma)=\top$, for which it is tempting to allow
any $(q,v)$-labeled node $u$ of a run to be a leaf, we can still assume that
there
is an infinite path from $u$, e.g.~one that repeats copies of $u$ itself
(note every such copy of $u$ would still satisfy condition (2) above).
In addition, we highlight an equivalent characterisation of runs, which
will come handy later.
\begin{remark} \label{rem:coinductiverun}
  A tree $r$ is a $(t,s)$-run (resp., an accepting $(t,s)$-run) of a \GTA $\mathcal A$
  if and only if
  \begin{enumerate}
  \item $\epsilon_r \models \delta(q_I, \sigma)$, where $\epsilon_r$ is the
    root of $r$, with label $(q_I,s)$, and $\sigma$ is the label of $s$ in $t$,
    and
  \item for every child $u$ of $\epsilon_r$ with label $(q, v)$, the subtree
        of $r$ rooted at $u$ is a $(t,v)$-run of $\mathcal A^q$.
  \end{enumerate}
\end{remark}

%

%
A run on $t$ is called \emph{initial} when its starting node is the root of $t$.
It is called \emph{accepting} when all of its infinite paths satisfy the B\"uchi
condition, namely, every infinite path has infinitely many nodes labeled by
some
state in $F$.
We say that $t$ is \emph{accepted} by $\mathcal{A}$ if $\mathcal{A}$ admits an
initial, accepting run on $t$.
We denote by $\Lang(\mathcal{A})$ the set of trees accepted by $\mathcal{A}$.
We observe that the property of being an accepting run of $t$ is also preserved
under the addition of subtrees that are themselves accepting runs on $t$,
possibly starting at different nodes.
In a similar way as discussed earlier, this is useful for avoiding accepting
runs with leaves.
More precisely, we assume, without loss of generality, that every automaton
admits at least one accepting run $r_\top$ (not necessarily initial), and that
every $(q,v)$-labeled node of a run that satisfies $\delta(q,\sigma)=\top$ has
an occurrence of $r_\top$ as a subtree.

\subsection{Hesitant Tree Automata}
The class of \emph{Hesitant Tree Automata} was introduced in \cite[Section
5.1]{KVW00} and here we adapt it for our purposes.
A \emph{Hesitant Tree Automaton} (\HTA) is defined as a suitable restriction of
a \GTA $\mathcal{A} = \langle Q, \Sigma, \delta, q_I, F\rangle$.
First, it must admit a partition of the state set $Q$ into disjoint non-empty
sets $Q_1, \ldots, Q_n$, that we call \emph{components}, and a total order
$\leq$ on these components such that, for every $q \in Q_i$ and $q'\in Q_j$, if
$q'$ appears in some atom of $\delta(q, \sigma)$ for some $\sigma$, then $j \le
i$.
Intuitively, transitions can only stay in a component or move to a lower
component. For brevity, we call \emph{$Q_i$-atom} any pair of the form
$(\diamond_k,q)$ or $(\square_k,q)$ whose state $q$ belongs to $Q_i$.
In addition, each component $Q_i$ must be of precisely one of the following
three types:
\begin{itemize}
\item
  \emph{transient}, when there is no $Q_i$-atom in $\delta(q,\sigma)$, for any
  $q\in Q_i$ and $\sigma\in\Sigma$,
\item
  \emph{existential}, when, for all $q\in Q_i$ and $\sigma\in\Sigma$, all
  $Q_i$-atoms in $\delta(q,\sigma)$ are of the form $(\diamond,q')$, and there
  is at most one $Q_i$-atom in each clause of the disjunctive normal form of
  $\delta(q,\sigma)$;
\item
  \emph{universal}, when, for all $q\in Q_i$ and $\sigma\in\Sigma$, all
  $Q_i$-atoms in $\delta(q,\sigma)$ are of the form $(\square,q')$, and there
  is at most one $Q_i$-atom in each clause of the conjunctive normal form of
  $\delta(q,\sigma)$.
\end{itemize}
Note that in the above definition, all counting modalities $\diamond_k$
(resp.~$\square_k$) associated with atoms that stay in the same existential
(resp.~universal) component have cardinality $k=1$.
This intuitively means that a \HTA can traverse at most one node-to-child
relationship while remaining in the same existential component. Moreover, by the
above restrictions, every infinite path of a run of a \HTA eventually gets stuck
in an existential or universal component.
All together, these properties let us capture the semantics of existential and
universal path quantification of tree temporal logics by allowing the \HTA to
guess and check a single path in the input tree while remaining in the same
component.


\subsection{Polarised Hesitant Tree Automata}
%
%
A \emph{Polarised Hesitant Tree Automaton} (\PHTA) is an \HTA where every state
$q$ of an existential component is such that $q \notin F$ and, dually, every
state $q$ of a universal component is such that $q \in F$ (for states of
transient components, it does not matter whether they belong or not to $F$, as
they can occur at most once along an infinite path of a run).
The \emph{polarised} condition, originally introduced in the conference
version of this work \cite{benerecetti2025automaton}, is a stronger restriction
than the classical \emph{weak} condition \cite{muller1986alternating}.
The latter only requires each component of an automaton to contain exclusively
accepting states or exclusively rejecting states, without imposing any
constraint on the type of component or on the structural properties of its
transitions.
As the name suggests, the polarised restriction simulates the polarisation
phenomenon observed in 
\PolCTLs.



\section{Automaton-based characterisation of \texorpdfstring{\PolPCTL}{PCTL+-}}
\label{sec:automata-for-pctlplusminus}

In this section, we first introduce a further restriction of \PHTA, and then an
enhancement of their head movement capabilities. We finally show that the
resulting class of automata, which we call \emph{two-way linear \PHTA}, is
equivalent to \PolPCTL.

The restriction is simple: we require components of the hesitant partition of a \PHTA to be \emph{singletons}. We call such an \PHTA \emph{linear}.

The head movement enhancement is instead less trivial. We allow for \emph{two-way} head movements along the input tree, meaning that the automaton can decide to visit the parent of the currently read node, in addition to its children.
This requires an adaptation of the transition function $\delta$.
Formally, a \emph{two-way \PHTA} is a tuple $\langle Q, \Sigma, \delta, q_I, F
\rangle$, where $Q$, $\Sigma$, $q_I$, and $F$ are, as in the definition of \GTA,
a finite set of states, a finite alphabet, the initial state, and the set of
accepting states, respectively.
Instead, the transition function $\delta: Q \times \Sigma \times \{\isroot,
\isnotroot\}
\rightarrow \mathcal{B}^+(\{\diamond_k, \square_k, \uptran \} \times Q)$
presents two notable differences.
On the one hand, the transition function imposes that the evolution of a run
also depends on whether the head is at the root of the input or not, besides the
symbol read by the head and the current state.
On the other hand, to model the ability of moving the head to the father of the
current node, new atoms of the form $(\uptran, q)$ are now allowed.
Notice that, since every node of the tree (except for the root) has precisely
one parent, the distinction between existential and universal atoms becomes
immaterial.

We define the semantics of an atom of the form $(\uptran, q)$ as follows, with
respect to a node $u$ labeled with $(q, s)$, for some $q \in Q$ and $s \in
\Dom(t)$ and a $\Sigma$-tree $t$:

\begin{itemize}
\item $u \models (\uptran, q')$ if $u$ has at least a child with label $(q',s')$
  where $s'$ is the parent of $s$ in $t$.
\end{itemize}

We consequently adapt the definition of $(t,s)$-run for the case of a two-way linear \PHTA $\mathcal{A}$.
A $(Q \times \Dom(t))$-tree $r$ is a \emph{run} of $\mathcal{A}$ on $t$
starting from $s$ (\emph{$(t,s)$-run} for short) if it satisfies the following conditions:
\begin{enumerate}
\item the root of $r$ is labeled by $(q_I,s)$, i.e.~the pair consisting of the
  initial state of $\mathcal{A}$ and the starting node in $t$;
\item for every node $u$ of $r$ with label $(q,v)$, it holds that
  $u \models \delta(q,\sigma,\rootvar)$, where $\sigma$ is the label of $v$ in
  $t$ and $\rootvar=\isroot$ (resp., $\rootvar=\isnotroot$) if $v$ is (resp., is
  not) the root of $t$.
\end{enumerate}
Remark~\ref{rem:coinductiverun} is adapted to the new context accordingly. 
Due to the introduction of atoms of the form $(\uptran, q)$, we need to
introduce a new hesitant type.
We say that a component $\{ q \}$ in the hesitant partition of a \PHTA automaton
$\mathcal{A}$ is:
\begin{itemize}
\item \emph{upward}, when, for all $\sigma\in\Sigma$, $q$ can appear in
  $\delta(q,\sigma)$ only in the form $(\uptran,q)$.
\end{itemize}

Notice that the distinction between accepting and rejecting states within upward components is immaterial, since, by construction, a run cannot get stuck in an upward component, as this would require a node to have an infinite sequence of ancestors.
We claim that the class of \emph{two-way linear \PHTA} is as expressive as
\PolPCTL, and consequently as expressive as \FO.
This section is devoted to proving such a claim.

\begin{thm} \label{PHTA-POLPCTL equivalence}
    Given a two-way linear \PHTA $\mathcal{A} = \langle Q, \Sigma, \delta, q_I, F \rangle$ over $\Sigma = 2^\mathrm{AP}$, there is a \PolPCTL formula $\varphi_\mathcal{A}$ such that $\Lang(\mathcal{A}) = \Lang(\varphi_\mathcal{A})$. 
\end{thm}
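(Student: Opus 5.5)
The plan is to assign to every state $q\in Q$ a \PolPCTL formula $\varphi_q$ such that, for every pointed tree $(t,s)$, $\mathcal{A}^q$ admits an accepting $(t,s)$-run iff $(t,s)\models\varphi_q$. We then take $\varphi_\mathcal{A}=\varphi_{q_I}$, evaluated at the root. The construction is by induction along the total order of the (singleton) hesitant components: when we treat $\{q\}$, formulae $\varphi_{q'}$ for all strictly lower states $q'$ are already available. For a positive Boolean formula $\theta$ over atoms with states $q'\neq q$, the translation $\tau(\theta)$ is compositional.
\begin{itemize}
\item $(\diamond_k,q')$ becomes $\D^k\varphi_{q'}$.
\item $(\square_k,q')$ becomes $\neg\D^k\neg\varphi_{q'}$, i.e.\ fewer than $k$ children violate $\varphi_{q'}$.
\item $(\uptran,q')$ becomes $\E\Y\varphi_{q'}$; this is well defined because we always evaluate on initial paths.
\end{itemize}
Conjunctions and disjunctions are kept. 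The root/non-root case split is handled by the formula $\mathit{r}=\neg\E\Y\top$: for each letter and each root status we write
\[
\Theta_q=\bigvee_{\sigma\in\Sigma}\Bigl(\sigma\wedge\bigl((\mathit{r}\wedge\tau(\delta(q,\sigma,\isroot)))\vee(\neg\mathit{r}\wedge\tau(\delta(q,\sigma,\isnotroot)))\bigr)\Bigr),
\]
where $\sigma$ abbreviates the conjunction of literals describing the letter. Correctness for transient components then follows from Remark~\ref{rem:coinductiverun}.

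For a non-transient component $\{q\}$ the self-loop is the only source of recursion, and we isolate it by rewriting each $\delta(q,\sigma,\rho)$ in normal form.
\begin{itemize}
\item \emph{Existential} $q$. Each DNF clause has at most one $(\diamond,q)$, so $\delta(q,\sigma,\rho)\equiv\beta_{\sigma,\rho}\vee(\gamma_{\sigma,\rho}\wedge(\diamond,q))$ with $\beta,\gamma$ free of $q$. Letting $B$ and $C$ be the corresponding letter- and root-case disjunctions of $\tau(\beta)$ and $\tau(\gamma)$, we set $\varphi_q=\E\, C\U B$. Since $q\notin F$, a run path staying forever in $q$ is rejecting, so the least-fixpoint reading is the right one.
\item \emph{Universal} $q$. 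Dually, the CNF gives $\delta\equiv\beta\wedge(\gamma\vee(\square,q))$. Since $q\in F$ forces the greatest fixpoint, we set $\varphi_q=\neg\E\,(\neg\gamma')\U(\neg\beta')$, i.e.\ $\A(\gamma'\R\beta')$, which lies in \PolPCTL.
\item \emph{Upward} $q$. The self-atom is only $(\uptran,q)$. Staying forever is impossible, so every recursion terminates, and we get $\varphi_q=\E\,C\s B$ with $B,C$ built analogously; at the root the $\uptran$-disjunct is unsatisfiable, which matches the semantics of $\s$ on the initial segment. Note that in the upward case the disjunctive decomposition is only guaranteed if the self-atom occurs positively in one ``slot''. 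If $\delta$ contains several occurrences of $(\uptran,q)$, we merge them using $(\uptran,q)\wedge(\uptran,q)\equiv(\uptran,q)$ and distributivity, since going up is deterministic; this yields the form $\beta\vee(\gamma\wedge(\uptran,q))$.
\end{itemize}

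The main obstacle is proving the two fixpoint cases correct, in both directions, given that a run may contain spurious subtrees and that $\diamond$/$\square$ with $k=1$ interacts with the children counts. For the existential case, starting from an accepting run, we follow the unique chain of $(\diamond,q)$-children. Because of the B\"uchi condition and $q\notin F$, this chain must exit through a $\beta$-clause after finitely many steps; by induction on the lower components, the visited nodes satisfy $C$ and the exit node satisfies $B$, yielding a witnessing path, extended arbitrarily to an infinite initial one. Conversely, from such a path we assemble a run by gluing the accepting subruns given by the inductive hypothesis (Remark~\ref{rem:coinductiverun}). For the universal case we argue by contraposition: a failing $\A\R$ yields a finite path to a node falsifying $\beta$ while $\gamma$ fails everywhere before it, so no run exists. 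In the other direction, we build the run coinductively, sending $q$ to all children while $\gamma$ fails. The paths that stay in $q$ forever are accepting since $q\in F$, and the remaining paths are accepting by the induction hypothesis.
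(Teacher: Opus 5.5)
Your proposal is correct and follows essentially the same route as the paper. You induct along the order of the singleton components, translate transient states as letter- and root-case Boolean combinations, and translate existential, universal and upward states as $\E(C\U B)$, an $\A$-release formula and $\E(C\s B)$, with polarisation justifying the least- and greatest-fixpoint readings; correctness comes from a coinductive run construction in one direction and from chasing the self-loop chain (by contradiction in the universal case) in the other. The differences are cosmetic: you split universal transitions via CNF as $\beta\wedge(\gamma\vee(\square,q))$ and use $\A(\gamma\R\beta)$, where the paper uses the DNF split $((\square,q)\wedge\alpha)\vee\alpha'$ and the equivalent $\A(\beta_q\W\beta'_q)$; your root test $\neg\E\Y\top$ is a cleaner state formula than the paper's bare $\Y\top$; and the $(\diamond,q)$-chain in an arbitrary run need not be unique, but any choice of chain works.
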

\begin{proof}
  Working with a state-set partitioned into singletons, we follow the approach of
  \cite[Section 7]{loding2000alternating} and \cite[Section
  3]{boker2018automaton}, who dealt with a similar problem.

  Since $\mathcal{A}$ is linear, some of the restrictions imposed on the
  hesitant types become redundant.
  In particular, if $\{ q \}$ is an existential component, then $\delta(q,
  \sigma, \rootvar)$ contains $q$ only in atoms of the form $(\diamond, q)$, for
  all $\sigma \in \Sigma$ and $\rootvar \in \{ \isroot, \isnotroot \}$;
  therefore, the requirement that $q$ occurs in at most one atom in each clause
  of the disjunctive normal form of $\delta(q, \sigma, \rootvar)$ becomes
  redundant, since $(\diamond, q_i) \land (\diamond, q_i)$ is equivalent to
  $(\diamond, q_i)$.
  Similarly, if $\{ q \}$ is a universal component, then $q$ only occurs in
  atoms of the form $(\square, q)$; therefore, it is of no use to impose the
  requirement that $q$ occurs in at most one atom in each clause of the
  conjunctive normal form of $\delta(q, \sigma, \rootvar)$.
  Consequently, we can assume, without loss of generality, that
  $\delta(q,\sigma, \rootvar)$ contains \emph{at most one occurrence} of an atom
  in which $q$ appears, because of the following:
  \begin{itemize}

  \item if $q$ is transient, then $q$ never appears in $\delta(q, \sigma,
    \rootvar)$ by construction;

  \item if $q$ is existential, then the disjunctive normal form of
    $\delta(q, \sigma, \rootvar)$ is (equivalent to) a formula of the form
    $((\diamond,q) \land \alpha_1) \lor ... \lor ((\diamond,q) \land \alpha_k)
    \lor \alpha_{k+1} ... \lor \alpha_m$, where, for each $i \in [1,m]$,
    $\alpha_i$ is a conjunction of atoms in which $q$ does not appear.
    This formula can be rewritten as $((\diamond, q) \land (\alpha_1 \lor
    ... \lor \alpha_k)) \lor (\alpha_{k+1} \lor ... \lor \alpha_m)$;

  \item if $q$ is universal, then the disjunctive normal form of $\delta(q,
    \sigma, \rootvar)$ is (equivalent to) a formula of the form $((\square,q)
    \land \alpha_1) \lor ... \lor ((\square,q) \land \alpha_k) \lor \alpha_{k+1}
    ... \lor \alpha_m$, where, for each $i \in [1,m]$, $\alpha_i$ is a
    conjunction of atoms in which $q$ does not appear.
    This formula can be rewritten as ($(\square, q) \land (\alpha_1 \lor
    ... \lor \alpha_k)) \lor (\alpha_{k+1} \lor ... \lor \alpha_m)$;

  \item if $q$ is upward, then one can apply to the disjunctive normal form of
    $\delta(q, \sigma, \rootvar)$ the same rewriting described for the
    existential case, and obtain a formula in the form $((\Uparrow, q) \land
    (\alpha_1 \lor ... \lor \alpha_k)) \lor (\alpha_{k+1} \lor ... \lor
    \alpha_m)$, where, for each $i \in [1,m]$, $\alpha_i$ is a conjunction of
    atoms in which $q$ does not appear.
  \end{itemize}
  Thus, $\delta(q, \sigma, \rootvar)$ is in a very precise form:
\[
      \delta(q, \sigma, \rootvar)=
  \begin{cases}
    \alpha'_{q,\sigma, \rootvar} & \text{if } \{q\} \text{ is transient} \ \\
    ((\diamond, q) \land \alpha_{q,\sigma, \rootvar}) \lor \alpha'_{q,\sigma,
    \rootvar} & \text{if } \{q\} \text{ is existential} \\

    ((\square, q) \land \alpha_{q,\sigma, \rootvar}) \lor \alpha'_{q,\sigma,
    \rootvar} & \text{if } \{q\} \text{ is universal} \\
    ((\uptran, q) \land \alpha_{q,\sigma, \rootvar}) \lor \alpha'_{q,\sigma,
    \rootvar} & \text{if } \{q\} \text{ is upward}
  \end{cases}
\]
where $\alpha_{q,\sigma, \rootvar}$ and $\alpha'_{q,\sigma, \rootvar}$ are
formulae in disjunctive normal form not containing atoms in which $q$ occurs:
they only contain states belonging to lower partitions.
It is important to observe that if a $(q,s)$-labeled node of a run satisfies
$\alpha'_{q,\sigma,\rootvar}$, where $\sigma$ is the label of node $s$ in the
input tree $t$ and $\rootvar = \isroot$ (resp., $\rootvar = \isnotroot$) if $s$
is (resp., is not) the root of $t$, then the automaton can leave (and never go
back to) the component $\{q\}$ when going down along the run-tree; conversely,
the sole satisfaction of $\alpha_{q,\sigma, \rootvar}$ (i.e., the node satisfies
$\alpha_{q,\sigma, \rootvar}$ but not $\alpha'_{q,\sigma,\rootvar}$) implies
that the automaton must remain in the component.

  \medskip
  \noindent\textbf{Definition of the translation function $f$.}
  Towards the definition of the translation function $f$ from automata to
  formulae of \PolPCTL, we define the following abbreviation for every $\sigma
  \in \Sigma$
  \[
    \psi_\sigma = \bigwedge_{p \in \sigma} p \land \bigwedge_{p \notin \sigma}
    \neg p.
  \]
  Clearly, $\psi_\sigma$ is true on a node $s$ of an input tree $t$ if and only
  if $\sigma$ is the label of $s$.

  Our translation function $f$ associates a formula of \PolPCTL to every state $q
  \in Q$ of an automaton $\mathcal {A}$ and every formula $\theta \in
  \mathcal{B}^+(\{\diamond_k, \square_k, \uptran\} \times Q)$.
  Formally, $f$ is defined as:

  \begin{itemize}

  \item for all $\theta \in
    \mathcal{B}^+(\{\diamond_k, \square_k, \uptran\} \times Q)$,

    \[f(\theta)=
      \begin{cases}
        \top & \text{if } \theta = \top \\
        \bot & \text{if } \theta = \bot \\
        f(\theta_1) \land f(\theta_2) & \text{if } \theta = \theta_1 \land
                                        \theta_2 \\
        f(\theta_1) \lor f(\theta_2) & \text{if } \theta = \theta_1 \lor
                                       \theta_2 \\
        \D^k f(q) & \text{if } \theta = (\diamond_k,q) \\
        \neg\D^k\neg f(q) & \text{if } \theta = (\square_k,q) \\
        \E \Y f(q) & \text{if } \theta = (\uptran,q)
      \end{cases}
    \]

  \item for all $q \in Q$,

    \[f(q)=
      \begin{cases}
        \beta'_{q} & \text{if } \{q\} \text{ is transient} \\
        \E(\beta_{q} \U \beta'_{q}) & \text{if } \{q\} \text{ is existential} \\
        \A(\beta_{q} \W \beta'_{q}) & \text{if } \{q\} \text{ is universal} \\
        \E(\beta_{q} \s \beta'_{q}) & \text{if } \{q\} \text{ is upward}
      \end{cases}
    \]
    where
    \begin{align*}
      \beta_{q} = \quad & \Y\top \rightarrow \bigwedge\nolimits_{\sigma
                          \in \Sigma} (\psi_\sigma \rightarrow
                          f(\alpha_{q,\sigma,\isnotroot})) \\
      \wedge \neg & \Y\top \rightarrow \bigwedge\nolimits_{\sigma \in \Sigma}
                    (\psi_\sigma \rightarrow f(\alpha_{q,\sigma,\isroot}))
    \end{align*}
    and
    \begin{align*}
      \beta'_{q} = \quad & \Y\top \rightarrow \bigwedge\nolimits_{\sigma \in
                           \Sigma} (\psi_\sigma \rightarrow
                           f(\alpha'_{q,\sigma,\isnotroot})) \\
      \wedge \neg & \Y\top \rightarrow \bigwedge\nolimits_{\sigma \in
                    \Sigma} (\psi_\sigma \rightarrow
                    f(\alpha'_{q,\sigma,\isroot}))
    \end{align*}
  \end{itemize}
  Note that, for all $\Sigma$-tree $t$ and nodes $s \in \Dom(t)$, we have that
  $(t,s) \models \beta_q$ (resp., $(t,s) \models \beta'_q$) if and only if
  $(t,s) \models f(\alpha_{q,\sigma,\rootvar})$ (resp., $(t,s) \models
  f(\alpha'_{q,\sigma,\rootvar})$), where $\sigma$ is the label of $s$, and
  $\rootvar = \isroot$ if $s$ is the root of $t$, while $\rootvar = \isnotroot$
  if $s$ is not the root of $t$.
  Additionally, observe that formula $\A(\beta_{q} \W \beta'_{q})$ can be
  rewritten as $\neg\E(\neg \beta'_{q} \U (\neg\beta_{q} \land \neg
  \beta'_{q}))$, to match the syntactic definition of \PolPCTL given in
  Definition~\ref{def:syntaxPolPCTL}.
  Finally, the translation of automaton $\mathcal{A}$ is the formula
  $\varphi_\mathcal{A} = f(q_I)$.

  Before showing soundness of our translation, we show that the above definition
  is based on a well-founded induction.
  Let $\rankq = i$ for all $q \in Q_i$ and $\rank\theta$ be the highest rank
  \rankq associated with states $q$ occurring in some atom of formula $\theta$,
  i.e., $\rank\theta = \max{\{ \rankq \mid q \text{ occurs in some atom of }
    \theta \}}$.
  Consider the strict partial order $\prec$ (based on function \ranksymbol) over
  the domain of $f$, namely $Q \cup \mathcal{B}^+(\{\diamond_k, \square_k,
  \uptran\} \times Q)$, defined as follows:

  \begin{itemize}
  \item for all $q, q' \in Q$, $q \prec q'$ if and only if $\rankq <
    \rankqprime$;

  \item for all $\theta, \theta' \in \mathcal{B}^+(\{\diamond_k, \square_k,
    \uptran\} \times Q)$, $\theta \prec \theta'$ if and only if either
    $\ranktheta < \rankthetaprime$ or $\ranktheta = \rankthetaprime$ and
    $\theta$ is a proper subformula of $\theta'$;

  \item for all $q \in Q$ and $\theta \in \mathcal{B}^+(\{\diamond_k, \square_k,
    \uptran\} \times Q)$
    \begin{itemize}
    \item $q \prec \theta$ if and only if $\rankq \leq \ranktheta$
      and
    \item $\theta \prec q$ if and only if $\ranktheta < \rankq$.
    \end{itemize}

  \end{itemize}
  Clearly, $\prec$ is a well-founded strict partial order.
  Consequently, $f$ is defined by well-founded recursion on $\prec$, since all
  recursive calls are to $\prec$-smaller arguments.
  In particular, observe that $\alpha_{q,\sigma,\rootvar} \prec q$ and
  $\alpha'_{q,\sigma,\rootvar} \prec q$, for all $q \in Q$, $\sigma \in \Sigma$,
  and $\rootvar \in \{ \isroot, \isnotroot \}$.

   \medskip
   \noindent\textbf{Correctness of the translation
     ($\Lang(\varphi_{\mathcal{A}}) \subseteq \Lang(\mathcal{A})$).}
   Let $t$ be a $\Sigma$-tree such that $t \models \varphi_{\mathcal{A}}$
   (recall that $\varphi_{\mathcal{A}}=f(q_I)$).
   We build an initial, accepting run of $\mathcal A$ on $t$.
   To this end, we prove, by induction on the partial order $\prec$ defined
   above, that for all $s \in \Dom(t)$ and $x \in Q \cup
   \mathcal{B}^+(\{\diamond_k, \square_k, \uptran\} \times Q)$ (i.e., $x$ is
   either a state or a formula expressing a transition of the automaton) such
   that $(t,s) \models f(x)$, it is possible to extend every node $v$, labeled
   by $(q',s)$ for a generic $q' \in Q$ (in fact, the role of $q'$ is
   irrelevant), with finitely many children $v_1, \ldots, v_k$, labeled by,
   respectively, $(q_1, s_1), \ldots, (q_k, s_k)$, where $q_i \in Q$, $s_i \in
   \Dom(t)$, and $(t,s_i) \models f(q_i)$ for all $i \in \{ 1, \ldots, k \}$, so
   that:
   \begin{itemize}
   \item $v \models \delta(x, \sigma, \rootvar)$, if $x$ is a state (i.e., $x
     \in Q$) and
   \item $v \models x$, if $x$ is a formula (i.e., $x \in
     \mathcal{B}^+(\{\diamond_k, \square_k, \uptran\} \times Q)$).
   \end{itemize}
   This claim provides a coinductive definition of a, not necessarily accepting,
   $(t,s)$-run of $\mathcal{A}^q$, for all $s \in \Dom(t)$ and $q \in Q$ such
   that $(t,s) \models f(q)$.
   We first prove the claim for states $q \in Q$; then, we prove it for formulae
   $\theta \in \mathcal{B}^+(\{\diamond_k, \square_k, \uptran\} \times Q)$;
   finally, we show that the resulting run is accepting.

   Let $s \in \Dom(t)$ and $q \in Q$ be such that $(t,s) \models f(q)$.
   Moreover, let $v$ be a node labeled with $(q',s)$, for a generic $q' \in Q$.
   If $(t,s) \models \beta'_q$, then $(t,s) \models
   f(\alpha'_{q,\sigma,\rootvar})$, where $\sigma$ is the label of $s$ and
   $\rootvar = \isroot$ (resp., $\rootvar = \isnotroot$) if $s$ is (resp., is
   not) the root of $t$.
   By inductive hypothesis (since $\alpha'_{q,\sigma,\rootvar} \prec q$), we can
   extend $v$ with children $v_1, \ldots, v_k$, labeled by, respectively, $(q_1,
   s_1), \ldots, (q_k, s_k)$, so that $(t,s_i) \models f(q_i)$ for all $i \in \{
   1, \ldots, k \}$ and $v \models \alpha'_{q,\sigma,\rootvar}$, and thus $v
   \models \delta(q,\sigma,\rootvar)$ as well.
   If, instead, $(t,s) \not\models \beta'_q$, then $(t,s) \models \beta_q$ and
   $\{ q \}$ must be existential, universal, or upward (it cannot be transient).
   Following the same argument as above, from $(t,s) \models \beta_q$ it follows
   that $v$ can be extended with children $v_1, \ldots, v_k$, labeled by,
   respectively, $(q_1, s_1), \ldots, (q_k, s_k)$, so that $(t,s_i) \models
   f(q_i)$ for all $i \in \{ 1, \ldots, k \}$ and $v \models
   \alpha_{q,\sigma,\rootvar}$.
   To conclude that $v \models \delta(q,\sigma,\rootvar)$ as well, we need to
   further extend $v$ with other children $v'_1, \ldots, v'_h$, labeled by,
   respectively, $(q, s'_1), \ldots, (q, s'_h)$, such that $(t,s'_i) \models
   f(q)$ for all $i \in \{ 1, \ldots, h \}$ and:
   \begin{itemize}
   \item $v \models (\diamond, q)$, if $\{ q \}$ is existential;
   \item $v \models (\square, q)$, if $\{ q \}$ is universal;
   \item $v \models (\uptran, q)$, if $\{ q \}$ is upward.
   \end{itemize}
   If $\{ q \}$ is existential, then from the fact that $(t,s) \models f(q)$ and
   $(t,s) \not\models \beta'_q$, it follows that $(t,s) \models \E\X f(q)$.
   Let $s'$ be the child of $s$ such that $(t,s') \models f(q)$.
   We extend $v$ with a node labeled $(q,s')$, and we are done.
   If $\{ q \}$ is universal, then from the fact that $(t,s) \models f(q)$ and
   $(t,s) \not\models \beta'_q$, it follows that $(t,s) \models \A\X f(q)$.
   Then, all children $s'$ of $s$ are such that  $(t,s') \models f(q)$.
   We extend $v$ with a node $v_{s'}$, labeled by $(q,s')$, for every child $s'$
   of $s$, and we are done.
   If $\{ q \}$ is upward, then from the fact that $(t,s) \models f(q)$ and
   $(t,s) \not\models \beta'_q$, it follows that $(t,s) \models \E\Y f(q)$.
   Let $s'$ be the parent of $s$; then, it holds $(t,s') \models f(q)$.
   We extend $v$ with a node labeled $(q,s')$, and we are done.

   Now, let $s \in \Dom(t)$ and $\theta \in \mathcal{B}^+(\{\diamond_k,
   \square_k, \uptran\} \times Q)$ be such that $(t,s) \models f(\theta)$.
   Moreover, let $v$ be a node labeled with $(q',s)$, for a generic $q' \in Q$.
   We omit the cases where $\theta$ is a Boolean constant ($\top$ or $\bot$),
   which are trivial, and the ones where it is a conjunction or a disjunction,
   which are dealt with by standard induction, and we focus on the remaining
   ones.
   If $\theta = (\diamond_k, q)$, then from $(t,s) \models f(\theta)$ it follows
   that there are $k$ distinct children $s_1, \ldots, s_k$ of $s$ such that
   $(t,s_i) \models f(q)$ for all $i \in \{ 1, \ldots, k \}$.
   We extend $v$ with nodes $v_1, \ldots, v_k$, labeled by, respectively,
   $(q,s_1), \ldots, (q,s_k)$, and we are done.
   If $\theta = (\square_k, q)$, then from $(t,s) \models f(\theta)$ it follows
   that all but at most $k-1$ children $s'$ of $s$ are such that $(t,s') \models
   f(q)$.
   We extend $v$ with a node $v_{s'}$, labeled by $(q,s')$, for every such child
   $s'$ of $s$, and we are done.
   If $\theta = (\uptran, q)$, then from $(t,s) \models f(\theta)$ it follows
   that $s$ is not the root of $t$ and its parent $s'$ is such that $(t,s')
   \models f(q)$.
   We extend $v$ with a node $v'$, labeled by $(q,s')$, and we are done.

   This coinductive construction produces a $(t,s)$-run of $\mathcal{A}^q$, for
   all $s \in \Dom(t)$ and $q \in Q$ such that $(t,s) \models f(q)$.
   In particular, it produces an initial run of $\mathcal{A}$ on $t$, as $t
   \models f(q_I)$.
   We now show that these runs are accepting.
   To this end, let $r$ be one such run.
   We have to show that all of its paths eventually get stuck in a universal
   component.
   As already pointed out, a path of a run can neither get stuck in a transient
   nor in an upward component.
   Let $\{ q \}$ be an existential component.
   By construction, if $v$ is a node of $r$ labeled by $(q,s)$ then $q$ occurs
   in at most one of its children label (i.e., $v$ has at most one child labeled
   by $(q,s')$, for some $s'$); moreover, if $(t,s) \models \beta'_{q}$, then
   there is no child of $v$ labeled by $(q,s')$, for any $s'$.
   Now, assume, towards a contradiction, that $r$ features an infinite path $v_0
   v_1 \dots$ such that $v_i$ is labeled by $(q,s_i)$ for all $i \in \mathbb N$.
   Thus, $(t,s_i) \models f(q)$ and $(t,s_i) \not\models \beta'_{q}$, for all $i
   \in \mathbb N$.
   In particular, since $(t,s_0) \models f(q)$, it must be the case that
   $(t,s_k) \models \beta'_{q}$, for some $k \in \mathbb{N}$, which contradicts
   the fact that $(t,s_i) \not\models \beta'_{q}$ holds for all $i \in \mathbb
   N$.
   Therefore, there exists an initial, accepting run of $\mathcal{A}$ on $t$.

  \medskip
  \noindent\textbf{Completeness of the translation ($\Lang(\mathcal{A})
    \subseteq \Lang(\varphi_{\mathcal{A}})$).}
  Let $t \in \Lang(\mathcal{A})$ and $r$ be an initial, accepting run of
  $\mathcal{A}$ on $t$.
  We show that $t \models \varphi_{\mathcal A}$.
  As a preliminary step, we observe that for every node $v \in \Dom(r)$ with
  label $(q,s)$,  we have that   $v \models \alpha_{q,\sigma,\rootvar}$ or $v
  \models \alpha'_{q,\sigma,\rootvar}$, where $\sigma$ is the label of $s$ and
  $\rootvar=\isroot$ (resp., $\rootvar=\isnotroot$) if $s$ is (resp., is not)
  the root of $t$.
  Moreover, if $v \not\models \alpha'_{q,\sigma,\rootvar}$, then $\{ q \}$ is
  not transient and one of the following holds:
  \begin{itemize}
  \item $\{ q \}$ is existential and there are a child $v'$ of $v$ and a child
    $s'$ of $s$ such that the label of $v'$ is $(q,s')$;
  \item $\{ q \}$ is universal and for every child $s'$ of $s$ there is a child
    $v'$ of $v$ whose label is $(q,s')$;
  \item $\{ q \}$ is upward, $s$ is not the root of $t$, and there is a child
    $v'$ of $v$ whose label is $(q,s')$, where $s'$ is the parent of $s$.
  \end{itemize}

  Next, we show that for every node $v \in \Dom(r)$, with label $(q,s)$, if
  $\subtree_v{r}$ (the subtree of $r$ rooted at node $v$) is an accepting
  $(t,s)$-run of $\mathcal{A}^q$, then $(t,s) \models f(q)$.
  To this end, we prove, by induction on the partial order $\prec$ defined
  above, the following claims, for every $s \in \Dom(t)$:
  \begin{enumerate}


  \item \label{item:automataToFormulaCompletenessStatesLocal}
    for every $q \in Q$ and every node $v \in \Dom(r)$ labeled by $(q,s)$, it
    holds that $(t,s) \models \beta_q$ or $(t,s) \models \beta'_q$,

  \item \label{item:automataToFormulaCompletenessStatesFixpoint}
    for every $q \in Q$ and every node $v \in \Dom(r)$ labeled by $(q,s)$, it
    holds that $(t,s) \models f(q)$,

  \item \label{item:automataToFormulaCompletenessFormulas}
    for every $\theta \in \mathcal{B}^+(\{\diamond_k, \square_k, \uptran\}
    \times Q)$ and every node $v \in \Dom(r)$ labeled by $(q,s)$ for some $q \in
    Q$, if $v \models \theta$, then $(t,s) \models f(\theta)$.

  \end{enumerate}

  Let $q \in Q$ and  $v \in \Dom(r)$ be labeled by $(q,s)$.
  Moreover, let $\sigma$ be the label of $s$ and $\rootvar=\isroot$ (resp.,
  $\rootvar=\isnotroot$) if $s$ is (resp., is not) the root of $t$.
  Since $\alpha_{q,\sigma,\rootvar} \prec q$ (resp.,
  $\alpha'_{q,\sigma,\rootvar} \prec q$), by inductive hypothesis
  Claim~\eqref{item:automataToFormulaCompletenessFormulas} holds, meaning that
  $v \models \alpha_{q,\sigma,\rootvar}$ (resp., $v \models
  \alpha'_{q,\sigma,\rootvar}$) implies $(t,s) \models
  f(\alpha_{q,\sigma,\rootvar})$ (resp., $(t,s) \models
  f(\alpha'_{q,\sigma,\rootvar})$).
  Claim~\eqref{item:automataToFormulaCompletenessStatesLocal} then follows from
  the fact that $v \models \alpha_{q,\sigma,\rootvar}$ or $v \models
  \alpha'_{q,\sigma,\rootvar}$.

  If $(t,s) \models \beta'_q$, then
  Claim~\eqref{item:automataToFormulaCompletenessStatesFixpoint} holds
  straightforwardly.
  If, instead, $(t,s) \not \models \beta'_q$, then $\{ q \}$ is not transient.
  If $\{ q \}$ is existential, then consider a maximal path $v_0 v_1 \dots$
  starting from $v$ (i.e., $v_0=v$) such that $v_i$ is labeled by $(q,s_i)$ for
  all $i$ and $s_0 s_1 \dots$ is a path of $t$ starting at $s$ (i.e., $s_0 =
  s$).
  Such a path cannot be infinite since $r$ is accepting and $\{ q \}$ is
  existential.
  Let $v_k$ be its last node.
  By maximality, there is no child of $v$ labeled by $(q,s')$, for any child
  $s'$ of $s_k$.
  Therefore, $v_k \models \alpha'_{q,\sigma,\rootvar}$ and, by
  Claim~\eqref{item:automataToFormulaCompletenessFormulas} (applying the
  inductive hypothesis), $(t,s_k) \models \beta'_{q}$.
  Since, by Claim~\eqref{item:automataToFormulaCompletenessStatesLocal},
  $(t,s_i) \models \beta_{q} \lor \beta'_{q}$ for all $i \in \{ 0, \ldots, k-1
  \}$, we have that $(t,s) \models \E(\beta_{q} \U \beta'_{q})$, and thus
  Claim~\eqref{item:automataToFormulaCompletenessStatesFixpoint} holds.
  If $\{ q \}$ is universal, then we proceed by assuming, towards a
  contradiction, that $(t,s) \not\models f(q)$, that is, $(t,s) \not\models
  \A(\beta_{q} \W \beta'_{q})$.
  Thus, there is a finite path $s_0 s_1 \dots s_k$, with $k>0$, starting from
  $s$ (i.e., $s_0=s$) such that $(t,s_k) \models \neg \beta_q \land \neg
  \beta'_q$ and, for all $i \in \{ 0, \ldots, k-1 \}$, $(t,s_i) \models \beta_q
  \land \neg \beta'_q$.
  From our preliminary observations, there is a path $v_0 v_1 \dots v_k$
  starting from $v$ (i.e., $v_0=v$) such that $v_i$ is labeled by $(q,s_i)$ for
  all $i \in \{ 0, \ldots, k \}$.
  However, the fact that node $v_k$ is labeled by $(q,s_k)$ and $(t,s_k) \models
  \neg \beta_q \land \neg \beta'_q$ is in contradiction with
  Claim~\ref{item:automataToFormulaCompletenessStatesLocal}.
  Therefore $(t,s) \models \A(\beta_{q} \W \beta'_{q})$.
  If $\{ q \}$ is upward, then consider a maximal path $v_0 v_1 \dots$ starting
  from $v$ (i.e., $v_0=v$) such that, for all $i$, we have that $v_i$ is labeled
  by $(q,s_i)$ and $s_{i+1}$ is the parent of $s_i$.
  Such a path cannot be infinite, or $s$ would have infinitely many ancestors.
  Let $v_k$ be the last node of the path.
  By maximality, there is no child of $v$ labeled by $(q,s')$, where $s'$ is the
  parent of $s_k$.
  Therefore, $v_k \models \alpha'_{q,\sigma,\rootvar}$ and, by
  Claim~\eqref{item:automataToFormulaCompletenessFormulas} (applying the
  inductive hypothesis), $(t,s_k) \models \beta'_{q}$.
  Since, by Claim~\eqref{item:automataToFormulaCompletenessStatesLocal},
  $(t,s_i) \models \beta_{q} \lor \beta'_{q}$ for all $i \in \{ 0, \ldots, k-1
  \}$, we have that $(t,s) \models \E(\beta_{q} \s \beta'_{q})$, and thus
  Claim~\eqref{item:automataToFormulaCompletenessStatesFixpoint} holds.

  Finally, to prove Claim~\eqref{item:automataToFormulaCompletenessFormulas},
  let $\theta \in \mathcal{B}^+(\{\diamond_k, \square_k, \uptran\} \times Q)$
  and $v \in \Dom(r)$ be a node labeled by $(q,s)$ such that $v \models \theta$.
  We omit the cases where $\theta$ is a Boolean constant ($\top$ or $\bot$),
  which are trivial, and the ones where it is a conjunction or a disjunction,
  which are dealt with by standard induction, and we focus on the remaining
  ones.
  If $\theta = (\diamond_k, q)$, then from $v \models \theta$ it follows that
  $v$ has at least $k$ children $v_1,\dots,v_k$ with labels
  $(q,s_1),\dots,(q,s_k)$, where $s_1,\dots,s_k$ are distinct children of $s$.
  Since $q \prec \theta$, by inductive hypothesis
  Claim~\eqref{item:automataToFormulaCompletenessStatesFixpoint} holds, meaning
  that $(t,s_i) \models f(q)$ for all $i \in \{ 1, \ldots, k \}$.
  Therefore, $(t,s) \models \D^k f(q)$, which implies
  Claim~\eqref{item:automataToFormulaCompletenessFormulas}.
  If $\theta = (\square_k, q)$, then from $v \models \theta$ it follows that for
  all but at most $k-1$ children $s'$ of $s$ there is a child $v_{s'}$ of $v$
  with label $(q,s')$.
  Since $q \prec \theta$, by inductive hypothesis
  Claim~\eqref{item:automataToFormulaCompletenessStatesFixpoint} holds, meaning
  that $(t,s') \models f(q)$ holds for all but at most $k-1$ children $s'$ of
  $s$.
  Therefore, $(t,s) \models \neg\D^k\neg f(q)$, which implies
  Claim~\eqref{item:automataToFormulaCompletenessFormulas}.
  If $\theta = (\uptran, q)$, then from $v \models \theta$ it follows that $v$
  has at least a child with label $(q,s')$, where $s'$ is the parent of $s$ in
  $t$.
  Since $q \prec \theta$, by inductive hypothesis
  Claim~\eqref{item:automataToFormulaCompletenessStatesFixpoint} holds, meaning
  that $(t,s') \models f(q)$.
  Therefore, $(t,s) \models \E \Y f(q)$, which implies
  Claim~\eqref{item:automataToFormulaCompletenessFormulas}.
  We conclude by observing that, by
  Claim~\eqref{item:automataToFormulaCompletenessStatesFixpoint}, we have that
  $t \models f(q_I)$, that is, $t \models \varphi_{\mathcal{A}}$.
\end{proof}

We can now state the easier direction of the equivalence.

\begin{thm}
    Given a \PolPCTL formula $\varphi$ over a set of atomic propositions $\mathrm{AP}$, there is a two-way linear \PHTA $\mathcal{A}_\varphi$ such that $\Lang(\varphi) = \Lang(\mathcal{A}_\varphi$). 
\end{thm}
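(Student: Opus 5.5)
We build $\mathcal{A}_\varphi$ by structural induction on $\varphi$, in the usual Fischer--Ladner style: one state $q_\chi$ for every subformula $\chi$ of $\varphi$, plus a dual state $\overline{q}_\chi$ for its negation. The initial state is $q_\varphi$. First we put $\varphi$ in negation normal form. Negations are pushed inward using the dualities $\neg\D^n\chi \equiv$ ``all but $n-1$ successors satisfy $\neg\chi$'' and $\neg\E(\chi_1\U\chi_2)\equiv\A(\neg\chi_2\W(\neg\chi_1\wedge\neg\chi_2))$. For the past operators we use $\neg\E\Y\chi \equiv \neg\Y\top \vee \E\Y\neg\chi$ and $\neg\E(\chi_1\s\chi_2) \equiv \neg\chi_2 \wedge (\neg\Y\top \vee \neg\chi_1 \vee \E\Y\neg\E(\chi_1\s\chi_2))$. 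The last identity unfolds into a purely local condition plus an upward move. This is sound because every node has a unique past, so the ``universal'' reading of the past coincides with the existential one.

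Transitions are then defined as follows.
\begin{itemize}
\item For propositions, $\delta(q_p,\sigma,\rootvar)$ is $\top$ if $p\in\sigma$ and $\bot$ otherwise; $\overline{q}_p$ is dual. These states are transient.
\item Boolean connectives are handled by taking the corresponding positive Boolean combination of the transition formulae of the subformula states.
\item $\D^n\chi$ gives $(\diamond_n,q_\chi)$, and its negation gives $(\square_n,\overline{q}_\chi)$. These states are transient.
\item $\E\X\chi$ gives $(\diamond,q_\chi)$.
\item $\E\Y\chi$ gives $\bot$ at the root and $(\uptran,q_\chi)$ otherwise.
\item For $\E(\chi_1\U\chi_2)$ we set $\delta(q,\sigma,\rootvar)=\delta(q_{\chi_2},\sigma,\rootvar)\vee(\delta(q_{\chi_1},\sigma,\rootvar)\wedge(\diamond,q))$. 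We take $q\notin F$, which makes $\{q\}$ an existential component.
\item The dual state for the universal weak-until uses $(\square,\overline{q})$ and is placed in $F$, making it a universal component.
\item For $\E(\chi_1\s\chi_2)$ we use $\delta(q_{\chi_2})\vee(\delta(q_{\chi_1})\wedge(\uptran,q))$ at non-root nodes and $\delta(q_{\chi_2})$ at the root. This gives an upward component.
\item Its negation, by the identity above, is also an upward component: at the root it is just the local part, and otherwise it is the local part disjoined/conjoined with $(\uptran,\overline{q})$.
\end{itemize}
Expanding the subformula states' transition formulae, instead of jumping to them through $\epsilon$-moves, keeps each transition a genuine one-step move. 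Every non-transient state occurs in its own transition only inside one atom of the required kind. Ordering the singleton components by subformula depth therefore yields a two-way linear \PHTA in which moves only stay put or go to lower components.

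Correctness is shown by a single induction on subformulae. The claim is that, for every node $s$, $(t,s)\models\chi$ iff $\mathcal{A}^{q_\chi}$ has an accepting $(t,s)$-run, and similarly for $\overline{q}_\chi$ and $\neg\chi$. We use the coinductive characterisation of runs (Remark~\ref{rem:coinductiverun}, adapted to two-way runs). The local cases are immediate. For $\E\U$, a witness path gives a finite chain of $q$-copies ending in a $\chi_2$-run; conversely, since $q\notin F$, any accepting run cannot keep $q$ forever, so it exhibits a finite witness. For the universal dual, copies of $\overline{q}$ are sent to all children, and staying forever is accepting because $\overline{q}\in F$. This matches the safety semantics of $\W$. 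For $\s$ and its dual, upward chains are finite because they are bounded by the depth of $s$, so acceptance is never at stake and only the local unfolding matters.

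The main obstacle is the negated past operators and the interaction with the root flag. We must check that the duals really fit the upward type, so that $\overline{q}$ occurs only as $(\uptran,\overline{q})$. We must also check that the root case terminates the unfolding correctly: $\neg\E\Y\chi$ holds at the root, and $\neg\E(\chi_1\s\chi_2)$ at the root reduces to $\neg\chi_2$. I expect this to go through via the identities above, with the $\rootvar$ argument of $\delta$ supplying exactly the needed information.
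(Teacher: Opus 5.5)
Your construction is essentially the paper's: negation normal form, one state per (possibly negated) subformula, transitions obtained by inlining the one-step unfoldings with self-atoms $(\diamond,q)$, $(\square,q)$ or $(\uptran,q)$, and only the universal weak-until/release states accepting. This yields exactly the existential, universal, upward and transient singleton components the paper uses. The only minor difference is in the negated since: you add a dedicated dual upward state defined by its unfolding, whereas the paper rewrites $\neg\E(\neg\psi_1\s\neg\psi_2)$ positively as $\E(\psi_2\s(\psi_2\land(\A\tilde{\Y}\bot\lor\psi_1)))$; both work because the past of a node is finite.
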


\begin{proof}
    We consider \PolPCTL formulae in \emph{negation normal form}, meaning that negation can occur only at the level of atomic propositions. It is easy to see that for any \PolPCTL formula $\psi$, there is an equivalent formula $\psi'$ in negation normal form. We define the abbreviations $\A\tilde{\Y} \psi= \neg\E \Y \neg\psi$, 
    and $\C^k \psi = \neg \D^k \neg \psi$. Recall that the distinction between existential and universal path quantifiers is irrelevant when they precede past temporal operators. We don't introduce an abbreviation for $\neg\E(\neg\psi_1 \s \neg\psi_2)$, since it can always be rewritten as $\E(\psi_2 \s (\psi_2 \land (\A\tilde{\Y}\bot \lor \psi_1)))$. 

    Given a \PolPCTL formula, let \emph{sub}($\psi)$ be the set of its subformulae, defined as usual for atomic propositions and Boolean operators, and as follows for temporal and counting operators:
    \begin{tasks}[label=\textbullet](2)
    \task $sub(\D^k\psi) = \{\psi\}$
    \task $sub(\C^k\psi) = \{\psi\}$
    \task $sub(\E\X\psi) = \{\psi\}$
    \task $sub(\A\X\psi) = \{\psi\}$
    \task $sub(\E\Y\psi) = \{\psi\}$
    \task $sub(\A\tilde{\Y}\psi) = \{\psi\}$
\end{tasks}
    \begin{tasks}[label=\textbullet](2)
    \task $sub(\E(\psi_1 \U \psi_2)) = \{\psi_1, \psi_2, \E(\psi_1 \U \psi_2)\}$
    \task $sub(\A(\psi_1 \R \psi_2)) = \{\psi_1, \psi_2, \A(\psi_1 \R \psi_2)\}$
    \task $sub(\E(\psi_1 \s \psi_2)) = \{\psi_1, \psi_2, \psi_1 \s \psi_2\}$
    \end{tasks}

  Given a \PolPCTL formula $\varphi$ over the set of atomic propositions $\mathrm{AP}$, we define a two-way linear \PHTA $\mathcal{A}_\varphi = \langle sub(\varphi), 2^\mathrm{AP}, \delta, \varphi, F \rangle$. The state space $sub(\varphi)$ consists of the subformulae of $\varphi$ defined above, with $\varphi$ itself serving as the initial state. The set of accepting states $F$ comprises all states corresponding to subformulae of the form $\A(\psi_1 \R \psi_2)$. Finally, the transition function $\delta: Q \times \Sigma \times \{\isroot,\isnotroot\} \rightarrow \mathcal{B}^+(\{\diamond_k, \square_k, \uptran\} \times Q)$ is defined as follows, for every $\sigma in \Sigma$ and with $\rho \in \{\isroot, \isnotroot\}$:
    \begin{tasks}[label=\textbullet](2)
    \task $\delta(p, \sigma, \rho) = \top$, if $p \in \sigma$
    \task $\delta(p, \sigma, \rho) = \bot$, if $p \notin \sigma$
    \task $\delta(\neg p, \sigma, \rho) = \top$, if $p \notin \sigma$
    \task $\delta(\neg p, \sigma, \rho) = \bot$, if $p \in \sigma$
    \end{tasks}
     \begin{tasks}[label=\textbullet]
    \task $\delta(\psi_1 \otimes \psi_2, \sigma, \rho) = \delta(\psi_1, \sigma, \rho) \otimes \delta(\psi_2, \sigma, \rho)$, for $\otimes \in \{\land, \lor\}$
    \end{tasks}
    \begin{tasks}[label=\textbullet](2)
    \task $\delta(\D^k\psi, \sigma, \rho) = (\diamond_k, \psi)$
    \task $\delta(\C^k\psi, \sigma, \rho) = (\square_k, \psi)$
    \task $\delta(\E\X\psi, \sigma, \rho) = (\diamond, \psi)$
    \task $\delta(\A\X\psi, \sigma, \rho) = (\square, \psi)$
    \task $\delta(\E\Y \psi, \sigma, \isnotroot) = (\uptran, \psi)$
    \task $\delta(\E\Y \psi, \sigma, \isroot)) = \bot$
    \task $\delta(\A\tilde{\Y} \psi, \sigma, \isnotroot) = (\uptran, \psi)$
    \task $\delta(\A\tilde{\Y} \psi, \sigma, \isroot) = \top$
    \end{tasks}
    \begin{tasks}[label=\textbullet]
    \task $\delta(\E(\psi_1 \U \psi_2), \sigma, \rho) = \delta(\psi_2, \sigma, \rho) \lor (\delta(\psi_1, \sigma, \rho) \land (\diamond, \E(\psi_1 \U \psi_2))$
    \task $\delta(\A(\psi_1 \R \psi_2), \sigma, \rho) = \delta(\psi_2, \sigma, \rho) \land (\delta(\psi_1, \sigma, \rho) \lor (\square, \A(\psi_1 \R \psi_2))$
    \task $\delta(\E(\psi_1 \s \psi_2), \sigma, \rho) = \delta(\psi_2, \sigma, \rho) \lor (\delta(\psi_1, \sigma, \rho) \land (\uptran, \E(\psi_1 \s \psi_2))$
    \end{tasks}
    
    The automaton is two-way. The linear requirement is satisfied by the subformula ordering: a state corresponding to a formula $\psi$ can only transition to itself or to states corresponding to subformulae of $\psi$. Furthermore, the automaton is hesitant, as states of the form $\E(\psi_1 \U \psi_2)$ are existential, $\A(\psi_1 \R \psi_2)$ are universal, $\E(\psi_1 \s \psi_2)$ are upward, and all remaining states are transient. Finally, it is polarised because only the universal states are accepting. The soundness of this construction is routine; it follows from a straightforward adaptation of the soundness proof for the classical \CTL automaton construction (see, e.g., \cite[Theorem 14.7.11]{demri2016temporal})
\end{proof}

This concludes the proof and implies the following.

\begin{cor}
    Two-way linear \PHTA and \PolPCTL are equivalent formalisms.
\end{cor}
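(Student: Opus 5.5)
The corollary is obtained by combining the two theorems just proved; no new construction is needed. Equivalence of formalisms means effective translations in both directions that preserve the recognised (resp.\ defined) tree language under initial equivalence, i.e.\ evaluation at the root.

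For the direction from automata to logic, I apply Theorem~\ref{PHTA-POLPCTL equivalence}. Given a two-way linear \PHTA $\mathcal{A}$, it produces the formula $\varphi_{\mathcal{A}} = f(q_I)$ with $\Lang(\mathcal{A}) = \Lang(\varphi_{\mathcal{A}})$. I check that this translation is effective. The function $f$ is defined by well-founded recursion on the finite order $\prec$. Each $\beta_q$ and $\beta'_q$ is a finite Boolean combination over the finite alphabet $\Sigma$. The universal case $\A(\beta_q \W \beta'_q)$ is rewritten inside the \PolPCTL grammar as $\neg\E(\neg\beta'_q \U (\neg\beta_q\land\neg\beta'_q))$, and $\Y\top$ is $\E\Y\top$ in the grammar of Definition~\ref{def:syntaxPolPCTL}. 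Hence $\varphi_{\mathcal{A}}$ is a genuine \PolPCTL formula.

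For the converse direction, I apply the subsequent theorem. Given $\varphi$, I first put it in negation normal form. The main point to check here is that dualising $\E\Y$ and $\E\s$ never takes the formula outside the fragment. This is handled by the abbreviation $\A\tilde{\Y}$ and by the rewriting of $\neg\E(\neg\psi_1\s\neg\psi_2)$ given in that proof, while the duals of $\E\X$, $\E\U$ and $\D^k$ correspond to the automaton atoms $(\square,\cdot)$, $\A\R$ and $(\square_k,\cdot)$. The construction then yields $\mathcal{A}_\varphi$ with $\Lang(\varphi)=\Lang(\mathcal{A}_\varphi)$. Both languages are taken at the root, so the two translations witness mutual expressive equivalence.

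I do not expect any real obstacle. The only delicate point is the soundness of the logic-to-automaton construction, which was left as routine. If more care is wanted, I would mirror the coinductive run construction and the ranking argument of Theorem~\ref{PHTA-POLPCTL equivalence}:
\begin{itemize}
\item existential $\E\U$ states are non-accepting, so an accepting run must eventually discharge $\psi_2$;
\item universal $\A\R$ states are accepting, so staying in them forever is permitted;
\item upward $\E\s$ states cannot loop forever, since every node has only finitely many ancestors.
\end{itemize}
Finally, combining the corollary with \cite[Theorem 4.5]{Sch92a} shows that both formalisms are also equivalent to \FO.
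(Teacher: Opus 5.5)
Your proposal is correct and is exactly the paper's argument. The corollary is simply the conjunction of Theorem~\ref{PHTA-POLPCTL equivalence} (automaton to formula) and the subsequent theorem (formula to automaton), with both languages evaluated at the root.

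Your additional checks are sound elaborations: reading $\Y\top$ as the state formula $\E\Y\top$, the in-grammar rewriting of $\A(\beta_q\W\beta'_q)$, and the negation normal form via $\A\tilde{\Y}$. The only small inaccuracy is calling $\prec$ a finite order. Its domain is infinite, but effectiveness still holds because the recursion from $q_I$ only visits the finitely many states and transition subformulae of $\mathcal{A}$.
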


By the equivalence between \FO and \PolPCTL obtained in \cite[Theorem 4.5]{Sch92a}, one can also conclude the following.

\begin{cor}
    Two-way linear \PHTA and \FO are equivalent formalisms.
\end{cor}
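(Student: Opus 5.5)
The corollary follows by composing two equivalences already available: the one just established between two-way linear \PHTA and \PolPCTL, and Schlingloff's result \cite[Theorem 4.5]{Sch92a} that \PolPCTL is as expressive as \FO on infinite trees under initial equivalence. Both directions are obtained by chaining effective translations through \PolPCTL, and at each stage one checks that the notion of equivalence is initial equivalence, i.e.\ equality of the sets of accepted (or satisfying) trees evaluated at the root.

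\emph{From automata to \FO.} Given a two-way linear \PHTA $\mathcal{A}$, Theorem~\ref{PHTA-POLPCTL equivalence} yields a \PolPCTL formula $\varphi_{\mathcal{A}}$ with $\Lang(\mathcal{A})=\Lang(\varphi_{\mathcal{A}})$. Here $\Lang(\varphi_{\mathcal{A}})$ is the set of trees satisfying the formula at the root, and $\Lang(\mathcal{A})$ is defined through initial runs. Applying \cite[Theorem 4.5]{Sch92a} to $\varphi_{\mathcal{A}}$ then gives an \FO sentence with the same models.

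\emph{From \FO to automata.} Given an \FO sentence, \cite[Theorem 4.5]{Sch92a} gives an initially equivalent \PolPCTL formula $\varphi$. The preceding theorem then constructs a two-way linear \PHTA $\mathcal{A}_\varphi$ with $\Lang(\mathcal{A}_\varphi)=\Lang(\varphi)$.

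The only point needing care is syntactic: Schlingloff's result is stated for his operators $\mathcal{U}$, $\mathcal{S}$, $\mathcal{X}_k$. The back-and-forth rewritings given after Definition~\ref{def:syntaxPolPCTL} are full state equivalences, so they preserve initial equivalence and transfer his theorem to the present syntax of \PolPCTL. With this checked, the corollary is immediate.
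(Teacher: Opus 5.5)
Your proposal is correct and matches the paper's argument: you compose the just-established equivalence between two-way linear \PHTA and \PolPCTL with Schlingloff's result \cite[Theorem 4.5]{Sch92a} that \PolPCTL and \FO are equivalent. Your extra check, that the rewritings between Schlingloff's operators and the present \PolPCTL syntax preserve initial equivalence, is sound; the paper handles that point only in its earlier discussion after the definition of \PolPCTL.
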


This result is quite surprising, since it shows that two-way head movements, in combination with a very severe restriction as linearity, yield \FO expressiveness over trees. This contrasts with the behaviour of \FO over words, where the class of alternating automata expressively equivalent to it is that of \emph{linear} alternating automata \cite[Section 7]{loding2000alternating}, with no need for two-way head movements. In a sense, this result can be seen as further evidence of the ``ill" (or, at least, unpredictable) behaviour of \FO over trees. In the following sections we will keep reasoning, with automaton-based techniques, on \FO expressiveness, showing another automaton class effectively equivalent to this logic, this time with standard one-way head movements. This ``one-way" characterisation turned out to be rather intricate, and not easily derivable by the above result. Indeed, it can be easily seen by a detour through logic and a straightforward adaptation of the above proofs, that linear \PHTA are strictly less expressive than two-way linear \PHTA, and, consequently, of \FO. 

\begin{thm}
    Linear \PHTA and \PolCTL are equivalent formalisms.
\end{thm}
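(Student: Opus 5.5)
The plan is to specialise the two translations just given for two-way linear \PHTA and \PolPCTL to the one-way, past-free setting, and check that each one stays inside the smaller class. Here \PolCTL is the fragment of \PolPCTL without $\E\Y$ and $\E\s$. A linear \PHTA is a one-way \HTA: its transition function does not depend on the flag $\rootvar$ and uses no atoms $(\uptran,q)$. Its partition is into singletons, and the polarised acceptance condition is as before.

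\emph{From automata to formulae.} Given a linear \PHTA $\mathcal A$, I will reuse the function $f$ from the proof of Theorem~\ref{PHTA-POLPCTL equivalence}. No component is upward and no atom $(\uptran,q)$ occurs, so the clauses producing $\E\Y$ and $\E\s$ are never used. The remaining concern is the guards $\Y\top$ and $\neg\Y\top$ inside $\beta_q$ and $\beta'_q$. For a one-way automaton $\delta$ does not depend on $\rootvar$, so $\alpha_{q,\sigma,\isroot}=\alpha_{q,\sigma,\isnotroot}$. I will therefore redefine $\beta_q=\bigwedge_\sigma(\psi_\sigma\to f(\alpha_{q,\sigma}))$, and $\beta'_q$ similarly, with no guard. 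Now $f(q)$ is built only from atoms, Boolean connectives, $\D^k$, $\E(\cdot\U\cdot)$, and $\A(\cdot\W\cdot)=\neg\E(\neg\beta'_q\U(\neg\beta_q\wedge\neg\beta'_q))$, so it is a \PolCTL formula. The well-founded order $\prec$ and the soundness and completeness arguments carry over verbatim, with the upward cases deleted.

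\emph{From formulae to automata.} Given a \PolCTL formula in negation normal form, I will run the construction of the previous theorem. The operators $\E\Y$, $\A\tilde\Y$ and $\E\s$ never occur, so the automaton has no $\uptran$ atoms and $\delta$ is independent of $\rootvar$. It is therefore a one-way \GTA. Its components are singletons ordered by the subformula relation, so it is linear. States $\E(\psi_1\U\psi_2)$ are existential and non-accepting, states $\A(\psi_1\R\psi_2)$ are universal and accepting, and all other states are transient, so it is a linear \PHTA. One point needs checking: negation normal form must not take us outside the polarised fragment. The dual of $\E\U$ is $\A\R$, which is in the negation normal form grammar, and $\D^k$ dualises to $\C^k$. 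Pushing negations therefore never creates $\E\R$ or $\A\U$, and no past operators are needed for this step, so the formula stays in the future-only polarised grammar.

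\emph{Expected obstacle.} The only delicate step is the removal of the root-guards $\Y\top$ in the first direction, which otherwise would introduce a past operator. It is handled by the observation that one-way automata have a transition function independent of $\rootvar$. A faithful translation is all that is required, since initial equivalence is what is compared. This is also where the paper's informal claim that linear \PHTA are weaker than two-way ones becomes visible: the translation stays inside \PolCTL, which by Theorem~\ref{polctl expressiveness}(1) is strictly weaker than \PolPCTL.
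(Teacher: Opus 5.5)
Your proposal is correct and takes the same route as the paper: adapt the two translations for two-way linear \PHTA and \PolPCTL by deleting the upward and past cases. Two details the paper's one-line proof leaves implicit are handled explicitly in your version. You drop the $\Y\top$ root guards in $\beta_q$ and $\beta'_q$, which is justified because a one-way $\delta$ does not depend on $\rootvar$. You also check that putting a formula in negation normal form stays inside the polarised, future-only grammar.
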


\begin{proof}
    It suffices to adapt in a straightforward way the above proofs, removing every case involving upward states or past temporal operators, and apply the various inductive hypotheses on \PolCTL (resp., linear \PHTA) instead that on its past (resp., two-way) counterpart. 
\end{proof}

By Theorem \ref{polctl expressiveness}(1), this implies the following.

\begin{cor}
    Linear \PHTA are strictly less expressive than two-way linear \PHTA. 
\end{cor}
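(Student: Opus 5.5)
The corollary follows by chaining results already in place. The last theorem gives that linear \PHTA and \PolCTL are equivalent. The earlier corollary gives that two-way linear \PHTA and \PolPCTL are equivalent. Theorem~\ref{polctl expressiveness}(\ref{item:polCTL-less-than-pastPolCTL}) gives that \PolCTL is strictly less expressive than \PolPCTL. The plan is to transfer both the inclusion and the separation through these equivalences.

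\textbf{Inclusion.} I would argue this directly rather than through logic. A linear \PHTA is syntactically a two-way linear \PHTA whose transition function never uses $\uptran$-atoms and ignores the $\rootvar$ argument. Formally, set $\delta'(q,\sigma,\isroot)=\delta'(q,\sigma,\isnotroot)=\delta(q,\sigma)$. The hesitant partition, the component types and the accepting set $F$ are unchanged, so linearity and polarisation are preserved. Runs and acceptance coincide, so the languages are the same.

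The same inclusion can also be obtained through logic. Given a linear \PHTA, the last theorem yields an equivalent \PolCTL formula. This formula is in particular a \PolPCTL formula. The previous theorem then turns it into a two-way linear \PHTA recognising the same language.

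\textbf{Strictness.} Theorem~\ref{polctl expressiveness}(\ref{item:polCTL-less-than-pastPolCTL}) supplies a \PolPCTL formula $\varphi$ with no initially equivalent \PolCTL formula. Its proof exhibits the \PCTL rewriting of $\E(s \lor a\U b)\U r$, which is written in the polarised syntax. By the theorem translating \PolPCTL into automata, there is a two-way linear \PHTA $\mathcal{A}_\varphi$ with $\Lang(\mathcal{A}_\varphi)=\Lang(\varphi)$.

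Suppose some linear \PHTA $\mathcal{B}$ satisfied $\Lang(\mathcal{B})=\Lang(\varphi)$. The \PHTA-to-\PolCTL direction of the last theorem would then produce a \PolCTL formula with language $\Lang(\varphi)$. That formula would be initially equivalent to $\varphi$, which is a contradiction.

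\textbf{Main obstacle.} The only point needing care is that all notions match up. Expressiveness comparisons are stated via initial equivalence, and automaton languages are defined via initial runs, and these agree. The translations must also be effective, which they are by construction. The real weight rests on the last theorem. Its proof is only sketched as an adaptation of the two-way proofs, so I would check that dropping upward components does not break the well-founded induction on $\prec$ or the acceptance argument. Both are unaffected, since those cases are simply removed.
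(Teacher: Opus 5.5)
Your proposal is correct and follows the paper's proof. The paper obtains the corollary by chaining the equivalence of linear \PHTA with \PolCTL, the equivalence of two-way linear \PHTA with \PolPCTL, and the separation in Theorem~\ref{polctl expressiveness}(\ref{item:polCTL-less-than-pastPolCTL}). Your direct syntactic embedding for the inclusion (ignore the root flag, never use atoms of the form $(\Uparrow,q)$) is a harmless shortcut that avoids the logical detour for that half.
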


In the next section we develop some word automata theory, that will be useful in the following to yield the other \FO automaton-based characterisation here presented.



\section{Automaton-based characterisations of \safeLTL and \cosafeLTL} \label{word automata char}

This section is preliminary to the results presented in Section \ref{second
automaton}, which give another automaton-based characterisation for \FO.
Here we present two specific classes of $\omega$-word automata that characterise
respectively \safeLTL and \cosafeLTL.
These latter results in the setting of word languages are, in our opinion,
interesting on their own.

Here we will mostly consider nondeterministic B\"uchi / coB\"uchi automata
running on $\omega$-words, which we present as tuples of the form $\mathcal{A} =
\langle Q , \Sigma, \Delta, q_I, F \rangle$, where $\Delta \subseteq
Q\times\Sigma\times Q$ is the transition relation (we use the capital $\Delta$
to distinguish it from a transition function), $q_I$ is an initial state, and
$F$ is a set of final states.
We recall from the preliminaries that such an automaton is \emph{counter-free}
if, for every state $q \in Q$, every finite word $u \in \Sigma^*$, and every $n
> 0$, if $(q,u^n,q)\in \Delta^*$, then $(q,u,q) \in \Delta^*$, where $\Delta^*$
is the transitive closure of $\Delta$, i.e., the standard extension of the
transition relation to finite words.
We introduce an additional property below.

\begin{defi}
A nondeterministic automaton $\mathcal{A} = \langle Q , \Sigma, \Delta, q_I, F
\rangle$ is \emph{looping} if $F = Q\setminus\{q_\sink\}$ and
$q_\sink \in Q$ is such that $(q_\sink,\sigma,q_\sink)\in\Delta$ for all $\sigma
\in \Sigma$.
\end{defi}

Note that the above definitions make no explicit reference to the acceptance
condition of the automaton (be it B\"uchi or coB\"uchi) or to the semantics of
nondeterminism (whether it should be interpreted as an existential or a
universal choice): these definitions make sense in every combination of such
cases.
Below, we shall use the shorthands \UBA and \NCA respectively for a universal
B\"uchi automaton (accepting an $\omega$-word if all runs on it visit infinitely
often a final state) and an existential coB\"uchi automaton (accepting an
$\omega$-word if at least one run on it visits only finitely often the final
states).

We now recall the following automaton-based characterisation of the \LTL
fragments \safeLTL and \cosafeLTL.

\begin{prop}[\cite{boker2022translation}]
  \label{looping equivalence}
  Counter-free looping \emph{deterministic} B\"uchi (resp., coB\"uchi) automata
  define the same class of languages of $\omega$-words as \safeLTL (resp.,
  \cosafeLTL).
\end{prop}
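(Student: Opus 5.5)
The plan is to prove the co-safety half directly and obtain the safety half by complementation. Both halves rest on the observation that a looping deterministic automaton is determined by its finite-word behaviour. First I would normalise the automaton: make $\Delta$ total by sending every missing transition to $q_\sink$. This preserves determinism, the looping shape and counter-freeness, since the only new cycles are at $q_\sink$, which already loops on every letter. It also preserves the language under both acceptance conditions, because a run that dies (B\"uchi case) or reaches the sink is, respectively, rejected in both versions or, in the coB\"uchi case, can be argued to be accepted, as the sink is then the only non-rejecting state.

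With a total deterministic looping automaton $\mathcal{A}$, the acceptance conditions become simple.
\begin{itemize}
\item Under the coB\"uchi condition, all states other than $q_\sink$ are rejecting and $q_\sink$ is absorbing. So $w$ is accepted iff the unique run reaches $q_\sink$, i.e.\ $\Lang(\mathcal{A}) = K\Sigma^\omega$, where $K$ is the finite-word language of the DFA obtained from $\mathcal{A}$ by taking $\{q_\sink\}$ as accepting.
\item Under the B\"uchi condition, $w$ is accepted iff the run never reaches $q_\sink$, i.e.\ $\Lang(\mathcal{A}) = \Sigma^\omega \setminus K\Sigma^\omega$.
\end{itemize}
Moreover, swapping B\"uchi and coB\"uchi on the same total deterministic looping structure complements the language.

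On the formula side, the negation of a \safeLTL formula, pushed into negation normal form, is a \cosafeLTL formula (since $\neg\X = \X\neg$ on infinite words and $\R$ is dual to $\U$), and conversely. Hence it suffices to prove the coB\"uchi/\cosafeLTL equivalence.

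\emph{From automata to formulae.} For a deterministic automaton, counter-freeness is the same as aperiodicity of its transition monoid: $(q,u^n,q)\in\Delta^*$ implies $(q,u,q)\in\Delta^*$ for every state, which forces $u^{m}$ and $u^{m+1}$ to act identically for large $m$. So $K$ is recognised by an aperiodic DFA and, by Sch\"utzenberger's theorem (equivalently McNaughton--Papert), $K$ is star-free. Proposition~\ref{equivalence regex-cosafe} then yields a \cosafeLTL formula defining $K\Sigma^\omega = \Lang(\mathcal{A})$.

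\emph{From formulae to automata.} A \cosafeLTL formula defines, by Proposition~\ref{equivalence regex-cosafe}, a language $K\Sigma^\omega$ with $K$ star-free. Then $K\Sigma^*$ is also star-free, and its minimal DFA $\mathcal{M}$ is aperiodic, hence counter-free. Since $K\Sigma^*$ is closed under right extension, every accepting state of $\mathcal{M}$ has all successors accepting, and all accepting states are Nerode-equivalent. So $\mathcal{M}$ has a unique accepting state, and it is absorbing. Declaring it to be $q_\sink$ and reading $\mathcal{M}$ as a coB\"uchi automaton gives a counter-free looping deterministic automaton for $K\Sigma^\omega$. Dualising gives the safety statement.

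The main obstacle I expect is bookkeeping rather than depth. One point is checking that the passage between counter-freeness of the possibly partial automaton and aperiodicity of the completed DFA is exact in both directions. The other is handling the degenerate cases $K=\emptyset$ and $K\ni\varepsilon$, where the sink is unreachable or initial.
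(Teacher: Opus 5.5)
Your overall route is sound, but the normalisation step is wrong in the coB\"uchi case. Redirecting every missing transition to $q_\sink$ preserves the language only under the B\"uchi reading. Under the coB\"uchi reading, $q_\sink$ is the unique non-rejecting state, so reaching it means acceptance. In the original partial automaton, however, a word on which the unique run dies has no run at all and is rejected. Your completion therefore turns rejected words into accepted ones.

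For example, over $\Sigma=\{a,b\}$ take states $q_I,q_\sink$ with the transition $(q_I,a,q_\sink)$ and the self-loops at $q_\sink$. This is a counter-free looping deterministic coB\"uchi automaton for $a\Sigma^\omega$, but after your completion it recognises $\Sigma^\omega$. Since this step produces the description $\Lang(\mathcal{A})=K\Sigma^\omega$ on which your appeal to star-freeness rests, the coB\"uchi automaton-to-\cosafeLTL direction is not established as written. Your justification ("can be argued to be accepted, as the sink is then the only non-rejecting state") addresses runs that reach the sink, not runs that die.

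The repair is easy but asymmetric. In the coB\"uchi case, redirect missing transitions to a fresh state $q_\bot$ with self-loops on all letters. The looping shape forces $q_\bot\in F$, so $q_\bot$ is rejecting and dying runs stay rejected. Determinism and the looping shape are preserved. Counter-freeness is preserved too: $q_\bot$ is absorbing, so the only cycles through it are its own self-loops. A fresh sink would be wrong in the B\"uchi case for the dual reason, since it would be final there; redirecting to $q_\sink$ is the correct choice in that case. With this fix, $K$ is still the set of finite words leading to $q_\sink$, $\Lang(\mathcal{A})=K\Sigma^\omega$, and swapping the acceptance condition still complements the language on the completed structure. Alternatively, skip completion in the coB\"uchi case and define $K$ directly on the partial automaton, completing only the DFA for $K$ with a rejecting sink.

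The remaining steps check out:
\begin{itemize}
\item counter-free implies aperiodic for total DFAs;
\item star-freeness of $K$ via Sch\"utzenberger and McNaughton--Papert;
\item the minimal DFA of $K\Sigma^*$ has a unique absorbing accepting state;
\item dualisation through negation normal form.
\end{itemize}
The paper gives no proof of this proposition and simply imports it from Boker et al. Your argument, once repaired, is a self-contained alternative that reduces everything to finite-word results via Thomas's $K\Sigma^\omega$ characterisation of \cosafeLTL.
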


Our alternative characterisation for \safeLTL and \cosafeLTL essentially shows
that one can drop the assumption of determinism, provided that the
interpretation of nondeterministic choices is universal for automata with
B\"uchi acceptance conditions and existential for automata with coB\"uchi
acceptance conditions.
Note that the latter requirement is as expected, since the two automaton models
for \safeLTL and \cosafeLTL need to be duals of each other, meaning that
complementing an automaton in one class gives an automaton in the other class.
This alternative characterisation will turn out useful in the next section.

\begin{lem}
  Counter-free looping \UBA and \NCA define the same class of languages of
  $\omega$-words as \safeLTL and \cosafeLTL, respectively.
\end{lem}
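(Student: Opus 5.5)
The plan is to prove only the \NCA/\cosafeLTL half directly and obtain the \UBA/\safeLTL half by duality.

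\emph{Duality.} Let $\mathcal{A}=\langle Q,\Sigma,\Delta,q_I,F\rangle$ be a looping automaton with $F=Q\setminus\{q_\sink\}$. Since $q_\sink$ is absorbing, an infinite run visits $F$ infinitely often iff it never enters $q_\sink$, and it visits $F$ only finitely often iff it eventually enters $q_\sink$. Read as a \UBA, $\mathcal{A}$ accepts $w$ iff no infinite run on $w$ reaches $q_\sink$. Read as an \NCA, the same tuple accepts $w$ iff some infinite run on $w$ reaches $q_\sink$. So the two readings recognise complementary languages. Counter-freeness is a property of $\Delta$ alone, so it survives this reinterpretation.

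On the logic side, a formula is in \safeLTL iff its negation, pushed into negation normal form (swapping $\U$ and $\R$, with $\X$ self-dual over $\omega$-words), is in \cosafeLTL. Hence it suffices to prove that counter-free looping \NCA recognise exactly the \cosafeLTL-definable languages.

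\emph{From \cosafeLTL to \NCA.} By Proposition~\ref{looping equivalence}, every \cosafeLTL formula has an equivalent counter-free looping deterministic coB\"uchi automaton. A deterministic automaton is a special nondeterministic one: its existential and universal readings coincide, as do its deterministic and nondeterministic coB\"uchi acceptance. So this direction is immediate.

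\emph{From \NCA to \cosafeLTL.} Let $\mathcal{A}$ be a counter-free looping \NCA. Its language is $L=\{w \mid \text{some run on } w \text{ reaches } q_\sink\}$, which is open: $L=K\Sigma^\omega$, where $K$ is the set of finite words $u$ with $u\Sigma^\omega\subseteq L$. By Proposition~\ref{equivalence regex-cosafe}, it suffices to show that $K$ is star-free, i.e.\ aperiodic.

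First I will view the same tuple as an \NBA with final set $\{q_\sink\}$. Because $q_\sink$ is absorbing, it recognises exactly $L$, and it is still counter-free. By the Diekert--Gastin characterisation recalled in the preliminaries, $L$ is \FO-definable. Hence the syntactic monoid of $L$ on finite words is aperiodic: the \FO-definable $\omega$-languages are exactly those with aperiodic syntactic $\omega$-semigroup.

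Next I will check that $K$ is saturated by the syntactic congruence $\sim_L$. If $u\sim_L u'$, then $uw\in L$ iff $u'w\in L$ for every $w\in\Sigma^\omega$, so $u\in K$ iff $u'\in K$. Therefore $K$ is recognised by the aperiodic quotient monoid $\Sigma^*/\!\sim_L$. By Sch\"utzenberger's theorem $K$ is star-free, and Proposition~\ref{equivalence regex-cosafe} yields a \cosafeLTL formula for $K\Sigma^\omega=L$.

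\emph{Main obstacle.} The delicate step is passing from counter-freeness of a \emph{nondeterministic} automaton to aperiodicity of $K$. The route via \FO-definability of $L$ and saturation of $K$ by $\sim_L$ is my first choice, because it avoids reasoning about Boolean transition matrices. If the appeal to the syntactic $\omega$-semigroup turns out too heavy, I would instead argue directly that the finite-word NFA $(Q,\Sigma,\Delta,q_I,\{q_\sink\})$ is counter-free. That would make its language $K_0$, the set of words driving some run from $q_I$ to $q_\sink$, star-free by the known equivalence for counter-free NFAs. The remaining point would then be that $K_0\Sigma^\omega=L$. This holds since $q_\sink$ is absorbing, so $L=K_0\Sigma^\omega$ even though $K_0$ may be smaller than $K$, and in that case $K_0$ alone suffices.
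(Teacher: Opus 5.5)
Your proof is correct, but for the nontrivial inclusion it takes a genuinely different route from the paper.

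The paper proves the \UBA/\safeLTL half. It applies the Miyano--Hayashi breakpoint construction to a counter-free looping \UBA and notes that the obligation component can only be $\emptyset$ or $\{q_\sink\}$. It merges all states $(S,\{q_\sink\})$ into one sink to recover the looping shape, and gets counter-freeness because the subset construction preserves it (Diekert--Gastin, Remark~11.14). It then concludes with the deterministic characterisation of Proposition~\ref{looping equivalence}, so both inclusions go through Boker et al.

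You prove the \NCA/\cosafeLTL half instead. You use that the recognised language is open, $L=K\Sigma^\omega$, and finish with Thomas' characterisation (Proposition~\ref{equivalence regex-cosafe}). Everything then reduces to exhibiting a star-free set of good prefixes.

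Of your two arguments, the fallback is the cleaner one. The automaton $(Q,\Sigma,\Delta,q_I,\{q_\sink\})$ read on finite words is counter-free, so by the very fact the paper cites its language $K_0$ is star-free. Moreover $L=K_0\Sigma^\omega$ thanks to the self-loops on $q_\sink$. This needs neither the breakpoint construction, nor Boker et al., nor any $\omega$-semigroup theory.

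Your primary argument is sound but much heavier. It relies on Diekert--Gastin for counter-free B\"uchi automata, on the algebraic characterisation of \FO-definable $\omega$-languages by aperiodic syntactic $\omega$-semigroups, and on Sch\"utzenberger. Also, with Arnold's syntactic congruence, the step ``$u\sim_L u'$ implies $uw\in L$ iff $u'w\in L$ for every $w\in\Sigma^\omega$'' is not literally the definition, which only quantifies over ultimately periodic contexts; it needs the regularity of $L$.

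In exchange, the paper's route yields an explicit counter-free looping \emph{deterministic} automaton.

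A minor remark: the paper's definition of looping only asks for self-loops on $q_\sink$, not that $q_\sink$ be absorbing. All your steps use only those self-loops, so nothing breaks. These steps are the complementarity of the \UBA and \NCA readings (which holds for any tuple), the \NBA reading with final set $\{q_\sink\}$, and $L=K\Sigma^\omega$.
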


\begin{proof}
In view of the duality between \UBA and \NCA, it suffices to prove only the
characterisation of \safeLTL in terms of \UBA.
As a matter of fact, since deterministic B\"uchi automata can be seen as a
special case of \UBA, it suffices to prove that counter-free looping \UBA can
only define \safeLTL properties.

Consider a counter-free looping \UBA $\mathcal{A} = \langle Q, \Sigma, \Delta,
q_I, Q\rangle$.
We show that, by applying a variant of the subset construction, it is possible
to turn $\mathcal{A}$ into an equivalent counter-free looping deterministic B\"uchi
automaton, which by Proposition \ref{looping equivalence} defines a \safeLTL
property.
More precisely, we use the Miyano-Hayashi breakpoint construction, which was
introduced in \cite[Theorem 4.1]{miyano1984alternating} to obtain a \NBA
starting from an alternating B\"uchi automaton.
We briefly recall this construction. While originally introduced to convert alternating B"uchi automata into \NBA, applying this construction specifically to a \UBA yields a \emph{deterministic} B"uchi automaton. We briefly review this procedure. Given $\mathcal{A}$, the states of the resulting deterministic automaton $\mathcal{A}'$ are pairs $(S,O)$ satisfying $S \supseteq O$, with the following intended semantics:
\begin{itemize}
\item
  $S$ represents the classical subset construction, and contains the states
  reached along any run induced by the consumed prefix of the input,
\item
  $O$ represents obligations for visiting $F$ infinitely often, namely, it
  contains the states reached along those runs that have not yet visited $F$
  since the last breakpoint (a breakpoint happens whenever $O$ becomes empty).
\end{itemize}
Formally, the transitions of $\mathcal{A}'$ are described by the deterministic
function $\Delta'$ that maps a pair $(S,O)$ and a symbol $\sigma$ to the pair
$(S',O')$ defined by:
\begin{align*}
  S'
&~=~
  \big\{ q'\in Q \::\: \exists{q\in S} ~~ (q,\sigma,q')\in\Delta \big\}
\\
  O'
&~=~
  \begin{cases}
    \big\{ q'\in Q \setminus F \::\: \exists{q\in O} ~~(q,\sigma,q')\in\Delta
    \big\}
  & \text{if $O\neq\emptyset$,}
  \\
    S' \setminus F
  & \text{otherwise.}
  \end{cases}
\end{align*}
The initial state of $\mathcal{A}'$ is the pair $(\{q_I\},\{q_I\}\setminus F)$,
the final states are those pairs $(S,O)$ with $O=\emptyset$, intuitively,
enforcing infinitely many breakpoints.
The constructed deterministic automaton $\mathcal{A}'$, interpreted as a B\"uchi
automaton, is easily shown to be equivalent to $\mathcal{A}$ (for a formal
proof, see \cite[Theorem 4.1]{miyano1984alternating}).

It remains to show that $\mathcal{A}'$ is counter-free and looping. For this, we
exploit the assumption that the initial automaton $\mathcal{A}$ is counter-free
and looping, and we focus our attention on the second component $O$ of the
states of $\mathcal{A}'$.
Since all states in $\mathcal{A}$ but $q_\sink$ are final, the second component
can only be empty or the singleton $\{q_\sink\}$.
Moreover, whenever it becomes the singleton $\{q_\sink\}$, it stays so forever,
since $(q_\sink,\sigma,q_\sink)\in\Delta$ for every $\sigma\in\Sigma$.
This produces potentially many sink states of the form $(S,\{q_\sink\})$ in
$\mathcal{A}'$, none of which is final. Of course, we can merge all these states
into a single macro-state, say $(\bot,\{q_\sink\})$, where the first component
is forgotten (we accordingly modify the transition rules for the pairs where the
second component is $\{q_\sink\}$).
The resulting B\"uchi automaton $\mathcal{A}''$ is still deterministic and
equivalent to $\mathcal{A}$, but now it is also looping: its sink state is
precisely $(\bot,\{q_\sink\})$.
As a matter of fact, this shows that, when the initial automaton is looping, the
breakpoint construction can be simplified.
For similar reasons, precisely because the projection of $\mathcal{A}''$ onto
the second component $O$ is already counter-free, in order to prove that
$\mathcal{A}''$ is counter-free, it suffices to show that the projection onto
the first component $S$ is also counter-free.
For this we observe that the latter projection is a classical subset
construction, and we exploit \cite[Remark 11.14]{diekert2008first}, where it is
shown that the subset construction of a counter-free automaton yields again a
counter-free automaton.
We thus conclude that $\mathcal{A}''$ is counter-free and looping, and hence by
Proposition \ref{looping equivalence} it defines a \safeLTL property, and so
does $\mathcal{A}$.
\end{proof}

\section{Automaton-based characterisation of \texorpdfstring{\CTLs}{CTL*} over finite paths} \label{second automaton}

In this section we present an automaton-based characterisation of the logic \CTLsf;
more precisely, we will provide effective translations between a restricted variant of 
(one-way) \PHTA and the normal form \PolCTLs of \CTLsf. By the equivalence of the latter logic with \FO, 
it follows that our automaton model captures precisely the expressive power of \FO. 

Towards introducing the restrictions on \PHTA that match the expressive power of \CTLsf,
we start by describing a \emph{linearisation} operation on the components of a given \PHTA. 
Intuitively, this will let us view each transformed component as an $\omega$-word automaton, 
thus enabling the use of classical characterisations of \FO-definable word languages.
Unfortunately, this first definition is quite technical.

\begin{defi} \label{Linearization}
Let $\mathcal{A} = \langle Q, \Sigma, \delta, q_I, F\rangle$ be a \PHTA,
$Q_i$ be a non-transient component of it, and $q\in Q_i$.
We define the $\omega$-word automaton $\mathcal{A}^q_\exit = \langle Q_i\uplus\{q_\exit\}, \Sigma \times 2^B, \Delta, q, Q_i \rangle$,
where $q_\exit$ is a fresh state, $B$ is the set of atoms 
with states outside the component $Q_i$,
$q$ is the new initial state, $Q_i$ is the set of final states, and $\Delta$ is the transition relation defined by a
case distinction as follows:
\begin{itemize}
    \item If $Q_i$ is an existential component, 
          then $\Delta$ contains triples of the form $(q',(\sigma,C),q'')$,
          with $\sigma\in\Sigma$ and $C\subseteq B$, 
          satisfying one of the following conditions:
          \begin{enumerate}
            \item $q'=q''=q_\exit$,
            \item $q'\in Q_i$, $q''=q_\exit$, and the disjunctive normal form of $\delta(q',\sigma)$ contains a clause of the form $\bigwedge C$,
            \item $q',q''\in Q_i$ and the disjunctive normal form of $\delta(q',\sigma)$ contains a clause of the form $(\diamond, q'') \land \bigwedge C$.
          \end{enumerate}
          In this case we interpret $\mathcal{A}^q_\exit$ as a nondeterministic coB\"uchi automaton (\NCA),
          which accepts an $\omega$-word whenever some run on it eventually leaves the component $Q_i$.
    %
    \item Dually, if $Q_i$ is a universal component, 
          then $\Delta$ contains triples of the form $(q',(\sigma,C),q'')$,
          with $\sigma\in\Sigma$ and $C\subseteq B$, 
          satisfying one of the following conditions:
          \begin{enumerate}
            \item $q'=q''=q_\exit$,
            \item $q'\in Q_i$, $q''=q_\exit$, and the conjunctive normal form of $\delta(q',\sigma)$ contains a clause of the form $\bigvee C$,
            \item $q',q''\in Q_i$ and the conjunctive normal form of $\delta(q',\sigma)$ contains a clause of the form $(\square, q'') \lor \bigvee C$.
          \end{enumerate}
          In this case we interpret $\mathcal{A}^q_\exit$ as a universal B\"uchi automaton (\UBA),
          which accepts an $\omega$-word whenever all runs on it stay forever inside the component $Q_i$.
    %
\end{itemize}
\end{defi}

We observe that the automata $\mathcal{A}^q_\exit$ defined above are also looping.
The above definition allows to consider every non-transient component as an $\omega$-word automaton, 
while carrying the branching information through a suitable annotation of the input (i.e., the sets $C$ of atoms with states outside the component). 
This interpretation is particularly adequate for formalising restrictions directly on the linearisations of the components.
A first natural restriction is to require the linearised components to be \emph{counter-free}, in order to match the expressive 
power of \FO, at least at the level of the annotated $\omega$-word languages. 
However, this does not suffice to capture the expressive power of \FO on \emph{trees}, essentially because spurious annotations 
may be introduced that conceal counting behaviour behind an apparently counter-free semantics.
Towards solving this problem, we enforce an additional restriction on the use of annotations, called \emph{visibility}
(the restriction was originally introdued in \cite{BBMP24} and used in \cite{benerecetti2025automaton} under
the name of \emph{mutual exclusion}). 
Hereafter, whenever we work with an automaton $\mathcal{A}^q_\exit$ and an annotation $C$, we tacitly assume that $C$ 
appears in the transition relation $\Delta$ of $\mathcal{A}^q_\exit$, namely, there exist $q'$ in the same component 
of $q$, $\sigma\in\Sigma$, and $q''$ in a different component of $q$ such that $(q',(\sigma,C),q'') \in \Delta$.

\begin{defi}
Let $Q_i$ be a non-transient component of a \PHTA $\mathcal{A}$. 
We say that $Q_i$ is \emph{visible} if for every pair of distinct annotations $C \neq C'$ that appear
in some transitions of $\mathcal{A}^q_\exit$, for any $q\in Q_i$, 
there are atoms $\theta\in C$ and $\theta'\in C'$ for which the tree languages
$\Lang(\mathcal{A}^\theta)$ and of $\Lang(\mathcal{A}^{\theta'})$ are the complement of one another. 
Accordingly, we say that $\mathcal{A}$ is \emph{visible} if every one of its non-transient components 
is visible. 
\end{defi}

Note that if $C,C'$ are distinct annotations of a component of a visible \PHTA $\mathcal{A}$, 
then, in particular, we have that $\Lang(\mathcal{A}^{\bigwedge C}) \cap \Lang(\mathcal{A}^{\bigwedge C'})$ is empty
and, symmetrically, $\Lang(\mathcal{A}^{\bigvee C}) \cup \Lang(\mathcal{A}^{\bigvee C'})$ contains all possible trees over $\Sigma$.
These two properties will turn out useful when the component is existential, resp., universal.

We further discuss the visibility restriction with an example, adapted from \cite[Example 5.1]{BBMP24}.

\begin{exa}
Let $a,b$ be atomic propositions and $\Sigma=2^{\{a,b\}}$. 
Consider the language $L$ of trees that have a finite access path of even length such that 
\begin{enumerate}
\item the path is uniformly labeled by $\emptyset$,
\item along the path every node at an even position has a child outside the path that is labeled by $\{a\}$, and 
\item the last node of the path has a child labeled by $\{b\}$. 
\end{enumerate}
It is easy to see that \FO cannot define such a language, since capturing the above properties requires counting modulo two,
which \FO cannot do in the standard signature for unordered trees --- the situation would be different if the signature 
contained the $i$-th successor predicate (see, e.g., \cite[Example 1]{Pot95} and \cite[Section 1]{Boj21}). 
    
Now, let us define the \PHTA $\mathcal{A} = \langle \{q_I, q, q_a, q_b\}, \Sigma, \delta, q_I, \emptyset\rangle$, where 
\[
\arraycolsep=2pt
\begin{array}{rclrcl}
\delta(q_I,\sigma) &=&
\begin{cases}
    \big((\diamond, q) \land (\diamond, q_a)\big) \lor (\diamond, q_b)   & \text{if $\sigma=\emptyset$}, \\
    \bot & \text{otherwise;}
\end{cases}
\qquad\qquad
&\delta(q,\sigma) &=&
\begin{cases}
    (\diamond, q_I) & \text{if $\sigma=\emptyset$}, \\
    \bot & \text{otherwise;}
\end{cases}
\\[4ex]
\delta(q_a,\sigma) &=&
\begin{cases}
    \top & \text{if $a\in\sigma$}, \\
    \bot & \text{otherwise;}
\end{cases}
\qquad\qquad
&\delta(q_b,\sigma) &=&
\begin{cases}
    \top & \text{if $b\in\sigma$}, \\
    \bot & \text{otherwise.}
\end{cases}
\end{array}
\]
%
It is easy to see that $\mathcal{A}$ recognises precisely the language $L$.
Moreover, the state set of $\mathcal{A}$ is partitioned into three components: 
an existential component $Q_1 = \{q_I, q\}$ and two transient components $Q_2 = \{q_b\}$ and $Q_3 = \{q_a\}$. 
We can therefore linearise $Q_1$, which results in the following \NCA $\mathcal{A}^{q_I}_\exit$:
\begin{center}
\begin{tikzpicture}[
    ->,                 
    >=stealth,          
    node distance=3.5cm,  
    every state/.style={thick}, 
    initial text=$ $,   
]

    \node[state, initial above] (qI) {$q_I$};
    \node[state] (q) [right of=qI] {$q$};
    
    \node[state] (qexit) [left of=qI] {$q_\exit$}; 

    \path
        (qI) edge[bend left] node[above] {$\big(\emptyset, \{(\diamond, q_a)\}\big)$} (q)

        (q) edge[bend left] node[below] {$(\emptyset, \emptyset)$} (qI)

        (qI) edge node[above] {$\big(\emptyset, \{(\diamond, q_b)\}\big)$} (qexit)

        (qexit) edge[loop below] node[left] {$(\sigma, C)~$} (qexit);

\end{tikzpicture}
\end{center}
where $(\sigma, C)$ denotes any symbol from $\Sigma=2^{\{a,b\}}$ paired with a possible annotation 
that consists of atoms with states outside $Q_1$.
Notice that $\mathcal{A}^{q_I}_\exit$ is counter-free; this provides a concrete example where only
enforcing counter-freeness on the linearisations of the components of a \PHTA does not guarantee
that the recognised tree language is \FO definable. 
It is also worth noting that 
$\mathcal{A}$ is not visible, since $Q_1$ itself is not visible. Indeed, for $C=\emptyset$ and 
$C' = \{(\diamond, q_b)\}$, we have $\bigwedge C = \top$ (by the usual convention that an empty conjunction is equivalent to $\top$)
and $\bigwedge C' = (\diamond, q_b)$, and hence $\Lang(\mathcal{A}^\top) \cap \Lang(\mathcal{A}^{(\diamond, q_b)}) \neq \emptyset$, 
which violates the definition of visibility.

On the other hand, if we insist in having a \emph{visible} \PHTA that recognises $L$, we would get only automata 
for which some linearisations of components are not counter-free. Indeed, we will see that it is the conjunction 
of those two properties, counter-freeness and visibility, that characterises \FO.
For example, consider the \PHTA $\mathcal{A}' = \langle Q', \Sigma, \delta', q_I, \emptyset\rangle$, where 
$Q' = \{q_I, q, q_a, q_b, q_{\neg a}, q_{\neg b}\}$ and where the images of $\delta'$ are defined as follows:
\begin{itemize}
\item $\delta' (q_I, \emptyset) = \bigvee 
  \begin{cases}
    ((\diamond, q) \land (\diamond, q_a) \land (\diamond, q_b)) \\
    ((\diamond, q) \land (\diamond, q_a) \land (\square, q_{\neg b})) \\
    ((\diamond, q_a) \land (\diamond, q_b))  \\
    ((\square, q_{\neg a}) \land (\diamond, q_b))
  \end{cases}$
\item $\delta'(q_I, \sigma) = \bot$, for every $\sigma \neq \emptyset$,
\item $\delta' (q, \emptyset) = \bigvee
  \begin{cases}
    ((\diamond, q_I) \land (\diamond, q_a) \land (\diamond, q_b))  \\
    ((\diamond, q_I) \land (\diamond, q_a) \land (\square, q_{\neg b})) \\
    ((\diamond, q_I) \land (\square, q_{\neg a}) \land (\diamond, q_b))\\
    ((\diamond, q_I) \land (\square, q_{\neg a}) \land (\square, q_{\neg b}))
  \end{cases}$
\item $\delta'(q, \sigma) = \bot$, for every $\sigma \neq \emptyset$,
\item for every $\alpha\in\{a,b,\neg a,\neg b\}$, 
      $\delta'(q_\alpha,\sigma)$ is either $\top$ or $\bot$ depending on whether $\sigma$ satisfies $\alpha$.
\end{itemize}
Despite the tedious definitions above, the intuition for this new automaton $\mathcal{A}'$ is quite simple, especially when
compared with the previous one $\mathcal{A}$. 
For example, the previous transition function $\delta$ was mapping 
$(q,\emptyset)$ to $(\diamond,q_I)$. The latter image can be seen as a formula that commits to checking certain local 
properties of the tree with states inside the same component of $q$ (i.e., $(\diamond, q_I)$) and other properties
with states outside the component of $q$ (i.e., $\top$). The new transition function $\delta'$ performs a similar
commitment, but now with refined atomic properties that are ``visible'' in the target states that exit the component of $q$.
As a matter of fact, the process of making those atomic commitments visible in a \HTA can be made automatic, and indeed 
in \cite[Proposition 5.7]{BBMP24} it was shown that every \HTA can be effectively transformed to 
an equivalent \HTA that is also visible. 

Coming back to our example, we claim that $\mathcal{A}'$ is visible. 
There are now five components: one existential component $Q'_1 = \{q_I,q\}$
and four transient components, one for each of the states $q_a,q_b,q_{\neg a},q_{\neg b}$.
The linearisation of $Q'_1$ is basically a refinement of the linearisation of the previous component $Q_1$ of $\mathcal{A}$; 
we depict this below, without drawing the uninteresting arrows towards $q_\exit$:

\begin{center}
\begin{tikzpicture}[
    ->,
    >=stealth,
    node distance=7cm,
    every state/.style={thick, circle, minimum size=1.5cm},
    initial text=$ $,
    auto,
    font=\small,
    scale=0.85, transform shape 
]

    \node[state] (qI) {$q_I$};
    \node[state] (q) [right of=qI] {$q$};

    \path
        (qI) edge[bend left=37, line width=1.1]
            node[above=-1mm,name=t1] {$(\emptyset,\{(\diamond,q_a),(\diamond,q_b)\})$} (q)
        (qI) edge[bend left=17, line width=1.1]
            node[above=-1mm,name=t2] {$(\emptyset,\{(\diamond,q_a),(\square,q_{\neg b})\})$} (q);

    \path
        (q) edge[bend left=13, line width=1.1]
            node[above,name=b1] {$(\emptyset,\{(\diamond,q_a),(\diamond,q_b)\})$} (qI)
        (q) edge[bend left=33, line width=1.1]
            node[above=1mm,name=b2] {$(\emptyset,\{(\diamond,q_a),(\square,q_{\neg b})\})$} (qI)
        (q) edge[bend left=53, line width=1.1]
            node[above=1.5mm,name=b3] {$(\emptyset,\{(\square,q_{\neg a}),(\diamond,q_b)\})$} (qI)
        (q) edge[bend left=87, line width=1.1]
            node[above=1mm,name=b4] {$(\emptyset,\{(\square,q_{\neg a}),(\square,q_{\neg b})\})$} (qI);

    \node[
        draw,dashed,ellipse,
        fit=(t1)(t2),
        inner xsep=-3pt,
        inner ysep=-4pt,
    ] (caseone) {};

    \node[right=15mm of caseone] (caseonetext) {same as old $\{(\diamond,q_a)\}$-labelled transitions};
    \draw[dashed] (caseone.east) -- (caseonetext.west);

    \node[
        draw,dashed,ellipse,
        fit=(b1)(b2)(b3)(b4),
        inner xsep=-5pt,
        inner ysep=-6pt,
        yshift=-2pt,
    ] (casetwo) {};

    \node[right=20mm of casetwo] (casetwotext) {same as old $\emptyset$-labelled transitions};
    \draw[dashed] (casetwo.east) -- (casetwotext.west);

\end{tikzpicture}
\end{center}
This automaton is not counter-free, because there is a path from $q_I$ to $q_I$ reading two occurrences of 
$\big(\emptyset, \{(\diamond, q_a), (\diamond, q_a)\}\big)$, but there is no path from $q_I$ to $q_I$ 
reading just one such occurrence. 
\end{exa}

In the following, we will prove that if a \PHTA is both counter-free and visible, it is expressively 
equivalent to \FO. We anticipate here that the property of being visible is also connected to the 
regular expressions for regular tree languages introduced in \cite{Tho87}, that 
will be further investigated in future work. 


\begin{thm}
Given a counter-free, visible \PHTA $\mathcal{A} = \langle Q, \Sigma, \delta, q_I, F \rangle$ over $\Sigma = 2^\mathrm{AP}$, there is a \PolCTLs formula $\varphi_\mathcal{A}$ such that $\Lang(\mathcal{A}) = \Lang(\varphi_\mathcal{A})$.
\end{thm}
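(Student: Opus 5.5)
We translate component by component, by well-founded induction on the order of the hesitant partition, in the same way as in the proof of Theorem~\ref{PHTA-POLPCTL equivalence}. For each state $q$ we build a \PolCTLs state formula $f(q)$ such that, for every pointed tree $(t,s)$, $(t,s)\models f(q)$ iff $\mathcal A^q$ has an accepting $(t,s)$-run. For positive Boolean combinations of atoms $\theta$ we set $f(\theta)$ exactly as before: $(\diamond_k,q)\mapsto\D^k f(q)$, $(\square_k,q)\mapsto\neg\D^k\neg f(q)$, and $\wedge,\vee$ homomorphically. If $q$ lies in a transient component, then $f(q)=\bigvee_\sigma(\psi_\sigma\wedge f(\delta(q,\sigma)))$, which only refers to lower components. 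The interesting case is when $q$ lies in a non-transient component $Q_i$. There we use the linearisation $\mathcal A^q_\exit$ of Definition~\ref{Linearization}, which is looping and counter-free by assumption.

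Suppose $Q_i$ is existential. Then $\mathcal A^q_\exit$ is a counter-free looping \NCA over $\Sigma\times 2^B$. By the lemma of Section~\ref{word automata char} it defines a \cosafeLTL language, so there is a \cosafeLTL formula $\chi_q$ over propositions $\AP\cup\{p_\theta\mid\theta\in B\}$. We read a letter $(\sigma,C)$ as ``$\psi_\sigma$ holds and $p_\theta$ is true exactly for $\theta\in C$''. We then put $f(q)=\E\,\chi_q[p_\theta\mapsto f(\theta)]$. This is legitimate in \PolCTLs, because the substitution only places state formulae of lower rank (available by induction) under an existential quantifier over a negation-free $\X,\U$ path formula. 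A minor subtlety is that \cosafeLTL is in negation normal form, so $\neg p_\theta$ becomes $\neg f(\theta)$, which is still a state formula. The universal case is dual: $\mathcal A^q_\exit$ is a counter-free looping \UBA, hence \safeLTL by the same lemma, and we set $f(q)=\A\,\chi_q[p_\theta\mapsto f(\theta)]$. Equivalently, we take $\neg\E$ of the dual co-safety formula, so that the result stays in the grammar. The final formula is $\varphi_{\mathcal A}=f(q_I)$.

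Correctness is proved in two directions, following the soundness/completeness structure of Theorem~\ref{PHTA-POLPCTL equivalence}.
\begin{itemize}
\item \emph{From accepting runs to paths.} Take an accepting run of $\mathcal A^q$ in the existential case. Follow the unique chain of $Q_i$-states in the run; it must leave $Q_i$, since $Q_i$ is rejecting by polarisation. At each node along it, record the annotation $C$ actually used by the chosen DNF clause. Any infinite tree path extending this chain, annotated this way, gives an accepted word of $\mathcal A^q_\exit$. By induction, every $\theta\in C$ satisfies $f(\theta)$ at that node.
\item \emph{From paths to runs.} Conversely, suppose there is a path $\pi$ satisfying $\chi_q[p_\theta\mapsto f(\theta)]$. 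It induces, at each node $v$, the \emph{semantic} annotation $C_v=\{\theta\in B\mid (t,v)\models f(\theta)\}$. However, the automaton run needs a \emph{syntactic} annotation $C$ occurring in $\Delta$ with $\bigwedge C$ satisfied. This mismatch is precisely where counting can hide, as the example above shows.
\end{itemize}

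The main obstacle is therefore this gap between semantic and syntactic annotations, and we resolve it with visibility. Among the annotations $C$ appearing in $\Delta$, visibility implies that at each node at most one has $\bigwedge C$ true in the existential case, since any two are mutually exclusive. Dually, in the universal case at most one has $\bigvee C$ false. We therefore refine the substitution: instead of letting the formula freely valuate the $p_\theta$, we encode each letter $(\sigma,C)$ by the state formula $\psi_\sigma\wedge f(\bigwedge C)$ (dually $\neg f(\bigvee C)$). Visibility makes these letter-formulae pairwise exclusive, so every tree path corresponds to at most one annotated word.

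To carry this out, we rewrite $\chi_q$ so that its atoms are letters, replacing each $p_\theta$ by the disjunction of the letter-formulae for those annotations containing $\theta$. This rewriting preserves \cosafeLTL form. With exclusivity, the word read along a path is uniquely determined by the tree, so the path formula holds iff that word is accepted by $\mathcal A^q_\exit$. From this, runs are built coinductively as in Theorem~\ref{PHTA-POLPCTL equivalence}, and the $\square$-side children required by universal components are supplied by the inductive hypothesis. I expect most of the technical care to go into checking this exclusivity argument and the handling of nodes where no annotation is satisfied; such nodes correspond to a dead letter, and the automaton can only exit there.
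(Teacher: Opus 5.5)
Your proposal is correct and follows essentially the same route as the paper: induction on the component order, a separate treatment of transient components, and, for an existential component, the linearisation $\mathcal{A}^q_\exit$ turned into a \cosafeLTL formula via the counter-free looping \NCA lemma; the annotation descriptions are then replaced by the inductively obtained formulae $\varphi^{\bigwedge C}$, visibility ensures that the annotation read along a path is uniquely determined, and universal components are handled by dualisation. The only care needed is twofold: your disjunctions must range over annotations that actually occur in transitions from $Q_i$ (over all $C\ni\theta$ they collapse back to the naive $f(\theta)$), and at a node where none of them holds the induced letter is dead, so the chain of $Q_i$-states, and any accepting run of $\mathcal{A}^q_\exit$, must have exited strictly before that node rather than at it.
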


\begin{proof}
Fix a counter-free, visible \PHTA $\mathcal{A} = \langle Q, \Sigma, \delta, q_I, F \rangle$. 
Similarly to the proof of Theorem \ref{PHTA-POLPCTL equivalence}, for each $\theta \in \mathcal{B}^+(\{\diamond_k, \square_k\} \times Q)$
(resp., $q\in Q$), we aim at constructing a corresponding \PolCTLs formula $\varphi^\theta$ (resp., $\varphi^q$)
such that $\Lang(\mathcal{A}^\theta) = \Lang(\varphi^\theta)$ (resp., $\Lang(\mathcal{A}^q) = \Lang(\varphi^q)$).
The desired formula that defines the original language $\Lang(\mathcal{A})$ would then be $\varphi_{\mathcal{A}} = \varphi^{q_I}$.
We will define the formulae $\varphi^\theta$ and $\varphi^q$ by a mutual induction. The case for $\theta$ is straightforward:
\begin{itemize}
\item $\varphi^\top = \top$,
\item $\varphi^\bot = \bot$,
\item $\varphi^{\theta_1 \land \theta_2} = \varphi^{\theta_1} \land \varphi^{\theta_2}$,
\item $\varphi^{\theta_1 \lor \theta_2} = \varphi^{\theta_1} \lor \varphi^{\theta_2}$,
\item $\varphi^{(\diamond_k,q)} = \D^k \varphi^q$,
\item $\varphi^{(\square_k,q)} = \neg \D^k \neg\varphi^q$.
\end{itemize}
In particular, as degenerate cases of the last two items, we get $\varphi^{(\diamond,q)} = \E\X \varphi^q$ and $\varphi^{(\square,q)} = \A\X \varphi^q$.

The rest of the proof is devoted to constructing the formulae $\varphi^q$. 
For this, we shall exploit a natural induction based on the order of the components of $\mathcal{A}$, starting from 
the smallest components (recall that the transitions of $\mathcal{A}$ can only move from one component to the same 
component or to a smaller one).
We distinguish some cases based on whether the considered state $q$ is in a transient, existential, or universal component:
\begin{itemize}
\item If $q$ belongs to a transient component $Q_i$, then by construction, for every $\sigma \in \Sigma$, 
      $\delta(q, \sigma) = \theta_{q, \sigma}$ is a positive Boolean combination of atoms that contain
      only states from components of order lower than $Q_i$. By the inductive hypothesis, we 
      know how to construct corresponding formulae $\varphi^{\theta_{q,\sigma}}$ using the previous definitions, 
      and we can then define 
      \[
        \varphi^q ~=~ \bigvee\nolimits_{\sigma \in \Sigma} \Big( \bigwedge\nolimits_{a\in\sigma} a ~\land~ \bigwedge\nolimits_{a\in\mathrm{AP}\setminus\sigma}\neg a ~\land~ \varphi^{\theta_{q, \sigma}} \Big).
      \]
    
\item If $q$ belongs to an existential component $Q_i$, then we consider the linearisation $\mathcal{A}^q_\exit$ of $Q_i$ 
      as defined in Definition \ref{Linearization}. 
      By construction, this is an \NCA $\mathcal{A}^q_\exit = \langle Q_i \uplus \{q_{exit}\}, \Sigma', \delta', q, Q_i \rangle$
      over an alphabet $\Sigma' = \Sigma \times 2^B$ that is obtained by annotating elements of $\Sigma$ with sets that contain 
      only atoms with states from components strictly smaller than $Q_i$.

      We remark a few important properties that will be used later:
      \begin{enumerate}[{P}1)]
        \item $\mathcal{A}^q_\exit$ is both counter-free and looping.
        \item For every annotation $C$ that appears in a transition of $\mathcal{A}^q_\exit$,
              $C$ contains only atoms with states from components of order lower than $Q_i$;
              we can thus activate the inductive hypothesis and obtain a formula 
              $\varphi^{\bigwedge C}$ that defines the language $\Lang(\mathcal{A}^{\bigwedge C})$. 
        \item Because $Q_i$ is visible, for every pair of distinct annotations $C,C'$
              used by $\mathcal{A}^q_\exit$, we have $\Lang(\varphi^{\bigwedge C}) \cap \Lang(\varphi^{\bigwedge C'}) = \emptyset$.  
      \end{enumerate}

      \medskip
      We begin by discussing Property P1. In view of this property, it is tempting to 
      claim a use of Lemma \ref{looping equivalence} for translating $\mathcal{A}^q_\exit$ 
      into an equivalent \cosafeLTL formula. However, this is not immediately possible because the
      alphabet of $\mathcal{A}^q_\exit$ 
      contains \emph{pairs} of the
      form $(\sigma,C)$, with $\sigma\subseteq\mathrm{AP}$ and $C\subseteq B$. It is however straightforwad 
      to encode
      every such pair $(\sigma,C)$ by the disjoint union $\sigma \uplus C$, and accordingly see the 
      alphabet $\Sigma'$ of $\mathcal{A}^q_\exit$ as a powerset of a new set $\mathrm{AP}_\exit = \mathrm{AP} \uplus B$
      of atomic propositions. To highlight this, we call this alphabet $\Sigma'_\exit$.
      Now, we can apply Lemma \ref{looping equivalence} correctly and 
      obtain a \cosafeLTL formula equivalent to $\mathcal{A}^q_\exit$, as summarised in the following claim:

      \begin{claim}
      There is a \cosafeLTL formula $\varphi^q_\exit$ that, up to the encoding of $\Sigma'$ into $2^{\mathrm{AP}_\exit}$,
      is equivalent to the \NCA $\mathcal{A}^q_\exit$.
      \end{claim}

      We have basically done the first main step towards translating $\mathcal{A}^q$ into an equivalent \PolCTLs formula $\varphi^q$. 
      To avoid any misunderstanding, we remark that the encoding that we have used above is a minor technical 
      detail, which is not directly related to the visibility assumption 
      (this assumption will be used later, to show the correctness of a certain construction based on guessing annotations ``on the fly'').

      Note that the formula $\varphi^q_\exit$ obtained above is in negation normal form and only uses the modal operators
      $\X$ and $\U$, as required by the grammar of \cosafeLTL (see Section \ref{preliminaries}). 
      We can take advantage of this normal form, to perform a further rewriting on $\varphi^q_\exit$, 
      so as to obtain a formula that, whenever uses any atomic predicate from $\Sigma'_\exit \cap B$, 
      it does so within a subformula of the form 
      \[
        \varphi^C_\exit ~=~ \bigwedge\nolimits_{\theta\in C} \theta ~\land~ \bigwedge\nolimits_{\theta\in B\setminus C} \neg\theta.
      \]
      Having enforced this stronger normal form, will allow us to manipulate entire subformulae $\varphi^C_\exit$
      as if they were atoms, e.g., enabling substitutions of $\varphi^C_\exit$ by other formulae.
      
      Below, we give another important claim (without proof, since it is easy to verify), which characterises
      acceptance of a tree by $\mathcal{A}^q$ by the existence of an annotated path accepted by $\mathcal{A}^q_\exit$:

      \begin{claim}
      A tree $t$ is accepted by $\mathcal{A}^q$ if and only if there are an infinite path $\pi = v_0 v_1 \dots$ 
      from the root, spelling out the sequence of labels $\sigma_0 \sigma_1 \dots$, and a sequence of annotations $C_0,C_1,\dots$ 
      such that (1) the $\omega$-word $(\sigma_0,C_0) \: (\sigma_1,C_2) \: \dots$ is accepted by $\mathcal{A}^q_\exit$,
      and (2) for every $i\ge 0$, the subtree $t_{v_i}$ of $t$ rooted at $v_i$ is accepted by $\mathcal{A}^{\bigwedge C_i}$
      (note the latter condition for $i=0$ is weaker than asking that $t$ is accepted by $\mathcal{A}^q$).
      \end{claim}

      We recall that, in order for $\mathcal{A}^q_\exit$ to accept an annotated path like above, it must eventually 
      exit the component $Q_i$ and reach state $q_\exit$, after which $\mathcal{A}^q_\exit$ accepts no matter what.
      So the claim above highlights the fact that acceptance of an infinite tree starting from an existential
      component of a \PHTA is witnessed by the existence of a suitable finite annotated path.
      
      The next step is rather straightforward and consists of replacing the conditions in the previous claim,
      which use automata, by equivalent conditions that use formulae. Specifically, we replace condition (1) 
      by the requirement that the $\omega$-word $(\sigma_0,C_0) \: (\sigma_1,C_2) \: \dots$ satisfies the
      \cosafeLTL formula $\varphi^q_\exit$. Equivalence of this new condition is implied by Claim 1.
      Similarly, thanks to Property P2, we can replace condition (2) by the requirement that, for every 
      $i\ge 0$, the subtree $t_{v_i}$ rooted at $v_i$ satisfies $\varphi^{\bigwedge C}$.
      Overall, this allows us to rewrite Claim 2 as follows:

      \begin{claim}
      A tree $t$ is accepted by $\mathcal{A}^q$ if and only if there are an infinite path $\pi = v_0 v_1 \dots$ 
      from the root, spelling out the sequence of labels $\sigma_0 \sigma_1 \dots$, and a sequence of annotations $C_0,C_1,\dots$ 
      such that (1) the $\omega$-word $(\sigma_0,C_0) \: (\sigma_1,C_2) \: \dots$ satisfies $\varphi^q_\exit$,
      and (2) for every $i\ge 0$, the subtree $t_{v_i}$ of $t$ rooted at $v_i$ satisfies $\varphi^{\bigwedge C_i}$.
      \end{claim}

      The final step is to derive a characterisation that does not require guessing beforehand the correct 
      sequence of annotations $C_0,C_1,\dots$ along a path, since modal logics like \PolCTLs do not have this ability. 
      The idea is instead to guess ``on the fly'', while traversing the path $\pi=v_0 v_1 \dots$, each annotation $C_i$
      and check that the current subtree $t_{v_i}$ satisfies the corresponding formula $\varphi^{\bigwedge C_i}$.
      However, because subtrees are nested one inside another, this raises the following crux: 
      is it possible that guessing annotations independently at the different nodes along the path $\pi=v_0 v_1 \dots$ 
      results in an inconsistent sequence $C_0,C_1,\dots$ that cannot be realised on the given tree $t$? 
      Note that the latter issue can only arise if there are multiple sequences of annotations
      that satisfy the required formulae. 
      Luckily, this is precisely the part where we can exploit the visibility assumption, and in particular Property P3: 
      it is not possible that two distinct sequences $C_0,C_1,\dots$ and $C'_0,C'_1,\dots$
      can be used to annotate the same sequence of labels $\sigma_0 \sigma_1 \dots$ along the same path 
      $\pi = v_0 v_1 \dots$ of $t$, and yet satisfy (1) $(\sigma_0,C_0) \: (\sigma_1,C_2) \: \dots \models \varphi^q_\exit$
      and (2) $t_{v_i} \in \Lang(\mathcal{A}^{\bigwedge C_i})$ for all $i\ge 0$. Indeed, if this happened,
      then there would exist a subtree $t_{v_i}$ that witnesses a non-empty intersection between the languages
      $\Lang(\varphi^{\bigwedge C_i})$ and $\Lang(\varphi^{\bigwedge C'_i})$. 
      As a result of this, we can simplify our characterisation of acceptance and replace 
      the quantification on sequences of annotations by simpler disjunctions evaluated at nodes:

      \begin{claim}
      A tree $t$ is accepted by $\mathcal{A}^q$ if and only if there is an infinite path $\pi = v_0 v_1 \dots$ 
      from the root 
      such that
      \[
          t,\pi, 0 ~\models~ \varphi_\exit^q \big[ \varphi^C_\exit / \varphi^{\bigwedge C} \big]
      \]
      where the formula on the right hand-side is obtained from $\varphi_\exit^q$ by replacing every occurrence
      of a subformula $\varphi^C_\exit$ (recall that these are plain propositional formulae fully describing 
      a local annotation $C$) by $\varphi^{\bigwedge C}$. 
      \end{claim}

      Note that the formula written above is a path formula of \PolCTLs over the original set $\mathrm{AP}$ 
      of atomic propositions. What remains to be done is to simply add the existential path quantifier $\E$ 
      on top of this, whose semantics reproduces precisely the characterisation given above,
      and obtain in this way the desired state formula of \PolCTLs equivalent to $\mathcal{A}^q$:

      \begin{claim}
      A tree $t$ is accepted by $\mathcal{A}^q$ if and only if it satisfies the \PolCTLs state formula
      \[
          \varphi^q ~=~ \E \Big( \varphi_\exit^q \big[ \varphi^C_\exit / \varphi^{\bigwedge C} \big] \Big).
      \]
      \end{claim}

\item It remains to consider the case where $q$ is a state from a universal component $Q_i$.
      Here we apply a standard dualisation argument to reduce this case to the existential one. 
      Let $\overline{\mathcal{A}} = \langle Q, \Sigma, \overline{\delta}, q_I, \overline{F}\rangle$
      be the \emph{dual} automaton of $\mathcal{A}$, defined by letting
      $\overline{F} = Q \setminus F$, and for every $q \in Q$ and $\sigma \in \Sigma$, 
      $\overline{\delta}(q, \sigma) = \overline{\delta(q, \sigma)}$, where $\overline{\theta}$ 
      is the dual of the positive Boolean formula $\theta$, 
      defined inductively as follows:
\[
\begin{array}{ccc}
    \overline{\top} = \bot & \qquad\qquad
    \overline{\theta_1 \land \theta_2} = \overline{\theta_1} \lor \overline{\theta_2} & \qquad\qquad
    \overline{(\diamond_k,q)} = (\square_k, q) \\[2ex]
    \overline{\bot} = \top & \qquad\qquad
    \overline{\theta_1 \lor \theta_2} = \overline{\theta_1} \land \overline{\theta_2} & \qquad\qquad
    \overline{(\square_k,q)}  = (\diamond_k, q).
\end{array}
\]
     $\mathcal{A}$ and $\overline{\mathcal{A}}$ clearly share the same structural constraints, and every existential (resp., universal) component in $\mathcal{A}$ is a universal (resp., existential) component in $\overline{\mathcal{A}}$. Moreover, if $\mathcal{A}$ is counter-free and visible, $\overline{\mathcal{A}}$ is too, since no modification is performed at the structural level of the components. 
     Since the component of $q$ in $\mathcal{A}$ is universal, the corresponding component in $\overline{\mathcal{A}}$ is existential.
     We can thus apply our construction for the existential case. 
     Moreover, by an easy adaptation of \cite[Lemma 5.2]{KVW00}, we know that
     $\overline{\mathcal{A}^q}$ recognises the complement of the language $\Lang(\mathcal{A}^q)$, and
     hence the negation of $\varphi_{\overline{\mathcal{A}^q}}$ is the desired formula. 
     \qedhere
    \end{itemize}
\end{proof}

We conclude the section by providing the translation in the other direction,
from \PolCTLs to a counter-free visible \PHTA.

\begin{thm}
Given a \PolPCTL formula $\varphi$ over a set of atomic propositions $\mathrm{AP}$, there is a counter-free visible \PHTA $\mathcal{A}_\varphi$ such that $\Lang(\varphi) = \Lang(\mathcal{A}_\varphi$). 
\end{thm}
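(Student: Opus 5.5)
The statement is meant for \PolCTLs, the normal form of \CTLsf from Theorem~\ref{normal form CTLsf} (the name \PolPCTL in it looks like a slip), and I prove it in that form. The plan is the classical Kupferman--Vardi--Wolper construction for \CTLs, by structural induction on state formulae. For $\varphi$ I build a \PHTA $\mathcal{A}_\varphi$ whose components are the components of the automata for the maximal proper state subformulae of $\varphi$, plus at most one new top component. Both properties are local to a component: counter-freeness concerns the linearisation $\mathcal{A}^q_\exit$, and visibility concerns the annotations used inside the component. So they only need to be checked on the new top component each time.

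\textbf{Easy cases.} Atomic propositions, Boolean connectives and $\D^n$ need only transient states. For $p$ the transition is $\top$ or $\bot$ according to $\sigma$. Disjunction and conjunction are handled by a fresh transient initial state. For $\D^n\varphi'$ the transition is $(\diamond_n, q^{\varphi'}_I)$. Negation is handled by the dualisation already used in the previous theorem: dualising swaps existential and universal components and keeps both counter-freeness and visibility, and by the adaptation of \cite[Lemma 5.2]{KVW00} it complements the language. Hence the automata may be assumed closed under complement, and I only need to treat $\E\psi$.

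\textbf{The case $\E\psi$.} Replace each maximal state subformula $\varphi_j$ of $\psi$ by a fresh atom $p_j$, which yields a \cosafeLTL formula $\tilde\psi$ over $\mathrm{AP}\uplus\{p_j\}$. By Proposition~\ref{looping equivalence}, $\tilde\psi$ is recognised by a counter-free looping deterministic coB\"uchi automaton $\mathcal{D}$, which in particular is a counter-free looping \NCA. The non-sink states of $\mathcal{D}$ become the new existential component, and all of them are non-accepting, so the component is polarised. A letter of $\mathcal{D}$ fixes the truth values of all the $p_j$, and these are verified through annotations. For each full type $\tau\subseteq\{p_j\}$, the annotation contains, for every $j$, an atom checking either $\varphi_j$ (when $p_j \in \tau$) or $\neg\varphi_j$ (when $p_j\notin\tau$), using the inductively built automata and their duals. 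Every annotation therefore contains exactly one of a pair of atoms with complementary languages, so two distinct annotations always contain a pair of complementary atoms. This is exactly what visibility requires.

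\textbf{Counter-freeness of the linearisation.} The linearisation $\mathcal{A}^q_\exit$ is then, up to the letter map (read the type $\tau$ off the annotation $C$), the automaton $\mathcal{D}$ with its sink renamed $q_\exit$. A path on $w^n$ in $\mathcal{A}^q_\exit$ projects to a path on $h(w)^n$ in $\mathcal{D}$, where $h$ is the letter map. Counter-freeness of $\mathcal{D}$ gives a path on $h(w)$. Since every transition of $\mathcal{A}^q_\exit$ is determined by its projected letter, this path lifts back to a path on $w$. So counter-freeness transfers along the inverse letter map. Correctness should then follow by reversing the argument of Claims~2--5 in the previous proof: an accepting run in the new component yields a path of the tree together with types $\tau_i$ whose annotations are satisfied, hence a path on which $\tilde\psi[p_j/\varphi_j]$ holds, and conversely.

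\textbf{Main obstacle.} The delicate point is locality. In Definition~\ref{Linearization}, the atoms of an annotation refer to children of the current node (they have the form $(\diamond_k,q)$ or $(\square_k,q)$), whereas the $p_j$ must be evaluated at the current node itself. I would handle this with a one-step delay. Assume first that each $\varphi_j$ is presented so that its automaton begins with a transient state; such a state is ``evaluated at the current node'' only through its own transition. Then let the component state remember the type $\tau$ guessed for the next node, and at node $v_i$ take the clause
\[
  (\diamond,(d',\tau')) \land \textstyle\bigwedge C_{\tau'},
\]
where $(d',\tau')$ is the next state of the component and $C_{\tau'}$ realises $\tau'$. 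The problem is that the child chosen by $(\diamond,(d',\tau'))$ must be the same child on which $\tau'$ holds, which the bare clause does not guarantee. My first attempt is to fold the type check into the child state: $(d',\tau')$ has no direct access to its own node, but its clause at the child can conjoin, for each $j$, the first-step transition $\delta(q^{\varphi_j}_I,\sigma)$ or its dual $\overline{\delta(q^{\neg\varphi_j}_I,\sigma)}$. These first-step transitions are positive Boolean combinations of lower-component atoms, so after normalisation to DNF they become part of the annotation. Visibility must then be re-established, and for that I would fall back on the refinement of \cite[Proposition 5.7]{BBMP24}. The remaining point is that this refinement only re-labels annotations, so counter-freeness survives by the inverse-letter-map argument above.
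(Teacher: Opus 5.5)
You are right that the statement is meant for \PolCTLs, and your easy cases match the paper's. You diverge exactly at the locality problem you identify. The paper avoids it syntactically: it first passes to the simple form of \cite{BBMP24}, where $\E$-formulae occur only as $\D^1\E\psi$. Then every maximal state subformula of $\gamma$ is some $\D^{n_i}\psi_i$, which is checked at the current node by the single atom $(\diamond_{n_i},q_i)$ and refuted by $(\square_{n_i},\overline{q_i})$. Annotations are therefore in bijection with types, visibility is immediate, and the linearisation is the \NCA for $\gamma'$ with renamed letters, so no refinement is ever needed. You instead unfold one step of the subformula automata into the new component. The delay through states $(d,\tau)$ is harmless but unnecessary, since the state at $v$ can conjoin the first-step transitions for its own guessed type. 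This route can be made to work, but your final step is not justified as written.

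\textbf{The gap.} You assert that the refinement of \cite[Proposition 5.7]{BBMP24} ``only re-labels annotations'', so that counter-freeness transfers along an inverse letter map. In general this is false: a refined annotation can be compatible with several old ones, which merges transitions the old letters kept apart. This is what happens in the paper's Example. There $(\emptyset,\{(\diamond,q_a),(\diamond,q_b)\})$ refines both $\{(\diamond,q_a)\}$ and $\emptyset$, and so creates the counter $q_I\to q\to q_I$ in $\mathcal{A}'$. In fact no visibility-enforcing procedure can preserve counter-freeness in general. Otherwise, applied to the counter-free \PHTA $\mathcal{A}$ of that Example, it would give a counter-free visible \PHTA for the non-\FO-definable language $L$, contradicting the preceding theorem.

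\textbf{What rescues your construction.} There is a specific property you need to prove. The dual of a positive formula is, semantically, its atom-wise negation. Hence any DNF clause of a first-step transition $\delta(q^{\varphi_j}_I,\sigma)$ and any DNF clause of its dual contain a pair of dual atoms $(\diamond_k,q)$, $(\square_k,\overline{q})$. Now take as annotations full valuations, choosing each atom or its dual, over the atoms occurring for \emph{all} letters $\sigma$. Visibility compares annotations across all transitions of the component, not only those for a fixed $\sigma$. Each such annotation, together with $\sigma$, is then compatible with annotations of a unique type $\tau$. Since your transitions depend only on $(\sigma,\tau)$, the linearisation is again the deterministic automaton $\mathcal{D}$ up to a letter map, and counter-freeness transfers. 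Define the annotations this way directly instead of calling \cite{BBMP24} as a black box. A black box would also leave you to check its effect on the polarised component structure.
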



\begin{proof}
We follow \cite[Sec.~5.1]{BBMP24} and employ a \emph{simple form} of \PolCTLs, in which state formulae of the form $\E\psi$ in the grammar of \PolCTLs are replaced by equivalent state formulae of the form $\D^1\E\psi$. This simple form is expressively complete for \PolCTLs, as it follows by a straightforward adaptation of \cite[Proposition B.4]{BBMP24}, and it is needed to enforce the visibility property. 
The proof goes by induction on the structure of \PolCTLs state formulae in simple form. 
In particular, we consider a \PolCTLs state formula $\psi$ in simple form, and construct 
a corresponding \PHTA $\mathcal{A}_\psi$, by distinguishing the following cases:
\begin{itemize}
    \item $\psi = p$. An automaton equivalent to an atomic proposition $p$ just needs one state. We can set $\mathcal{A}_\psi = \langle \{q_I\}, \Sigma, \delta, q_I, \emptyset\rangle$, with $\delta(q_I, \sigma) = \top$ or $\delta(q_I, \sigma) = \bot$ depending on whether $p \in \sigma$ or not. 
    \item $\psi = \neg \psi'$. By inductive hypothesis, there is an automaton $\mathcal{A}_{\psi'}$ equivalent to $\psi'$. Then, it suffices to construct the dual automaton $\overline{\mathcal{A}_{\psi'}}$, which recognises the complement language of $\Lang(\mathcal{A}_{\psi'})$. 
    \item $\psi = \psi_1 \lor \psi_2$. By inductive hypothesis, we can construct two automata $\mathcal{A}_{\psi_1} = \langle Q_1, \Sigma, \delta_1, q_1, F_1 \rangle$ and $\mathcal{A}_{\psi_2} = \langle Q_2, \Sigma, \delta_2, q_2, F_2 \rangle$, equivalent to $\psi_1$ and $\psi_2$, respectively. 
        We then define $\mathcal{A}_\psi = \langle Q_1 \uplus Q_2 \uplus \{q_I\}, \Sigma, \delta, q_I, F_1 \uplus F_2 \rangle$,
        where $q_I$ is a fresh state. 
        The new state set $Q_1 \uplus Q_2 \uplus \{q_I\}$ inherits the partitions into components from $\mathcal{A}_{\psi_1}$ and $\mathcal{A}_{\psi_2}$, plus a singleton, transient component $\{q_I\}$. The types of components for $Q_1$ and $Q_2$ are the same as in the original automata. The ordering of the components is defined hierarchically: the singleton component $\{q_I\}$ is the highest-order component, then come the components of $Q_1$, followed by those of $Q_2$. 
        Note that the set of final states of $\mathcal{A}_{\psi}$ is the union of the final states of $\mathcal{A}_{\psi_1}$ and $\mathcal{A}_{\psi_2}$.
        Finally, the transition function $\delta_\psi$ coincides with $\delta_1$ (resp., $\delta_2$) when applied to states in $Q_1$ (resp., $Q_2$), and maps $(q_I,\sigma)$, for any $\sigma \in \Sigma$, to the positive Boolean formula 
        $\delta_\psi(q_1, \sigma) \lor \delta_\psi(q_2, \sigma)$. 
                
        It is easy to see that the automaton $\mathcal{A}_\psi$ recognises precisely the union of the languages of $\mathcal{A}_{\psi_1}$ and $\mathcal{A}_{\psi_2}$, while still remaining counter-free and visible.
    \item $\psi = \D^n \psi'$. By inductive hypothesis, we can construct an automaton $\mathcal{A}_{\psi'} = \langle Q', \Sigma, \delta', q'_I, F' \rangle$ equivalent to $\psi'$. 
          We define $\mathcal{A}_\psi=\langle Q \uplus \{q_I\}, \Sigma, \delta, q_I, F' \rangle$ as follows.
          The state set is obtained by extending the original set $Q$ with a fresh state $q_I$, which is the new initial state of $\mathcal{A}_\psi$.
          This state set inherits the partition into components from the original automaton $\mathcal{A}_{\psi'}$, and a new singleton, transient component $\{q_I\}$ is added.
          The same set $F'$ of final states is adopted as set of final states in $\mathcal{A}_{\psi}$.
          Moreover, the transition function $\delta$ is the extension of $\delta'$ that maps any pair $(q_I,\sigma)$, with $\sigma\in\Sigma$, to $(\diamond_n, q'_I)$. 

          Intuitively, the automaton $\mathcal{A}_\psi$ starts at the root and immediately selects $n$ children along which it sends copies of $\mathcal{A}_{\psi'}$ that check satisfaction of the subformula $\psi'$.
          It is then straightforward to check that $\mathcal{A}_\psi$ recognises precisely the language defined by $\psi = \D^n\psi'$.

    \item $\psi = \E \gamma$. While in the previous cases we could use the inductive hypothesis, in the case of $\psi = \E\gamma$ we cannot, 
          since $\gamma$ is not a \emph{state} subformula, but a \emph{path} subformula. 
          We proceed by following the classical approach of substituting problematic state subformulae by new atomic propositions, see e.g.~\cite[Theorem 5.3]{KVW00}. 
          First, we identify \emph{maximal} state subformulae of $\gamma$ of the form $\E\gamma_1$ or $\D^n \psi_1$, and collect them in the set $\max(\gamma)$.
          Formally, $\max(\gamma)$ contains all and only those formulae $\E\gamma_1$ or $\D^n \psi_1$ that occur as subformulae of $\gamma$ but are not themselves
          contained in larger subformulae of $\varphi$ of the form $\E\gamma_1$ or $\D^n \psi_1$.

          By the assumption on $\varphi$ being in simple form, we deduce that no formula in $\max(\gamma)$ is of the form $\E\gamma_1$. Hence, we have $\max(\gamma) = \{\D^{n_1}\psi_1, ..., \D^{n_k}\psi_k\}$. 
          By the inductive hypothesis, for every $i\le k$, we can construct an automaton $\mathcal{A}_{\psi_i} = \langle Q_i, \Sigma, \delta_i, q_i, F_i \rangle$ equivalent to $\psi_i$. 
          We then consider the dual versions $\overline{\mathcal{A}_{\psi_i}} = \langle \overline{Q_i}, \Sigma, \overline{\delta_i}, \overline{q_i}, \overline{F_i} \rangle$ of these automata. 
          Without loss of generality, we can assume that all state sets $Q_1,\dots,Q_k,\overline{Q_1},\dots,\overline{Q_k}$ are pairwise disjoint.

          We continue by introducing a fresh atomic proposition $p_i$ for each formula $\D^{n_i} \psi_i$ in the set $\max(\gamma)$. We let $\mathrm{AP}' = \mathrm{AP} \uplus \{p_1,\dots,p_k\}$,
          and we replace in $\gamma$ every occurrence of $\D^{n_i} \psi_i$ (for $i=1,\dots,k$) with the corresponding atomic proposition $p_i$, obtaining a formula $\gamma'$ over the extended
          signature $\mathrm{AP}'$. By construction, thanks to the form of the grammar of \PolCTLs, the formula $\gamma'$ turns out to be a \cosafeLTL formula. 
          By Lemma \ref{looping equivalence}, there is a counter-free, looping \NCA $\mathcal{A}_{\gamma'}$ equivalent to $\gamma'$.  We now claim the following (the proof is simple and thus omitted):

          \begin{claim}
          A tree $t$ satisfies $\E\gamma$ if and only if there are an infinite path $\pi$ in $t$ and an $\omega$-word 
          $w \in \Lang(\mathcal{A}_{\gamma'})$ (note that this word is over the extended alphabet $\Sigma' = 2^{\mathrm{AP}'}$) 
          such that, for every $j \geq 0$:
          \begin{enumerate}
          \item $w(j) \cap \mathrm{AP} = t(\pi(j))$, namely, the restriction of the symbol $w(j)$ to the original set $\mathrm{AP}$ 
                of atomic propositions coincides with the label at position $j$ in $\pi$ according to $t$,
          \item for every $i=1,\dots,k$, $p_i \in w(j)$ iff $t,\pi(j) \models \D^{n_i} \psi_i$.
          \end{enumerate}
          \end{claim}

        Intuitively, in the above claim the $\omega$-word $w$ encodes entirely the information available along the path $\pi$ that is relevant for checking satisfaction of
        the formula $\E \gamma$. 
        Accordingly, the desired automaton for $\E\gamma$ will guess a path $\pi$ and simulate the \NCA $\mathcal{A}_{\gamma'}$ along this path, while sending copies of 
        the automata $\mathcal{A}_{\psi_i}$ and $\overline{\mathcal{A}_{\psi_i}}$ (for every $i \in [1, k]$) to check satisfaction of the subformulae $\D^{n_i}\psi_i$. 
        Formally, we define $\mathcal{A}_\psi = \langle Q \cup \bigcup\nolimits_{i=1,\dots,k} (Q_i \cup \overline{Q_i}), \Sigma, \delta_\psi, q_I, F \rangle$ as follows.
        The set $Q$ is the state set of the \NCA $\mathcal{A}_{\gamma'}$, and forms a single existential component inside $\mathcal{A}_\psi$.
        The other sets $Q_i$ and $\overline{Q_i}$ inherit as usual the partition into components from their respective automata $\mathcal{A}_{\psi_i}$ and $\overline{\mathcal{A}_{\psi_i}}$.
        As for the ordering on components, $Q$ is the component of highest order, and the other components are ordered as in their respective automata, with some arbitrary choices
        for components belonging to different automata. 
        The set $F$ of final states of $\mathcal{A}_\psi$ is the union of the sets of final states of the automata $\mathcal{A}_{\psi_i}$ and $\overline{\mathcal{A}_{\psi_i}}$, for every $i \in [1, k]$.
        Finally, the transition function $\delta_\psi$ is the union of the transition functions $\delta_i$ and $\overline{\delta_i}$ on their respective domains $Q_i$ and $\overline{Q_i}$
        (for $i=1,\dots,k$), extended so as to satisfy, for every $q\in Q$ and $\sigma\in\Sigma$,
        \[
        \delta_\psi(q, \sigma) ~=~ 
        \bigvee\nolimits_{I \subseteq [1, k]} 
        \bigvee\nolimits_{q' \text{ s.t.~} (q,\sigma^+,q')\in \Delta} 
        \Big((\diamond, q') ~\land~ \bigwedge\nolimits_{i \in I} (\diamond_{n_i}, q_i) ~\land~ \bigwedge\nolimits_{i \in [1,k] \setminus I} (\square_{n_i}, \overline{q_i})\Big) 
        \]
        where $\sigma^+ = \sigma \cup \bigcup_{i \in I} \{p_i\}$ and $\Delta$ is the transition relation of the \NCA $\mathcal{A}_{\gamma'}$. 
        In other words, the above transition function allows the automaton $\mathcal{A}_\psi$ to read an input symbol $\sigma$ from the current state $q$,
        guess an annotation $\sigma^+$ of $\sigma$ that represents, for each $i=1,\dots,k$, whether $\D^{n_i} \psi_i$ or its negation holds at the current subtree, 
        and accordingly simulate a $\sigma^+$-labeled transition of the \NCA $\mathcal{A}_{\gamma'}$ from $q$ to $q'$ while moving towards a child, eventually
        witnessing a path in the input tree that satisfies $\gamma$.
        
        By the above claim and thanks to the fact that each $\mathcal{A}_{\psi_i}$ (resp., $\overline{\mathcal{A}_{\psi_i}}$) is a \PHTA equivalent to $\D^{n_i} \psi_i$ (resp., $\neg\D^{n_i} \psi_i$),
        we obtain that the thus constructed automaton $\mathcal{A}_\psi$ is visible and equivalent to $\psi = \E\gamma$. 
        To prove that the automaton is also counter-free, we need to show that the linearisation of each component of $\mathcal{A}_\psi$ is counter-free. 
        We recall that the components of $\mathcal{A}_\psi$ are basically given by the components of the automata $\mathcal{A}_{\psi_i}$ and $\overline{\mathcal{A}_{\psi_i}}$,
        whose linearisations are already counter-free, 
        plus an additional existential component $Q$.
        Let $\mathcal{A}$ be the linearisation of the latter component $Q$.
        The crucial observation is that $\mathcal{A}$ is isomorphic to the \NCA $\mathcal{A}_{\gamma'}$, which we know is already counter-free.
        Indeed, the only difference between the two automata $\mathcal{A}$ and $\mathcal{A}_{\gamma'}$ is in their alphabets:
        by definition of linearisation, the former automaton is over the alphabet $\Sigma \times 2^B$, where $B$ is the set of atoms with states in $\bigcup\nolimits_i(Q_i \cup \overline{Q_i})$,
        and the latter automaton is over the alphabet $\Sigma' = 2^{\mathrm{AP}'} = 2^{\mathrm{AP} \uplus \{p_1,\dots,p_k\}} = \Sigma \times 2^{\{p_1,\dots,p_k\}}$.
        There is a natural bijection between the subsets $P\subseteq \{p_1,\dots,p_k\}$ and the subsets $C\subseteq B$, which maps any $P\subseteq \{p_1,\dots,p_k\}$
        to the set $C = \bigcup\nolimits_{i \text{ s.t.~} p_i\in P} (\diamond_{n_i},q_i) \:\cup\: \bigcup\nolimits_{i \text{ s.t.~} p_i\notin P} (\square_{n_i},\overline{q_i})$.
        Moreover, it is routine to verify that this bijection is an isomorphism between the automata $\mathcal{A}$ and $\mathcal{A}_{\gamma'}$,
        in the sense that it maps transitions of $\mathcal{A}$ to corresponding transitions of $\mathcal{A}_{\gamma'}$, while preserving the source and target states.
        This shows that the linearisation of $Q$ is counter-free, and hence $\mathcal{A}_{\psi_i}$ is a counter-free \PHTA.
        \qedhere
        \end{itemize} 
\end{proof}

The above two theorems let us conclude the following. 

\begin{cor}
    Counter-free visible \PHTA and \PolCTLs are equivalent formalisms.
\end{cor}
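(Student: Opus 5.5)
The corollary follows by combining the two translations just established, so the plan is to check that they compose into a two-way equivalence under initial equivalence. The first of the two preceding theorems turns every counter-free visible \PHTA $\mathcal{A}$ into a \PolCTLs formula $\varphi_\mathcal{A}$ with $\Lang(\mathcal{A}) = \Lang(\varphi_\mathcal{A})$. This gives the inclusion of the automaton class in the logic.

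For the converse I will read the second theorem as ranging over \PolCTLs formulae, which is what its proof actually treats: the simple form $\D^1\E\psi$, the reduction of $\gamma$ to a \cosafeLTL formula $\gamma'$, and the appeal to Lemma~\ref{looping equivalence} all concern \PolCTLs. The mention of \PolPCTL in its statement I take to be a typographical slip. Read this way, the theorem yields for every \PolCTLs state formula $\varphi$ a counter-free visible \PHTA $\mathcal{A}_\varphi$ with $\Lang(\varphi)=\Lang(\mathcal{A}_\varphi)$.

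Both translations are effective, since each is an explicit structural induction. Both preserve the recognised or defined tree language, evaluated at the root. By the notion of expressive equivalence fixed in Section~\ref{preliminaries}, the two classes are therefore equivalent formalisms.

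The main obstacle is not the composition but the soundness of the two ingredients the corollary silently relies on. I would re-check two points in particular.

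The first is that the simple form is expressively complete for \PolCTLs. The plan is to rewrite $\E\psi$ as $\D^1\E\X$-shifted forms, adapting \cite[Proposition B.4]{BBMP24}, while keeping the path formula co-safe, that is, built only from $\X$, $\U$, $\land$ and $\lor$.

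The second is that the disjoint-union constructions for $\lor$ and $\D^n$ preserve visibility. This holds because annotations in the components of $\mathcal{A}_{\psi_1}$ and $\mathcal{A}_{\psi_2}$ are left unchanged, and the new initial states form transient components, which carry no visibility requirement.

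Once these are granted, the corollary is immediate. Combining it with Theorem~\ref{normal form CTLsf} and the equivalence of \CTLsf with \FO then gives the characterisation of \FO by counter-free visible \PHTA.
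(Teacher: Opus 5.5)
Your proposal is correct and matches the paper, which obtains the corollary directly by combining the two preceding theorems (automaton to \PolCTLs formula, and \PolCTLs formula to counter-free visible \PHTA). You are also right to read the second theorem's statement as concerning \PolCTLs rather than \PolPCTL: its proof only ever handles the \PolCTLs grammar.
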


By the equivalence between \FO and \PolCTLs, one can also conclude the following.

\begin{cor}
    Counter-free visible \PHTA and \FO are equivalent formalisms.
\end{cor}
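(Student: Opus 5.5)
The statement is the final corollary: counter-free visible \PHTA and \FO are equivalent formalisms. The plan is to chain three equivalences already established in the paper, all taken under initial equivalence.

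\emph{Step 1 (automata and the normal form).} The preceding corollary gives effective translations in both directions between counter-free visible \PHTA and \PolCTLs. From automata to formulae we use the theorem that builds $\varphi_{\mathcal A}$ by induction on the components. From formulae to automata we use the theorem that builds $\mathcal A_\varphi$. That theorem is stated for \PolPCTL, but its proof actually treats \PolCTLs formulae in simple form, so we read it in that sense.

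\emph{Step 2 (normal form and \CTLsf).} Theorem~\ref{normal form CTLsf} gives that \PolCTLs and \CTLsf are expressively equivalent. The translation in each direction is a structural induction that preserves the state semantics, so in particular it preserves initial equivalence.

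\emph{Step 3 (\CTLsf and \FO).} By \cite[Proposition 3]{BBMP23}, \CTLsf and \FO sentences have the same expressive power over unranked unordered infinite trees.

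Since ``as expressive as'' means effective translations in both directions producing initially equivalent objects, this relation is transitive. Composing Steps 1--3 therefore yields the claim. For an automaton $\mathcal A$, its language $\Lang(\mathcal A)$ is the set of trees accepted from the root, and $\Lang(\varphi)$ is likewise evaluated at the root. Hence language equality between automata and formulae is exactly initial equivalence, and the composition is well typed.

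The only point needing care is to check that the models coincide along the chain. Every step must concern the same class of structures: $2^{\AP}$-labelled, unranked, unordered, leafless trees, with graded (counting) temporal operators matching \FO with the ancestor relation. The \FO result in \cite{BBMP23} is stated for this setting, and the automaton results above use $\Sigma = 2^{\AP}$. So no alphabet re-encoding is needed and the argument is just this composition. I expect no genuine obstacle beyond this bookkeeping check.
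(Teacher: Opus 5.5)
Your proposal is correct and follows the paper's argument: the corollary is obtained by composing the preceding corollary (counter-free visible \PHTA $\equiv$ \PolCTLs) with the equivalence of \PolCTLs and \FO, which rests on Theorem~\ref{normal form CTLsf} (\PolCTLs $\equiv$ \CTLsf) and \cite[Proposition 3]{BBMP23} (\CTLsf $\equiv$ \FO). You are also right to read the formula-to-automaton theorem as concerning \PolCTLs: its statement says \PolPCTL, but its proof treats \PolCTLs formulae in simple form.
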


Finally, going through the respective equivalence with \FO, there is a final result concerning equivalence of automata.

\begin{cor} \label{automata equivalence}
    Counter-free visible \PHTA and two-way linear \PHTA are equivalent formalisms. 
\end{cor}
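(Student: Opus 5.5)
The plan is to obtain the statement by chaining equivalences already established in the paper, so no new automaton construction is needed. The pivot is \FO: both automaton classes have been shown to capture exactly \FO-definable tree languages under initial equivalence. Expressive equivalence is transitive, so the two classes define the same tree languages. The step I expect to require the most care is checking that the translations compose \emph{effectively}, as the paper's notion of ``as expressive as'' demands.

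\textbf{From counter-free visible \PHTA to two-way linear \PHTA.} Given a counter-free visible \PHTA $\mathcal{A}$, I would apply the translation of the penultimate theorem to obtain a \PolCTLs formula $\varphi_\mathcal{A}$ with $\Lang(\mathcal{A})=\Lang(\varphi_\mathcal{A})$. Theorem~\ref{normal form CTLsf} then gives an initially equivalent \CTLsf formula. By \cite[Proposition 3]{BBMP23}, this yields an equivalent \FO sentence, and by \cite[Theorem 4.5]{Sch92a} an initially equivalent \PolPCTL formula. Finally, the theorem translating \PolPCTL into two-way linear \PHTA produces the desired automaton.

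\textbf{From two-way linear \PHTA to counter-free visible \PHTA.} I would run the same chain in reverse. Starting from a two-way linear \PHTA, Theorem~\ref{PHTA-POLPCTL equivalence} gives a \PolPCTL formula. Passing through \FO, I then obtain a \CTLsf formula and, by Theorem~\ref{normal form CTLsf}, a \PolCTLs formula. The last theorem of this section (stated for \PolCTLs, though its statement says \PolPCTL) then yields a counter-free visible \PHTA.

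\textbf{Main obstacle: effectiveness.} The delicate step is making sure every link is effective and preserves \emph{initial} equivalence, i.e.\ languages of trees evaluated at the root.

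\begin{itemize}
\item For the temporal logics this is fine, since each cited equivalence with \FO is by effective translation.
\item The \CTLsf-to-\PolCTLs step invokes Proposition~\ref{equivalence regex-cosafe} and the \LTL/star-free correspondence, both of which are effective.
\item In the automaton-to-\PolCTLs direction, the visibility hypothesis is used semantically: it requires deciding whether $\Lang(\mathcal{A}^\theta)$ and $\Lang(\mathcal{A}^{\theta'})$ are complementary. For a mere equivalence of expressive power, however, it suffices that the automaton is given as visible; the construction itself only uses that property and never decides it.
\end{itemize}

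With these points checked, both directions compose, and the statement follows.
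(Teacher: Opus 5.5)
Your proposal is correct and matches the paper's argument: the corollary follows by composing the automaton--logic translations of Sections 5 and 7 with the known equivalences of \PolPCTL and of \CTLsf (through its normal form \PolCTLs) with \FO, using \FO as the pivot. Your remarks on effectiveness, and your note that the last theorem's statement should read \PolCTLs rather than \PolPCTL, are accurate but go beyond what the paper spells out.
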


Notice that the translation between counter-free visible \PHTA and two-way linear \PHTA is not direct, 
and the structure of two equivalent instances of these automata models can be very different.
In particular, it is not clear how to move efficiently from one model to another. 
We leave for the next and final section further considerations about the results here obtained.


\section{Discussion} \label{discussion}

In this paper, we investigated the expressive power of \FO over infinite trees.
We first considered two branching-time temporal logics, already known to be
equivalent to \FO, and studied their behaviour and structural properties.
We highlighted a common \emph{polarisation property}, relating it to recent work
on Weak Chain Logic.
We also provided two automaton-based characterisations of \FO over infinite
trees: the first is based on a two-way automaton whose state space can be
partitioned into singletons, while the second relies on a one-way automaton
whose state space can be partitioned into components of arbitrary size, subject,
however, to strong expressiveness restrictions when viewed as word automata.

We conclude this work by discussing some consequences of our
results and a number of related open questions.

\paragraph{Finite trees}

It is natural to investigate the implications of our results in the setting of
finite trees.
To date, no automaton-based characterisation of \FO over finite trees is known.
To be precise, in his Ph.D. thesis~\cite[Theorem 2.5.7]{Boj04},
Boja\'nczyk showed that \FO is equivalent to the \emph{cascade product of
aperiodic wordsum automata} over finite binary trees. This characterisation
implies that every deterministic automaton recognising an \FO-definable language
must admit such a decomposition.
However, no general \emph{decomposition theorem} for deterministic tree automata
is known, in contrast with the Krohn-Rhodes theorem for word
automata~\cite{krohn1965algebraic}.
As a consequence, given a deterministic tree automaton, there is currently no
effective way to determine whether it admits such a decomposition and, hence,
whether it defines an \FO language.
Let us recall that, over finite trees, deterministic bottom-up tree automata
(\ie, automata scanning the tree from the leaves to the root) recognise all
regular tree languages, unlike the infinite-tree setting, where no class of
deterministic automata captures all regular languages.
Our characterisations rely, indeed, on \emph{alternating} automata over
\emph{unranked} trees, suggesting that variants of our constructions could be
considered over finite trees with arbitrary branching.
However, several obstacles immediately arise.
Most notably, the acceptance condition becomes problematic: in the alternating
setting over finite trees, one may define acceptance in terms of runs whose
branches end in a trivial condition $\top$, making the polarised acceptance
condition vacuous.
For this and related reasons, it is unclear how to obtain an alternating
automata characterisation of \FO over finite trees, even though it seems
reasonable to believe that such a characterisation would be just an adaptation
of the ones presented here.
Nevertheless, the central challenge in the finite-tree setting is to obtain a
characterisation in terms of \emph{deterministic} automata.
Such a result would have major consequences, as it would resolve the
long-standing problem of deciding whether a regular tree language definable in
\MSO is already definable in \FO.

\paragraph{Simulation of the two classes of automata}

In Corollary~\ref{automata equivalence}, we established the equivalence between
counter-free distinct \PHTA and two-way linear \PHTA via a logical detour.
A natural question is whether a direct translation between these two classes
exists.
This appears to be highly non-trivial: the structural restrictions imposed on
the two models are fundamentally different, and there is no obvious way to
transform, for instance, a two-way automaton with singleton components into a
one-way automaton whose components correspond to counter-free word automata.
A direct construction, although likely to be technically involved, could provide
further insight into the structural constraints underlying \FO over infinite
trees.




\paragraph{The expressive power of \PCTL}

We have already mentioned that the authors of \cite{moller2003counting} proved
\CTLs equivalent to \MPL.
In contrast, this work focuses on Weak \MPL, namely \FO, which was shown by
Schlingloff to be equivalent to \PolPCTL.
We proved that \PolPCTL is strictly less expressive than \PCTL
(Theorem~\ref{polctl expressiveness}).
Since \PCTL is itself strictly less expressive than \CTLs (and hence \MPL), it
follows that \PCTL lies strictly in between \FO and \MPL.
Identifying the fragment of \MSO (or, rather, of \WMSO) corresponding to \PCTL thus emerges as a
natural and interesting problem, as it would yield a fragment strictly
intermediate between \FO and \MPL in expressive power.




\bibliographystyle{alphaurl}
\bibliography{references,ReferencesFM}

\end{document}